\newcolumntype{C}[1]{>{\centering\let\newline\\\arraybackslash\hspace{0pt}}m{#1}}
\newcommand{\cA}{\mathcal{A}}
\DeclareMathOperator{\polylog}{polylog}
\DeclareMathOperator{\poly}{poly}
\theoremstyle{plain}
\newtheorem{theorem}{Theorem}[section]
\newtheorem{lemma}[theorem]{Lemma}
\newtheorem{corollary}[theorem]{Corollary}
\newtheorem{definition}{Definition}
\newtheorem{strategy}[theorem]{Strategy}
\newcommand{\ID}{\mathrm{id}}
\newcommand{\Flip}{\mathrm{Flip}}
\newcommand{\Prefix}{\mathrm{Prefix}}
\newcommand{\Suffix}{\mathrm{Suffix}}
\newcommand{\Target}{\mathrm{Target}}
\newcommand{\Source}{\mathrm{Source}}
\newcommand{\Decode}{\mathrm{Decode}}
\newcommand{\Send}{\mathsf{Send}}
\newcommand{\loc}{\mathsf{Pos}}
\newcommand{\LDCDecode}{\mathrm{LDCDecode}}
\newcommand{\LDCDecodeIndices}{\mathrm{DecodeIndices}}
\newcommand{\rk}{\mathrm{rank}}
\newcommand{\Minput}{M_{\mathrm{in}}}
\newcommand{\Moutput}{M_{\mathrm{out}}}
\newcommand{\Mbad}{M_{\mathrm{bad}}}
\DeclareMathOperator{\Hamm}{Hamm}
\newcommand{\abs}[1]{\left|#1\right|}
\newcommand{\sett}[2]{\left\{ #1 \left| \; \vphantom{#1 #2} \right. #2  \right\}} 
\newcommand{\stt}{\medspace | \medspace}
\newcommand{\outload}{\mathsf{OutLoad}}
\newcommand{\inload}{\mathsf{InLoad}}
\newcommand{\OUT}{\mathsf{OUTind}}
\newcommand{\IN}{\mathsf{INind}}
\newcommand{\ExtMatchingTransmission}{\mathsf{SMRoutingProtocol}}
\newcommand{\SuperMessagesRouting}{\mathsf{SuperMessagesRouting}}
\newcommand{\StaticAllToAll}{\mathsf{NonAdaptiveAlltoAll}}
\newcommand{\AdaptativeAllToAll}{\mathsf{AdaptiveAlltoAll}}
\newcommand{\AllToAllComm}{\mathsf{AllToAllComm}}
\newcommand{\Recover}{\mathrm{Recover}}
\newcommand{\Add}{\mathrm{Add}}
\newcommand{\LDC}{\mathrm{LDC}}
\newcommand{\Sk}{\mathrm{Sk}}
\begin{document}
		\title{All-to-All Communication with Mobile Edge Adversary: \\Almost Linearly More Faults, For Free}

\date{}
\author{
Orr Fischer \\
	\small Bar-Ilan University \\
	\small fischeo@biu.ac.il \\
	\and
	Merav Parter \\ 
	\small Weizmann Institute of Science \\
	\small merav.parter@weizmann.ac.il	\thanks{This project is funded by the European Research Council (ERC) under the European Union’s Horizon 2020 research and
innovation programme (grant agreement No. 949083), and by the Israeli Science Foundation (ISF), grant No. 2084/18, 1042/22 and 800/22.}			
}
    
    \maketitle
    \begin{abstract}
Resilient computation in all-to-all-communication models has attracted tremendous attention over the years. Most of these works assume the classical faulty model which restricts the total number of corrupted edges (or vertices) by some integer fault parameter $f$. A recent work by [Bodwin, Haeupler and Parter, SODA 2024] introduced a stronger notion of fault-tolerance, in the context of graph sparsification,  which restricts the degree of the failing edge set $F$, rather than its cardinality. For a subset of faulty
edges $F$, the faulty-degree $\deg(F)$ is the largest number of faults in $F$ incident to any given node.

In this work, we study the \emph{communication} aspects of this faulty model which allows us to handle almost linearly more edge faults (possibly quadratic), with no extra cost. Our end results are general compilers that take any Congested Clique algorithm and simulate it, in a round by round manner, in the presence of a $\alpha$-Byzantine mobile adversary that controls a $\alpha$-fraction of the edges incident to each node in the fully connected network. For every round $i$, the mobile adversary is allowed to select a distinct set of corrupted edges $F_i$ under the restriction that $\deg(F_i)\leq \alpha n$. In the non-adaptive setting, the $F_i$ sets are selected at the beginning of the simulation, while in the adaptive setting, these edges can be chosen based on the entire history of the protocol up to round $i$. Our main results are:

\begin{itemize}
\item{\textbf{Non-Adaptive Adversary:}} For a sufficiently small constant $\alpha \in (0,1)$, we provide a $O(1)$-round simulation of 
any (fault-free) Congested Clique round in the presence of a \emph{non-adaptive} $\alpha$-adversary. Hence, supporting a total of $\Omega(n^2)$ edge corruptions. 

\item {\textbf{Adaptive Adversary:}} Our key result is a randomized $O(1)$-round simulation of a Congested Clique round in the presence of an \emph{adaptive} $\alpha$-adversary for $\alpha=1/n^{o(1)}$ (hence, a total of $n^{2-o(1)}$ edge corruptions).
Our value of $\alpha$ critically depends on the state-of-the-art query complexity of locally decodable codes with constant rate and distance by [Kopparty, Meir, Ron{-}Zewi and Saraf, J. ACM 2017]. This improves considerable over the prior work of [Fischer and Parter, PODC 2023] that provided $\poly\log n$-round simulation of a Congested Clique round in the presence of $\Theta(n)$ mobile edge corruptions. We note that their algorithm cannot support even $\alpha=1/n$.

\item{\textbf{Deterministic Compilers:}} Finally, we provide deterministic simulations against adaptive $\alpha$-adversaries that run in $O(1)$ rounds (resp., $O(\log n)$ rounds) for $\alpha=O(1/\sqrt{n})$ (resp., $\alpha=\Theta(1)$).

\end{itemize}
A key component of our algorithms is a new resilient routing scheme which may be of independent interest. Our approach is based on a combination of techniques, including error-correcting-code, locally decodable codes, cover-free families, and sparse recovery sketches. 
\end{abstract}
		
    \tableofcontents
    \newpage
    \section{Introduction}
Resilient computation in all-to-all communication models has attracted significant attention since the early 80's due to the growing need for fault-tolerant distributed systems in large-scale and high-performance computing environments.  Among all types of faults, Byzantine faults model a worst-case fault scenario where faulty components are controlled by an all-powerful adversary that can behave arbitrarily (even maliciously) by either stopping, rerouting, or altering transmitted messages in a way most detrimental to the communication process. So-far, most of the prior work focused on the classical (static) Byzantine setting, where for a given fault parameter $f$, the adversary is assumed to control a fixed subset of at most $f$ edges (or nodes) in a fully connected network on $n$ nodes. Within this setting, time and communication efficient algorithms have been devised for various tasks in complete network topologies, supporting up to $f=\Theta(n)$ edge and node corruptions. Examples include: broadcast and consensus \cite{PSL80,D82,DFFLS82,DFFLS82,F83,BT85,B87,TKS87,SW89,BGP89,SW90,CW92,BG93,FM97,GM98,FM20,Ko04,PP05,KK06,DH08,MT12,IR15,CHMOS19,KNV19}, gossiping \cite{BP93,BH94,CT17}, other reliable transmission tasks \cite{DPPU88,U94,JRV20,BM24,CGO15,DDWY93,KS09,ACH06}, and additional tasks \cite{AMPV22,MR23}.

In their seminal work \cite{OstrovskyY91}, Ostrovsky and Yung introduced the study of distributed computation in the presence of mobile Byzantine adversaries which can dynamically appear throughout the network, analogous to a spread of a virus. Specifically, in the mobile setting for a given fault parameter $f$, in every round $i$ of the protocol, the mobile adversary is allowed to control a possibly distinct subset $F_i$ of at most $f$ edges (or nodes). Mobile Byzantine (node) faults have been also addressed by Garay \cite{G94} in the context of the byzantine agreement problem. Tight bounds for this problem, in terms of the allowed number of faults per round, have been provided by Bonnet et al. \cite{BDNP16}. See Yung \cite{Y15} for an overview on mobile adversaries. Our work is mostly related to the recent work by Fischer and Parter \cite{FP23} who presented a simulation of a (fault-free) Congested Clique round by $\polylog n$ rounds in the mobile Byzantine \emph{edge} adversary with $f=\widetilde{\Theta}(n)$. 

\smallskip
\noindent \textbf{New: Mobile Byzantine Adversary with Bounded Faulty Degree.}  We study a considerably stronger notion of a (mobile) Byzantine edge adversary
-- compared, for instance, to the recent work of \cite{FP23} -- where the constraint is placed on the \emph{degree} of the faulty edge set rather than its size.
For $\alpha \in (0,1)$, an $\alpha$-Byzantine Degree (BD) adversary manipulates a subset of edges $F_i$ in each round $i$ ensuring that no node has more than an $\alpha$-fraction of its incident edges in $F_i$. This model can be viewed as a mobile generalization of the bounded degree fault-tolerant model by Pelc and Peleg \cite{PP05}. Our work is in particular inspired by the recent work of Bodwin, Haeupler and Parter \cite{BHP24} that studied bounded degree faults in the context of fault-tolerant (FT) graph sparsification. Their key results provide new graph structures with size comparable to previous constructions (in the classical FT model), but which tolerate every fault edge set $F$ of small faulty degree $\deg(F)$, rather than only fault edge sets of small size $|F|$. This effectively enables resilience to a linear factor of more faults at little to no additional cost.

Our goal is to achieve a similar impact in the context of resilient distributed computation where the key complexity measure is the number of communication rounds of protocols (rather than the size of a graph structure, as in \cite{BHP24}). As we will see (and also observed in the work of \cite{BHP24}), this setting calls for a new approach compared to prior works where the total number of corrupted edges was limited to be at most linear. Our communication model is the classical Congested Clique, introduced by Lotker et al. [22]. In this model, the $n$ nodes are fully connected and communication occurs in synchronous rounds. In every round, each node can send $B$ (possibly distinct) bits to each of the nodes in the network; typically $B=O(\log n)$. We distinguish between two variants of $\alpha$-BD adversaries. The \emph{non-adaptive} adversary, denoted by $\alpha$-NBD, must select the $F_i$ edge sets (for every round $i$) at the beginning of the simulation. In contrast, the adaptive adversary, 
denoted by $\alpha$-ABD, is allowed to choose the identity of the $F_i$ edges at the beginning of each round $i$, based on the entire history of the protocol up to round $i$.
Our starting point is the prior work of Fischer and Parter \cite{FP23} that provides general compilers for Congested Clique algorithms against an adaptive adversary that controls $\widetilde{\Theta}(n)$ edges in each round with a $\polylog n$-round overhead. A critical component in their algorithm is an upcast procedure over $\Theta(n)$ edge disjoint trees. This approach totally breaks in the bounded degree setting, as a faulty set of edges forming a matching (i.e., $\alpha=1/n$) can destroy the entire collection of their edge disjoint trees.  Moreover, their round complexity is inherently poly-logarithmic while we strive for supporting $\alpha=\Theta(1)$ with a constant round complexity. 

\noindent \textbf{Additional Related Work.}
Several prior works consider adversarial models where the adversary controls an $\alpha$-fraction of the edges, for some corruption parameter $\alpha > 0$ \cite{BM24,CGO15}. In such models, the reliable network may become disconnected, hence the tasks considered deal with outputs of all but some fraction of the nodes, e.g. \emph{almost-everywhere} agreement. Both works perform these tasks in specifically crafted networks, with applications to the full clique, and are part of a larger line of works relating to \emph{almost-everywhere} reliable transmission \cite{DPPU88,U94,JRV20,BM24,CGO15}. 
Hoza and Schulman \cite{HS16,S92} showed that in a setting where the adversary can corrupt an $\alpha$-fraction of the communication bits sent across a network, any protocol can be simulated resiliently as long as $\alpha = O(1/n)$. Romashchenko \cite{R06} showed that in some shared memory model, a set of $2^{O(n)}$ processors, where at most a constant fraction of them are corrupted in an adversarial manner, can efficiently simulate the computation of $n$ non-faulty processors, via the use of locally decodable codes. Another extensively studied topic is resilient computation in a \emph{stochastic} adversarial model \cite{DP90,P92,RS94,ABEGH19, GMS11,GMS14}, where e.g. messages are randomly corrupted independently with some probability $p$.

    \smallskip
\noindent\textbf{Our Results.}
All our algorithms are implemented in the $B$-Congested Clique model where $B$ is the allowed message size to be sent on a given edge in each round. 
Most of our results hold already for the more challenging setting $B=1$ (hence, immediately also hold for $B=O(\log n)$).
The local computation time of the nodes is \emph{polynomial}\footnote{While the classical model do not account for the local computation time, we make extra efforts in several of our algorithms, to keep the local computation polynomial in $n$.}. 

\smallskip
\noindent\emph{Adversarial Routing Lemma.} Our Congested Clique compilers are heavily based on a new routing scheme, which 
we treat as the resilient analogue for the well-known Lenzen's routing \cite{L13}, adopted to the adversarial $\alpha$-BD setting.  We believe that this may be of independent interest.

Our routing scheme formulates the conditions under which the routing can be done in $O(1)$ rounds in the presence of $\alpha$-ABD adversary. Specifically, we show that can be done provided  that each node holds $O(1/\alpha)$ messages, denoted as super-messages\footnote{We call such a message a super-message to distinguish from a $B$-bit message that can be sent over an edge in a single Congested Clique round.}, each of $O(\alpha n)$ bits; and each node is a target of $O(1/\alpha)$ super-messages. A given super-message is allowed to have multiple targets. 
By using the tools of Error Correcting Codes and specialized families of cover free sets (that we design for this setting), we show the following:
\begin{theorem}[Informal]
    \label{thm:informal_routing}
    Let $c \in (0,1)$ be a small constant. For any $\alpha \leq c$, there is a deterministic $O(1)$-round algorithm for solving the following routing instances in the presence of $\alpha$-ABD adversary: Each node is the source and target of at most $k = O(\min(1/\alpha,n/\log n))$ super-messages, each super-message is of size $O(n/k)$ bits (allowing messages to have multiple targets). We assume that the target set of each of the $kn$ super-messages is known to all the nodes.
\end{theorem}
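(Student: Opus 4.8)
The plan is to route each super-message redundantly through a carefully chosen set of relay nodes, using an error-correcting code to absorb the corruptions the bounded-degree adversary can inject and a combinatorial (cover-free) design of the relay sets to keep per-edge congestion at $O(1)$. Concretely, fix an efficiently encodable and decodable binary code of constant rate $\ge \rho$ and constant relative distance $\ge \delta$ (e.g.\ an expander code), with $\delta > 8\alpha k$; this is possible because $\alpha k = O(1)$ can be driven below any constant using the bound $k = O(1/\alpha)$, while the bound $k = O(n/\log n)$ guarantees the codeword length $L := n/k = \Omega(\log n)$. Pad each super-message to $\rho L$ bits and let $\widehat M \in \{0,1\}^L$ be its codeword; since $L \le n$, the $L$ symbols of $\widehat M$ can be spread one per relay across $L$ distinct nodes. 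I would assign to each super-message $M$ a relay set $B_M \subseteq [n]$ with $|B_M| = L$, defined by a deterministic function of the publicly known source/target structure alone, so that every node can reconstruct every $B_M$. The protocol uses two transmission rounds: first, the source of $M$ sends to each $w \in B_M$ the codeword symbol associated with $w$; second, each relay $w \in B_M$ forwards that symbol to every target of $M$. Finally each target collects, for each super-message it should receive, the (possibly corrupted) length-$L$ vector delivered by the relays and runs the decoder.

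To argue correctness, I would bound the number of corrupted symbols of any fixed codeword by $2\alpha n$. The symbol through relay $w$ traverses only two edges — source-to-$w$ in round one, $w$-to-target in round two — and by the $\alpha$-ABD constraint at most $\alpha n$ edges incident to the source are faulty in round one and at most $\alpha n$ incident to a fixed target are faulty in round two; a union bound gives at most $2\alpha n$ bad symbols out of $L = n/k$, i.e.\ a corrupted fraction at most $2\alpha k < \delta/4$, comfortably inside the unique-decoding radius. Adaptivity gains the adversary nothing in this count, since the per-round, per-node degree budget caps its reach irrespective of what it observes.

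The delicate part is congestion: each edge must carry only $O(1)$ bits in each round. On the sending side this is immediate provided that, for each source, the relay sets of its (at most $k$) super-messages form a near-partition of $[n]$ into blocks of size $L = n/k$. The obstruction is the receiving side: the (at most $k$) super-messages destined for a fixed target originate at distinct sources, which chose their blocks independently, so I must guarantee that no relay node lies in more than $O(1)$ of these relay sets — and, for multi-target super-messages, that $\sum_{M : w \in B_M} |T_M| = O(n)$ at every relay $w$, where $T_M$ is the target set of $M$. A naive scheme of "$k$ fixed blocks permuted per source'' fails here, since several sources may send super-messages to a common target through the very same block, forcing congestion $\Theta(k)$ on that target's edges.

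The main obstacle, therefore, is the deterministic construction of the relay-set family with the simultaneous source-side and target-side load-balance guarantees (and their multi-target refinement). A uniformly random choice of size-$L$ subsets would satisfy all of these with high probability, and my plan is to derandomize it using cover-free families: a family in which no set is covered by the union of a bounded number of others yields precisely the bounded-overlap property that controls congestion. Because the code already tolerates a constant fraction of errors, only an \emph{approximate} spreading property is needed, which loosens the design constraints enough to be realized by explicit cover-free families of the right parameters. Putting the pieces together — choosing the hidden constant in $k$ small relative to $\delta$, verifying that the design meets the source-side and target-side bounds in the multi-target setting, and checking that all the local computations (encoding, reconstructing the $B_M$'s, decoding) run in polynomial time — completes the argument.
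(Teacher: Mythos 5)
Your high-level plan matches the paper's: encode each super-message with a constant-rate, constant-distance code, spread the codeword one symbol per relay across a size-$L$ relay set, forward from relays to targets, decode, and control congestion via a deterministic cover-free design constructed from the publicly known source/target structure. However, there is a concrete gap in how you resolve congestion, and it propagates into your error budget.

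You state the design requirement as ``no relay node lies in more than $O(1)$ of these relay sets.'' This is a per-element bounded-multiplicity constraint that $(r,\delta)$-cover-free families do \emph{not} deliver: cover-freeness only guarantees that, for any distinguished set in the family, at most a $\delta$-fraction of its elements are covered by the union of $r$ other sets. Individual elements may still lie in many of those sets. So if you insist on genuine $O(1)$ per-relay multiplicity, cover-free families are the wrong tool, and you would need a much stronger combinatorial object whose existence at these parameters is unclear. The paper sidesteps this: the protocol simply \emph{refuses to transmit} whenever a relay $w$ would be congested --- in round one, $u$ sends to $w$ only if $w$ appears in exactly one of $u$'s $k$ relay sets ($\inload(u,w)=1$); in round two, $w$ forwards to a target $v$ only if exactly one super-message routed through $w$ is destined for $v$ ($\outload(w,v)=1$). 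This makes per-edge congestion exactly $1$ by fiat. The cover-free property is then used to bound the fraction of positions dropped this way: because the $k$ input-index tuples of each source, and the $k$ output-index tuples of each target, are members of the constraint family $H$, at most a $\delta$-fraction of each relay set is dropped in each of the two rounds. Consequently the decoder faces a corrupted fraction of at most $2\delta + \Theta(\alpha k/\delta)$, not the $2\alpha k$ in your sketch; your budget omits the $2\delta$ term coming from the deliberate drops, which is the dominant one once you commit to cover-free rather than near-disjoint relay sets. One further point: the paper observes that explicit $(r,\delta)$-cover-free families do not attain the needed parameters, and therefore defines a weaker variant, $(r,\delta)$-cover-free \emph{with respect to} the specific family $H$ of the $2kn$ input/output index tuples, which it then constructs deterministically via LLL derandomization. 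Your plan to derandomize a random construction is the right instinct, but the object to aim for is this $H$-restricted family, not a full cover-free family.
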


In Section \ref{sec:overview} we provide a comparison to the fault-free routing schemes in the Congested Clique model by \cite{L13} and \cite{CFG20}. 

\smallskip
\noindent \emph{Single Round Simulation via All-to-All Communication.}
Our goal is to ``correctly'' simulate a general protocol in the (fault-free) Congested Clique model. For this purpose, it suffices to show how to simulate a single Congested Clique round in our settings. We note that we can simulate both deterministic and randomized protocols of the non-faulty Congested Clique in this manner: if we wish to simulate a randomized protocol $\mathcal{A}$ in the non-faulty Congested Clique, one can fix the randomness $R_{\mathrm{\mathcal{A}}}$ used by $\mathcal{A}$, making $\mathcal{A}$ ``deterministic'' for the purpose of the simulation (we however, do not fix the randomness used for the simulation itself). The single-round simulation task boils that into the following routing problem:

\begin{definition}
    In the $\AllToAllComm$ problem, each node $u$ is given as input $n$ messages $\{m_{u,v}\}_{v \in V}$ each of size $B$ bits, and our goal is for every node $v$ to learn all messages $\{m_{u,v}\}_{u \in V}$.
\end{definition}

Our key contribution is in providing time-efficient resilient algorithms for the $\AllToAllComm$ problem with input $B$-bit messages in the $B$-Congested Clique model. This immediately yields general compilers. Specifically, an $r$-round Congested Clique algorithm for the $\AllToAllComm$ problem provides a compiler for simulating any fault-free $r'$-round Congested Clique algorithm in the $\alpha$-BD setting in $O(r' \cdot r)$ Congested Clique rounds. Our randomized algorithms for the $\AllToAllComm$ problem guarantee that each node $v$ learns its messages $\{m_{u,v}\}_{v \in V}$ with high probability of $1-1/n^c$, for any desired constant $c\geq 1$.

We first consider the simpler non-adaptive setting, for which we obtain $O(1)$-round solution for the $\AllToAllComm$ problem which can tolerate any sufficiently small constant $\alpha=\Theta(1)$, i.e. it is essentially optimal in both round complexity and resilience to faults.

\begin{theorem}
\label{thm:informal_non_adaptive}[Randomized Compilers against Non-Adaptive Adversary]
For a sufficiently small $\alpha \in (0,1)$ and assuming $B=\Theta(\log n)$, there is 
a randomized $O(1)$-round algorithm for solving the $\AllToAllComm$ problem (i.e., single-round simulation) in the presence of $\alpha$-NBD adversary. 
\end{theorem}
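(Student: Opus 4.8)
The plan is to view $\AllToAllComm$ as a \emph{fault-tolerant matrix transpose}: writing $M[u][v]=m_{u,v}$, node $u$ holds row $u$ and must learn its column $\vec x_v=(m_{1,v},\dots,m_{n,v})$. Fault-free this is the trivial one-round protocol in which $u$ sends $m_{u,v}$ to $v$ over the edge $(u,v)$, and since with $B=\Theta(\log n)$ each node can push and absorb exactly $\Theta(n)$ words per round, there is essentially no room for more than a constant factor of redundancy. I would therefore start from that trivial protocol: in round one every $u$ sends $m_{u,v}$ to $v$. As $v$ has at most $\alpha n$ corrupted incident edges, it ends up with $\tilde{\vec x}_v=\vec x_v+\vec e_v$ where $\|\vec e_v\|_0\le\alpha n$, but it does not know which coordinates are wrong, and the bad data cannot simply be re-fetched over the same adversarially chosen edges.

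To repair $\tilde{\vec x}_v$ I would have each $v$ additionally obtain a small amount of \emph{robust redundancy} about its true column, namely a sketch $A\vec x_v$ for a fixed public measurement matrix $A$ with $s=\Theta(\alpha n)$ rows in which each row (``bucket'') involves only $O(1/\alpha)$ coordinates and each coordinate lands in $O(1)$ buckets (a Count--Min / expander-type structure), with $A$ chosen so that an $\alpha n$-sparse vector can be recovered from $A(\cdot)$ even after a small constant fraction of the $s$ measurements has been corrupted (a robust sparse-recovery scheme, or equivalently a sparse parity-check code whose decoder tolerates noisy syndromes). Given $A\vec x_v$, node $v$ locally forms $A\tilde{\vec x}_v-A\vec x_v=A\vec e_v$, runs the decoder to get $\vec e_v$, and outputs $\vec x_v=\tilde{\vec x}_v-\vec e_v$. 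The crucial feature of the sparsity of $A$ is that bucket $j$ of $A\vec x_v$ equals $\sum_{u\in N(j)}\pm m_{u,v}$ and hence can be assembled by a single relay that has gathered only the $O(1/\alpha)$ words $\{m_{u,v}:u\in N(j)\}$.

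Delivering the sketches is a second, randomized two-hop phase. Draw (shared) random permutations $\pi,\tau$ of $[n]$ and route bucket $j$ of column $v$ through relay $r_{j,v}=\pi(j)+\tau(v)\bmod n$: in hop one each source $u$ ships $m_{u,v}$ to the relays of the $O(1)$ buckets it participates in, for every $v$; in hop two each relay computes its $\Theta(\alpha)$ bucket values per column and forwards them to the corresponding targets. The Latin-square form of $(j,v)\mapsto r_{j,v}$ makes every hop perfectly load-balanced ($O(1)$ words per edge, hence $O(1)$ rounds overall), and — this is where \emph{non-adaptivity} enters — the adversary fixes its edge sets before $\pi,\tau$ are sampled, so for each fixed column $v$ the event ``bucket $j$ is corrupted in transit'' has probability $O(\alpha)$ and is essentially independent across the $s=\Theta(\alpha n)$ buckets; a Chernoff/McDiarmid argument then shows at most a small constant fraction of $v$'s buckets are corrupted except with probability $e^{-\Theta(\alpha n)}\le n^{-c}$. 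Combined with the robust decoder above and a union bound over $v$, this yields the stated guarantee. If one prefers not to assume shared coins, the seed for $\pi,\tau$ is only $\poly\log n$ bits and can be disseminated robustly up front, e.g.\ via the routing primitive of Theorem~\ref{thm:informal_routing}.

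The step I expect to be the real obstacle is engineering the redundancy so that it is simultaneously (i) computable without first solving the transpose, which forces the measurement/parity structure to be \emph{sparse on the source side} so no relay ever has to gather more than $O(1/\alpha)$ words; (ii) small enough — $O(1)$ extra words per target — to stay within $O(1)$ rounds; and (iii) \emph{robust} enough that recovery still succeeds after a constant fraction of the $\Theta(\alpha n)$ measurements of each column is corrupted. Note that one cannot simply deliver ``a good codeword of $\vec x_v$'', since a constant-rate code with a sparse generator has only $O(1)$ minimum distance; this is exactly why the redundancy must be of sparse-recovery / noisy-syndrome type and why its robustness (rather than mere decodability) has to be proved. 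A secondary technicality is making the ``probability $O(\alpha)$, essentially independent'' claim for in-transit corruptions precise given that $\pi,\tau$ are permutations rather than fully independent hashes — bounded differences over transpositions, or switching to a sufficiently random hash family, handles this.
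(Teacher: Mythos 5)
Your proposal is a genuinely different route from the paper's. The paper works per message: each $m_{u,v}$ is encoded by a constant-rate/distance code (\Cref{lem:justesen_code}) into a $B$-bit codeword, the $B$ code bits are scattered in one fresh round to $B$ independently random intermediaries $p_1(v),\dots,p_B(v)$, and those intermediaries return their bits to $v$ via the deterministic resilient routing of \Cref{thm:generalized_routing_main}, which incurs no further corruption. Each code bit is therefore corrupted independently with probability at most $\alpha$, and a Chernoff bound over the $B=\Theta(\log n)$ bits plus a union bound over $(u,v)$ gives $\Hamm(\widetilde C(m_{u,v}),C(m_{u,v}))\leq 2\alpha B<\delta_C B/2$ w.h.p. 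You instead send all messages in the clear, then deliver a $\Theta(\alpha n)$-measurement sparse-recovery sketch of each column through random relays, and repair. The conceptual difference that matters is that the paper never aggregates $\omega(1)$ raw values at a single relay, whereas your sketch buckets have fan-in $\Theta(1/\alpha)$, and that is precisely where your argument breaks.

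Concretely, the claim that ``bucket $j$ is corrupted in transit'' has probability $O(\alpha)$ is off by a $1/\alpha$ factor. Bucket $(j,v)$ is already corrupted in hop one if any of the $|N(j)|=\Theta(1/\alpha)$ edges $(u,r_{j,v})$ with $u\in N(j)$ is corrupted; each such edge is corrupted with probability at most $\alpha$ over the random choice of $r_{j,v}$, so the union bound gives $\Theta(\alpha\cdot|N(j)|)=\Theta(1)$, not $O(\alpha)$. Equivalently, each corrupted edge $(a,b)\in F$ damages the unique bucket $j=\pi^{-1}(b-\tau(v))$ of column $v$ whenever $a\in N(j)$, an event of probability $\Theta(d/n)$ over $\pi,\tau$; summing over $|F|=\Theta(\alpha n^2)$ edges gives $\Theta(\alpha n d)=\Theta(s)$ expected corrupted buckets of column $v$ --- a constant fraction of $s$, not an $O(\alpha)$ fraction. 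The stated Chernoff/McDiarmid step and the $e^{-\Theta(\alpha n)}\leq n^{-c}$ failure probability do not follow from the stated premise: what you actually need is a concentration argument around a $\Theta(1)$-fraction mean showing it stays below the robustness radius of the decoder, with explicit constants (shrink $|N(j)|$ by a large constant factor, pay for it in $s$, enlarge the decoder's tolerance). That is a different, more delicate calculation than the one you wrote, and it couples directly with the second gap.

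The second gap is the one you flag yourself, and it is genuine: a measurement matrix $A$ with $\Theta(\alpha n)$ rows, row weight $O(1/\alpha)$, column weight $O(1)$, over a $\poly(n)$-size alphabet, that decodes an $\alpha n$-sparse error even after an adversary corrupts a constant fraction of the $s$ measurements is essentially a noisy-syndrome expander decoder, and it is not an off-the-shelf citation; moreover its tolerance constant must beat the $\Theta(1)$ per-bucket corruption rate from the previous paragraph. The paper avoids building any such object. Working at the message level keeps every codeword $B=\Theta(\log n)$ bits, so a standard constant-rate binary code suffices, and all heavy routing is delegated to the cover-free-set-based procedure of \Cref{sec:key_routing}, which is proved resilient once and for all. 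Your closing observation --- that a constant-rate code with a sparse generator has only $O(1)$ minimum distance, so plain coding of an entire column is hopeless --- is correct; the paper sidesteps that wall exactly by never trying to protect a whole $n$-word column with a single code.
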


Our main efforts are devoted to the adaptive setting, for which the approach taken in the algorithm of \Cref{thm:informal_non_adaptive} totally breaks. We combine the tools of Locally Decodable Codes (LDCs) and Sparse Recovery Sketches to obtain the following: 

\begin{theorem}
\label{thm:informal_adaptive_randomized}[Randomized Compilers against Adaptive Adversary]
There is a randomized $O(1)$-round algorithm for solving the $\AllToAllComm$ problem in the presence of an $\alpha$-ABD adversary with
$\alpha=\exp(-\sqrt{\log{n}\log\log{n}})$ and bandwidth $B =1$.
\end{theorem}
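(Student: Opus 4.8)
The plan is to reduce a single simulated round to the $\AllToAllComm$ problem and to solve the latter by combining redundant encoding with $O(1)$ invocations of the adversarial routing scheme of \Cref{thm:informal_routing}. Write the $n^2$ input bits as a matrix $M$ with $M[w][v]=m_{w,v}$, so node $w$ holds row $w$ and must deliver column $c_v=(m_{w,v})_w$ to each node $v$; at heart this is a matrix transpose. The routing scheme of \Cref{thm:informal_routing} delivers super-messages \emph{reliably} even against an adaptive $\alpha$-BD adversary, but it cannot be applied to the transpose directly: a super-message is delivered in full to its whole target set, so casting ``deliver $M[w][v]$ to $v$'' as routing makes every node the target of $\Theta(n)$ super-messages, violating the $O(1/\alpha)$ load bound. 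Hence one must move \emph{encoded} data, and the encoding is squeezed from two sides. On one hand, a constant-rate, constant-distance error-correcting code tolerates the adversary once a receiver holds an entire (length-$\Theta(n)$) codeword spread over $\Theta(n)$ edges at $O(1)$ bits each, since then only an $O(\alpha)$ fraction is corrupted. On the other hand, \emph{producing} even one symbol of such a codeword of a column needs, for a generic linear code, contributions from $\Omega(n)$ rows, and a code whose generator is sparse enough to make that aggregation cheap necessarily has vanishing relative distance. Locally decodable codes are what resolve this tension: they keep constant rate and distance while making a single target bit recoverable from only $q$ codeword symbols.

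Concretely, each node $w$ encodes its row with the constant-rate, constant-distance LDC of Kopparty, Meir, Ron-Zewi and Saraf, whose query complexity is $q=\exp(\Theta(\sqrt{\log n\log\log n}))$; the codeword $y_w$ has length $\Theta(n)$ and its symbols are disseminated so that every node stores $O(1)$ symbols of every $y_w$. This dissemination is a well-shaped routing instance ($O(1/\alpha)$ super-messages per node of $O(\alpha n)$ bits each, grouping symbols by destination block), so by \Cref{thm:informal_routing} it is carried out \emph{reliably} and every stored symbol is correct. To recover column $c_v$, node $v$ must obtain $m_{w,v}=x_w[v]$ for every $w$, which by local decodability is a fixed linear combination of the $q$ symbols $\{y_w[j]:j\in Q_v\}$. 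Node $v$ does not fetch these itself: the $q$ symbols are gathered and combined into the single output bit by a designated aggregator that then forwards only that bit to $v$. Organizing these fetch-and-combine tasks into $O(1)$ further routing-scheme instances (each node receiving $O(n)$ one-bit results per instance) keeps the round complexity constant. Randomizing the query positions $Q_v$ and the symbol-to-node layout is what lets us charge the adversary's $\alpha n$ corrupted edges per node against a $\Theta(q)$-fold query multiplicity, so that in expectation each output bit is assembled from at most $\alpha q$ corrupted symbols --- below the LDC's error tolerance exactly when $\alpha=O(1/q)=\exp(-\Theta(\sqrt{\log n\log\log n}))$, which is the source of the stated bound on $\alpha$.

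An adaptive adversary can still concentrate its per-node budget to spoil a small set of a node's output bits; a counting argument bounds this set by $O(\alpha n)$ bits per node. These residual errors are repaired with sparse-recovery sketches: in parallel, each node disseminates, and each node $v$ reliably receives (after an outer error-correcting encoding of the sketch, again spread over $\Theta(n)$ edges), a linear sketch $S c_v$ of its target column of dimension $\widetilde O(\alpha n)$. Node $v$ applies $S$ to its \emph{decoded} column, subtracts, and --- since the decoded column differs from $c_v$ in only the $O(\alpha n)$ ruined coordinates --- runs sparse recovery on the difference to obtain the exact correction. Boosting the sparse-recovery (and LDC-decoding) success probabilities, or repeating the construction $O(1)$ times, yields the claimed high-probability guarantee, and the compiler follows by applying the $\AllToAllComm$ solver once per simulated round.

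The main obstacle, and the part needing the most care, is precisely defeating the \emph{adaptive} concentration of the adversary's per-node budget while staying within $O(1)$ rounds: the adversary sees the (public) code and layout and the whole history, so it can aim its $\alpha n$ corrupted edges at the symbols feeding one receiver's decodings. Controlling this rests on (i) the reliability of \Cref{thm:informal_routing} for the dissemination, aggregation, and sketch-delivery steps, so corruption enters only at the raw fetch and one-bit delivery steps; (ii) fresh randomness for the query positions, making the expected corrupted-symbol count per decoded bit $\alpha q=o(1)$ rather than the worst case $q$; and (iii) the sparse-recovery clean-up absorbing the $O(\alpha n)$ decodings that are nonetheless over-corrupted. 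Checking that these compose correctly, that the number of routing-scheme instances is $O(1)$, and that all per-node loads stay at $O(n)$ bits, is the crux of the argument.
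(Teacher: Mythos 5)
The proposal has the right ingredients (resilient routing, constant-rate constant-distance LDC, sparse-recovery clean-up, separating the raw-exchange round from the fresh correction randomness), but the way they are composed does not yield $O(1)$ rounds, and this is precisely the obstacle the paper's proof is designed to overcome.

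The central gap is that you LDC-encode each node's \emph{raw row} $M[w][\cdot]$ and then ask each target $v$ to recover all $n$ coordinates of its column via local decoding. Each decoded coordinate consumes $q$ codeword symbols, and since the decoder's query set $Q_v$ depends on $v$ but the codewords depend on $w$, the symbols needed for $v$ are $\{y_w[j] : w \in V, j \in Q_v\}$, which is $\Theta(qn)$ bits regardless of how the fetch-and-combine work is assigned to ``aggregators.'' If one aggregator collects all symbols at positions $Q_v$ (for every $w$), it receives $\Theta(qn)$ bits; if you shard by $w$ instead, the aggregator responsible for row $w$ effectively ends up re-holding the entire row and re-transmitting $m_{w,v}$ to $v$ over a single edge, bringing you back to the original corruption problem; no reshuffling makes the per-node volume drop below $\Theta(qn)$. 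This is exactly the paper's ``Take I,'' which it explicitly describes and attributes a round complexity of $\Theta(q)=n^{o(1)}$, not $O(1)$. The paper's ``Take II'' inverts the order of your two tools: first it partitions $V$ into groups $P_j$ of size $1/\alpha$, reliably concentrates each group's messages, computes $O(\log n)$-sparse recovery sketches \emph{per group per target} (so each target only needs $\widetilde O(\alpha n)$ bits rather than $n$), and only then LDC-encodes the sketches and fetches $q$ symbols per sketch bit, giving $\widetilde O(q\alpha n)=O(n)$ bits per node with $\alpha=\Theta(1/(q\,\mathrm{polylog}\,n))$. Your sketch is used as a clean-up of residual errors \emph{after} decoding, which does not repair the round-complexity problem of the decoding step itself.

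There is a second gap in the clean-up as you describe it: you need each $v$ to reliably receive $S c_v$, but $c_v$ is distributed across all $n$ nodes, so $S c_v$ is an aggregation $\sum_w S e_w m_{w,v}$ of $n$ contributions. The routing lemma of \Cref{thm:informal_routing} delivers super-messages; it does not sum them in flight, and materializing the sum at a single node by delivering all $n$ contributions blows up to $\widetilde\Theta(\alpha n^2)$ bits. The paper avoids this by making the groups $P_j$ of size $1/\alpha$ the unit of concentration: the routing lemma lets $P_j[i]$ reliably receive the $n$ bits of $M(P_j,S_i)$, after which the sketches $\Sk(P_j,\{v\})$ can be computed locally on exact data, with no aggregation across groups ever needed. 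Finally, the claim that the per-decode corruption is controlled by making ``in expectation $\alpha q$'' of the queried symbols corrupted conflates the LDC's error model (a global bound of $\delta_C n/2$ corruptions in the codeword, enforced here by the fact that the codeword is scattered one bit per node in a single round) with a per-query corruption model; the LDC guarantee already holds as soon as the codeword is $\delta_C/2$-close, and the bound $\alpha=O(1/q)$ actually comes from the routing budget ($\widetilde O(q\alpha n)$ bits per node) rather than from the decoder's noise tolerance.
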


Our value of $\alpha$ is determined by the query complexity of LDCs with constant rate and distance, and are in particular based on the construction of \cite{KMRS17}. Compared with the algorithm of \cite{FP23}, the algorithm of \Cref{thm:informal_adaptive_randomized} is faster, and can support $n^{2-o(1)}$ corrupted edges, compared to \cite{FP23} which can support $\widetilde{\Theta}(n)$ corrupted edges. However, it does not fully subsume it. For example, the algorithm in \cite{FP23} is resilient to a case where the adversary corrupts $\widetilde{\Theta}(n)$ edges of a single node. 

\smallskip
\noindent \emph{Deterministic Solutions.} Finally, we also provide deterministic solutions for the problem for $\alpha=\Omega(1)$ and $\alpha=1/\sqrt{n}$, respectively. We note that in the deterministic setting, adaptive and non-adaptive adversaries have the same power, hence we only consider the adaptive setting.

\begin{theorem}
\label{thm:informal_adaptive_deterministic_high_noise}
For a sufficiently small constant $\alpha=\Omega(1)$ and $B=1$, there 
is a deterministic $O(\log n)$-round algorithm for the $\AllToAllComm$ problem in the presence of $\alpha$-ABD setting.
\end{theorem}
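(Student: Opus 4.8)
The plan is to implement $\AllToAllComm$ by $O(\log n)$ layers of a radix-$k$ ``butterfly'', where each single layer is one invocation of the resilient routing primitive of \Cref{thm:informal_routing}. Fix the radix $k$ to be an integer of order $1/\alpha$ (so $k\ge 2$ once $\alpha$ is a sufficiently small constant; in fact $k=2$ already works), assume $n=k^{d}$ (otherwise pad the node set and the message matrix up to the next power of $k$, at an additive $O(1)$ cost), and fix a globally known bijection between nodes and $\{0,\dots,k-1\}^{d}$, writing a node as its digit string. For nodes $u,v$ and $0\le t\le d$, let $z_t(u,v)$ denote the node whose first $t$ digits are those of $v$ and whose last $d-t$ digits are those of $u$. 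The algorithm maintains, after its $t$-th layer, the invariant that the bit $m_{u,v}$ is stored at node $z_t(u,v)$. This holds at $t=0$ since $z_0(u,v)=u$ and each node starts with its own row $\{m_{u,v}\}_{v\in V}$, and at $t=d$ one reads $z_d(u,v)=v$, i.e.\ every node $v$ ends up holding exactly $\{m_{u,v}\}_{u\in V}$, which is the desired output.

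Layer $t+1$ must move each bit $m_{u,v}$ from $z_t(u,v)$ to $z_{t+1}(u,v)$, and these two nodes differ only in coordinate $t+1$ (it equals $u_{t+1}$ before and $v_{t+1}$ after). Hence all communication in layer $t+1$ stays within the ``dimension-$(t+1)$'' groups: the $n/k$ classes of $k$ nodes each that agree in all coordinates except the $(t+1)$-st. A direct counting argument (which I would spell out) shows that at the start of layer $t+1$ a node $z$ holds exactly $n$ bits, namely the $m_{u,v}$ with $v_{\le t}$ and $u_{>t}$ equal to $z$'s digits; that of these it must forward exactly the $n/k$ bits with $v_{t+1}=a$ to the group-member whose $(t+1)$-st digit is $a$, for each of the $k-1$ values $a$ other than $z$'s own; and that after keeping the remaining $n/k$ bits and receiving the symmetric $n/k$ bits from each of the other $k-1$ members, node $z$ holds precisely the $n$ bits prescribed by $z_{t+1}(\cdot,\cdot)$, restoring the invariant. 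Thus layer $t+1$ is exactly a routing instance in which every node is the source of $k-1$ super-messages and the target of $k-1$ super-messages, each super-message carries $n/k=\Theta(\alpha n)$ bits to a single target, and all target sets are determined by the public dimension-group structure; this matches the hypothesis of \Cref{thm:informal_routing} with its parameter of order $1/\alpha$ (if one prefers a smaller $k$, first split each $n/k$-bit super-message into $\Theta(1/(k\alpha))$ chunks of $\Theta(\alpha n)$ bits sharing the same target).

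Applying \Cref{thm:informal_routing} to each of the $d$ layers therefore solves $\AllToAllComm$ in $O(d)=O(\log_k n)=O(\log n)$ rounds, deterministically, with $B=1$, and against an adaptive $\alpha$-ABD adversary (the routing primitive already handles this adversary). The only work beyond \Cref{thm:informal_routing} is the bookkeeping just sketched -- verifying that what each node keeps/sends/receives in a layer genuinely partitions into $k-1$ equal, single-target, publicly-known super-messages and that the invariant is re-established -- together with the routine padding when $n$ is not a power of $k$; I expect this accounting, rather than any new algorithmic idea, to be the bulk of the formal proof. Local computation stays polynomial, as each layer only re-indexes $O(n)$ bits per node on top of the polynomial-time routing primitive.
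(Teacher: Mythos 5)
Your plan is essentially the paper's proof: with radix $k=2$ your butterfly invariant (the bit $m_{u,v}$ lives at the node whose first $t$ digits match $v$ and whose last $d-t$ digits match $u$) is exactly the paper's characterization $M_i(u)=M(S(u,i),P(u,i))$ in terms of prefix/suffix agreement, and each layer is implemented the same way, as one invocation of the resilient super-message routing primitive with constantly many $\Theta(n)$-bit super-messages per node and publicly known targets. The only caveat is your ``pad the node set'' remark when $n$ is not a power of $k$: one cannot literally add nodes, and simulating virtual nodes would overload bandwidth; the paper instead invokes \Cref{lem:change_n_value}, running the algorithm on $O(1)$ overlapping subcliques of size a nearby power of two that jointly cover all pairs, at the cost of a constant factor in $\alpha$ and in round complexity — a routine fix, but the one you would actually need.
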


In particular this result directly improves the result of \cite{FP23} for the Congested Clique - it has improved round complexity, is resilient to $\Theta(n^2)$ corrupted edges, and is deterministic.

\begin{theorem}
\label{thm:informal_adaptive_deterministic_low_noise}
There is a deterministic $O(1)$ round-algorithm for the $\AllToAllComm$ problem in the presence of an $\alpha$-ABD adversary with
$\alpha=\Theta(1/\sqrt{n})$ and bandwidth $B =1$.
\end{theorem}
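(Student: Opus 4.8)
The plan is to reduce the $\AllToAllComm$ problem with $1$-bit inputs to two successive invocations of the Adversarial Routing Lemma (\Cref{thm:informal_routing}) with $k=\Theta(\sqrt n)$, realizing a matrix transpose via an intermediate ``block layout''. This is exactly the parameter regime in which the routing lemma runs in $O(1)$ rounds against an $\alpha$-ABD adversary; since that lemma is deterministic and delivers every super-message uncorrupted, correctness and the $O(1)$ round bound will follow by composition. \emph{Set-up:} identify the input with an $n\times n$ Boolean matrix $M$, $M[u,v]=m_{u,v}$, where node $u$ holds row $M[u,\cdot]$ and the goal is that every node $v$ learn column $M[\cdot,v]$. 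Assume $n$ is a perfect square (otherwise pad $M$ to dimension $\lceil\sqrt n\rceil^2$, an $O(1)$-factor loss). Partition $V$ into $\sqrt n$ row-groups $R_1,\dots,R_{\sqrt n}$ and $\sqrt n$ column-groups $C_1,\dots,C_{\sqrt n}$, each of size $\sqrt n$, and identify the $n$ nodes with pairs $(a,b)\in[\sqrt n]\times[\sqrt n]$. Fix $\alpha=c/\sqrt n$ for a small enough constant $c$; then $1/\alpha=\Theta(\sqrt n)\le n/\log n$, so \Cref{thm:informal_routing} applies with per-node super-message count up to $\Theta(\sqrt n)$ and super-message size up to $O(\sqrt n)$ bits.

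\emph{Phase 1 (build the block layout).} Node $u\in R_a$ cuts its row into the $\sqrt n$ pieces $M[u,C_b]$, $b\in[\sqrt n]$, each of $\sqrt n$ bits, and declares $M[u,C_b]$ a super-message with unique target $(a,b)$. Each node emits $\sqrt n$ super-messages; node $(a,b)$ is the target of exactly $|R_a|=\sqrt n$ of them (one from each $u\in R_a$); each super-message has $\sqrt n=\Theta(n/k)$ bits; and every target depends only on node indices, hence is publicly known, as the lemma requires. \Cref{thm:informal_routing} delivers all of them correctly in $O(1)$ rounds, after which node $(a,b)$ holds the full $\sqrt n\times\sqrt n$ submatrix $M[R_a,C_b]$.

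\emph{Phase 2 (scatter to columns).} Node $(a,b)$ extracts, for each $v\in C_b$, the $\sqrt n$-bit vector $M[R_a,v]$ and declares it a super-message with unique target $v$. Each node emits $|C_b|=\sqrt n$ super-messages of $\sqrt n$ bits, and each node $v\in C_b$ is the target of exactly one such message from $(a,b)$ for each of the $\sqrt n$ row-groups $a$; these $\sqrt n$ super-messages together form the $n$-bit column $M[\cdot,v]$. Multiplicities and sizes are again $\Theta(\sqrt n)$ and targets are index-determined, so a second $O(1)$-round call to \Cref{thm:informal_routing} finishes the job. Composing the two phases solves $\AllToAllComm$ deterministically in $O(1)$ rounds.

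\emph{What requires care.} Robustness against the adaptive adversary is inherited verbatim from the routing lemma: since no super-message is corrupted in either phase, every node reconstructs its column exactly, and determinism plus the $O(1)$ round bound are immediate from composing $O(1)$ deterministic $O(1)$-round routing calls. There is thus no genuinely hard step beyond the routing lemma itself; the only delicate points are bookkeeping. One must check that both routing instances meet the hypotheses of \Cref{thm:informal_routing} --- per-node super-message count $k=\sqrt n=O(\min(1/\alpha,n/\log n))$ and super-message length $\sqrt n=O(n/k)$ --- and it is precisely this pair of constraints that pins $\alpha$ at $\Theta(1/\sqrt n)$ and blocks pushing the method to larger $\alpha$. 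The remaining point is the rounding when $n$ is not a perfect square, absorbed by padding.
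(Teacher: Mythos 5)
Your proposal is correct and is essentially the paper's own construction: the paper partitions $V$ into $\sqrt{n}$ consecutive blocks $S_1,\ldots,S_{\sqrt{n}}$, has node $S_i[j]$ collect the block $M(S_i,S_j)$ in a first $\ExtMatchingTransmission$ call and then scatter each column $M(S_i,\{v\})$ to its target $v$ in a second, which is exactly your two-phase block-layout transpose with $R_a=S_a$, $C_b=S_b$, and node $(a,b)$ identified with $S_a[b]$. One minor point: padding $M$ to $\lceil\sqrt n\rceil^2$ cells would require more than $n$ physical nodes, so the paper instead shrinks the clique to a perfect square $n'\in[n/2,n]$ via \Cref{lem:change_n_value}, at a constant-factor cost in $\alpha$.
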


The algorithm of \Cref{thm:informal_adaptive_deterministic_low_noise} is very similar to the prior algorithm of \cite{ABEGH19} (Theorem 1) for the stochastic interactive coding setting, adapted to our model. Similar to \Cref{thm:informal_adaptive_randomized}, it is faster than that of \cite{FP23}, and can support $\Theta(n^{1.5})$ corrupted edges, but it does not fully subsume \cite{FP23}. See Table \ref{fig:table} for a summary of our results.

\begin{table}[H]
\centering
\bgroup
\def\arraystretch{1.5}
\begin{tabular}{llllll}
Fraction of Faults ($\alpha$)                                         & Adaptivity   & Randomness    & Bandwidth  & Rounds  & Reference \\ \hline 
$\Theta(1)$                                      & Non-adaptive & Randomized    & $B = \Omega(\log{n})$ & $O(1)$  & \Cref{thm:informal_non_adaptive}     \\ \hline
$\exp(-\sqrt{\log{n}\log\log{n}})$    & Adaptive     & Randomized    & $B \in \{1,\dots,\poly{n}\}$ & $O(1)$    &  \Cref{thm:informal_adaptive_randomized}   \\ \hline
$\Theta(1)$                                & Adaptive     & Deterministic & $B \in \{1,\dots,\poly{n}\}$ & $O(\log{n})$ & \Cref{thm:informal_adaptive_deterministic_high_noise}    \\ \hline
$\Theta(1/\sqrt{n})$                           & Adaptive     & Deterministic & $B \in \{1,\dots,\poly{n}\}$ & $O(1)$  & \Cref{thm:informal_adaptive_deterministic_low_noise}    \\ \hline
\end{tabular}
\egroup 
\caption{New upper bounds on the single round simulation problem in the various settings.}
\label{fig:table}
\end{table}

\textbf{Roadmap.} In \Cref{sec:key_routing} we present and prove our key routing procedure for the adaptive adversarial setting, i.e. \Cref{thm:informal_routing}. In \Cref{sec:nonadaptive} we study the non-adaptive setting, and show the construction for Theorem \ref{thm:informal_non_adaptive}. In \Cref{sec:adaptive} we consider adaptive adversaries, and show \Cref{thm:informal_adaptive_randomized}. In \Cref{sec:deterministic_compilers}, we consider the deterministic setting, and show Theorems~\ref{thm:informal_adaptive_deterministic_high_noise},\ref{thm:informal_adaptive_deterministic_low_noise}.
    \section{Preliminaries}
\label{sec:prelims}

Throughout we assume the standard Congested Clique model with bandwidth $B$.  We also assume the standard KT1 model, which allows us to assume, w.l.o.g., that all node IDs are in $\{1,\ldots, n\}$.

\smallskip
\noindent \textbf{The Adversarial Model.}
We consider an all-powerful computationally unbounded adversary, also known as Byzantine, that can corrupt messages by controlling a subset of \emph{edges} $F_i$ in each round $i$. The adversary is \emph{mobile} in the sense that the set of the edges that the adversary controls in round $i$, denoted by $F_i \subseteq E$, may vary from round to round. The mobile adversary is restricted to the bandwidth limitation of the Congested Clique model: In every round $i$, the adversary may send $B$-bit messages over all the controlled edges $F_i$, where the content of the message is determined by the adversary based on its allowed knowledge. The identity of the edges $F_i$ is unknown to the nodes, e.g., a node $v$ does not know which of its incident edges are in $F_i$.

Within this setting, we consider two adversarial variants: non-adaptive vs. adaptive. In the \emph{non-adaptive setting}, the identity of the edges $F_i$ for every round $i$ are decided at the beginning of the simulation and may depend only on the index $i$ and the initial states of the nodes (randomness excluded). The (corrupted) content of the messages exchanged over the $F_i$ edges is allowed to depend on the communication history of the network up to round $i-1$ (included) and the messages the nodes are intended to send in round $i$.\footnote{We note that the non-adaptivity refers only to the choice of $F_i$. We allow the content of the corrupted messages to be chosen in an adaptive manner.} 
In contrast, in the stronger \emph{adaptive} setting, the selection of the edges $F_i$ as well as the corrupted messages to be sent over these $F_i$ edges in round $i$ may depend on the following knowledge: the local states of all nodes in the network, including the internal randomness generated thus far (round $i$ included) at each node $v$, and all the messages sent throughout the network in rounds $1,\ldots, i-1$ and those that are intended to be sent by the nodes in round $i$.\footnote{In the literature, this is referred to as a \emph{rushing} adaptive adversary (see e.g. \cite{BH92}).}

We limit the power of the adversary by restricting the number (or fraction) of corrupted edges incident to each node, namely, the \emph{corrupted degree}. An $\alpha$-Byzantine-Degree (BD) adversary is restricted to control in each round $i$ a subset $F_i$ such that $\deg(F_i)\leq \alpha \cdot n$. Let $\alpha$-NBD (resp., $\alpha$-ABD) be an  $\alpha$-BD adversary in the non-adaptive (resp., adaptive) setting.

\smallskip
\noindent \textbf{Error correcting Codes and Locally Decodable Codes.}
Our algorithms make extensive use of error correcting codes and locally decodable codes (LDCs). We need the following definitions.
\begin{definition}[Hamming Distance]
    Let $\Sigma$ be a finite set and $\ell\in\mathbb{N}$, then the distance between $x,y\in \Sigma^\ell$ is defined by
    $\Hamm(x,y) = \abs{\sett{i\in[\ell]}{x[i]\neq y[i]}}$.
\end{definition}

\begin{definition}[Error Correcting Code]\label{def:ECC}
        An error correcting code over alphabet $\Sigma$ is a function $C: \Sigma^k \rightarrow \Sigma^n$. The relative distance of the code is defined as $\delta_C = \min_{x,y \in \Sigma^k} \Hamm(C(x),C(y))/n$. The relative rate $\tau_C$ is defined as $k/n$. A decoding function for code $C$ $\Decode: \Sigma^n \rightarrow \Sigma^k$ is a function that given a word $c' \in \Sigma^n$ such that $\exists_{x \in \Sigma^k} \Hamm(C(x),c') < \delta_C \cdot n/2$, then $\Decode(c') = x$.\footnote{We remark that if $\exists x \Hamm(C(x),c') < \delta_C n/2$ then such an $x$ must be unique, hence $\Decode$ is well-defined.}
        
\end{definition}

The most useful ECC in our context is the Justesen Code \cite{J72} which is a binary code with constant rate and distance. The precise selection of $\alpha$ in \Cref{thm:informal_routing}, \Cref{thm:informal_non_adaptive} and \Cref{thm:informal_adaptive_deterministic_low_noise} depends on the distance of this code. 

\begin{lemma}[Justesen Code \cite{J72}, Theorem 1]
\label{lem:justesen_code}
For any relative rate $\tau_C \leq 1/200$, there exists an error correcting code $C$ on binary alphabet $\Sigma = \{0,1\}$ with relative distance $\delta_C > 1/10$. The encoding and decoding algorithms can be computed in polynomial time. 
\end{lemma}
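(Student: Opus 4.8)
Throughout, $H(p)=-p\log_2 p-(1-p)\log_2(1-p)$ denotes the binary entropy function, and I recall $H^{-1}(1/2)\approx 0.1103$.

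\textbf{Construction and encoding.} The plan is to reproduce Justesen's concatenated code and tune its two rate parameters to the target values. Fix $m\in\mathbb{N}$, put $q=2^m$ and $N=q-1$, let $\alpha_1,\dots,\alpha_N$ enumerate the nonzero elements of $\mathbb{F}_q$, and fix an $\mathbb{F}_2$-linear identification of $\mathbb{F}_q$ with $\{0,1\}^m$. The $i$-th inner code is the $\mathbb{F}_2$-linear map $E_i\colon\mathbb{F}_q\to\{0,1\}^{2m}$, $E_i(x)=(x,\alpha_i x)$ (the Wozencraft ensemble). The outer code is the Reed--Solomon code $\mathrm{RS}\colon\mathbb{F}_q^K\to\mathbb{F}_q^N$ of dimension $K=\lfloor 2\tau_C N\rfloor$, whose minimum distance is $N-K+1\ge(1-2\tau_C)N$. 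Define $C(x)=(E_1(c_1),\dots,E_N(c_N))$ where $(c_1,\dots,c_N)=\mathrm{RS}(x)$; then $C$ is binary and $\mathbb{F}_2$-linear, has block length $n=2mN$, and has rate $K/(2N)\le\tau_C$, and letting $m$ grow yields an infinite family. Encoding is polynomial time: Reed--Solomon encoding is polynomial evaluation, and each $E_i$ is a linear map.

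\textbf{Distance.} The only property of the inner codes I would use is that the map $i\mapsto(c_i,\alpha_i c_i)\in\{0,1\}^{2m}$ is \emph{injective} on the support $S=\{i:c_i\neq 0\}$ of any outer codeword $\mathrm{RS}(x)$, since from such a pair one recovers $\alpha_i=(\alpha_i c_i)c_i^{-1}$ and the $\alpha_i$ are pairwise distinct. Fix a constant $\delta_0$ with $\delta_0<H^{-1}(1/2)$ and $99\delta_0>10$ (such a $\delta_0$ exists since $H^{-1}(1/2)\approx 0.1103>10/99$). For a nonzero message $x$, the number of coordinates $i\in S$ whose inner block $(c_i,\alpha_i c_i)$ has Hamming weight at most $\delta_0\cdot 2m$ is, by injectivity, at most the number of length-$2m$ binary strings of weight at most $\delta_0\cdot 2m$, hence at most $2^{2mH(\delta_0)}=o(N)$ because $2H(\delta_0)<1$ and $N=2^m-1$. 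Therefore at least $|S|-o(N)\ge(1-2\tau_C-o(1))N$ of the $N$ inner blocks have weight exceeding $\delta_0\cdot 2m$, so $\Hamm(C(x),0)\ge\delta_0\cdot 2m\cdot(1-2\tau_C-o(1))N$, and by linearity the relative distance satisfies $\delta_C\ge\delta_0(1-2\tau_C-o(1))\ge\delta_0(1-1/100-o(1))$. For $\tau_C\le 1/200$ and all sufficiently large $m$ this exceeds $1/10$ by the choice of $\delta_0$.

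\textbf{Decoding, and the main obstacle.} For decoding I would use the standard two-stage decoder. First, for each coordinate $i$, brute-force minimum-distance decode the received $2m$-bit block among the $q=\poly(n)$ codewords of $E_i$, obtaining a candidate outer symbol $\hat c_i\in\mathbb{F}_q$ together with the distance $d_i$ from the received block to $E_i(\hat c_i)$. Then run Forney's generalized minimum-distance (GMD) decoding of the Reed--Solomon code with the $d_i$ as reliabilities: for $j=0,1,2,\dots$, erase the $j$ least reliable coordinates and invoke the polynomial-time Reed--Solomon errors-and-erasures decoder, returning the first codeword produced; all of this runs in $\poly(n)$ time. The step I expect to need the most care --- and the reason Justesen codes are not covered by the textbook concatenated-decoding bound --- is that the Wozencraft inner codes are \emph{not} uniformly good: for instance $E_i$ with $\alpha_i=1$ has minimum distance only $2$, so a constant number of bit errors in such a block already breaks its inner decoding. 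I would dispatch this with exactly the injectivity count used for the distance bound: the number of coordinates $i$ for which $E_i$ has minimum distance at most $\delta_0\cdot 2m$ is at most $2^{2mH(\delta_0)}=o(N)$ (apply injectivity to a minimum-weight nonzero codeword of each such $E_i$), which is negligible against the Reed--Solomon distance $\Theta(N)$. Declaring those $o(N)$ coordinates permanently unreliable costs only $o(N)$ in the GMD budget, while on every other coordinate the inner distance is at least $\delta_0\cdot 2m$ and Forney's analysis applies as usual; this yields correct unique decoding of every error pattern of weight below $\delta_C n/2$, as required by \Cref{def:ECC}. Thus the whole difficulty is localized in controlling the variable quality of the inner codes, both in the distance lower bound and in decoding, and in both places the fix is the same injectivity/entropy counting.
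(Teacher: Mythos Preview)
The paper does not prove this lemma; it simply cites Theorem~1 of Justesen~\cite{J72} as a black box. Your write-up is therefore not being compared against anything in the paper, but as a self-contained sketch it is essentially correct and faithfully reproduces Justesen's original argument: the Wozencraft ensemble as inner codes, Reed--Solomon as the outer code, and the injectivity/entropy count to control the number of low-weight inner blocks. The numerical choice $10/99<\delta_0<H^{-1}(1/2)$ is valid and lands the distance above $1/10$ for $\tau_C\le 1/200$.

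One point worth tightening is the decoding guarantee. Forney's GMD analysis, after you erase the $o(N)$ coordinates with bad inner codes, corrects up to roughly $\tfrac12\cdot\delta_0\cdot 2m\cdot\bigl((1-2\tau_C)N-o(N)\bigr)$ errors, which matches your \emph{lower bound} on $\delta_C n/2$, not necessarily the true $\delta_C n/2$ as \Cref{def:ECC} literally demands. In practice this is harmless: you can simply declare the code's relative distance to be the proven lower bound (still $>1/10$), and then the decoder meets the definition exactly. But as written, the phrase ``correct unique decoding of every error pattern of weight below $\delta_C n/2$'' slightly overstates what GMD gives you.
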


Our algorithms for the adaptive setting employ a special class of ECCs known as Locally Decodable Codes. Specifically, in our applications of error correcting codes in \Cref{sec:adaptive} the message is first partitioned into small blocks, each
of which is then encoded separately. In particular, for each node $v$ it will be desired to decode
only the portion of data in which it is interested (i.e., that corresponds to the messages $\{m_{u,v}\}_{u \in V}$). This precisely fits LDCs that simultaneously provide efficient
random-access retrieval and high noise resilience by allowing the decoding of an arbitrary bit of the message from looking at only
a small number of randomly chosen codeword bits.  

\begin{definition}[Locally Decodable Codes (LDC)]
\label{def:LDC}
An error correcting code $C: \Sigma^k \rightarrow \Sigma^n$ is called a $(q,\delta_C,\epsilon)$-LDC if there exists a randomized algorithm $\LDCDecode$ that receives as input a string $x\in \Sigma^n$, an index $i\in[k]$, and a uniformly random string $R$, performs at most $q$ queries to $x$, and has the following guarantees: if there exists $y \in \Sigma^k$ such that $\Hamm(C(y),x) \leq \delta_C n/2$, then $\Pr(\LDCDecode(x,i,R) = y[i]) \geq 1-\epsilon$, where the probability is taken over the choice of $R$. 
\end{definition}

We say an LDC is \emph{non-adaptive} if the decoding algorithm's queries only depend on the decoded index $i$ and the randomness of the algorithm. For a non-adaptive LDC, we define the function $\LDCDecodeIndices(i,R)$, which receives as input an index $i \in [k]$ and randomness $R$, and outputs the $q$ indices $x_1,\dots,x_q \in [n]$ of the codeword that local decoding algorithm $\LDCDecode$ queries given $(i,R)$ as the decoded index and randomness respectively.

\begin{lemma}[\cite{KMRS17}]
\label{lem:LDC}
For some $\delta_C = \Omega(1)$, there exists a non-adaptive $(q,\delta_C,\frac{1}{\poly(n)})$-$\LDC$ $C: \mathbb{F}_2^k \to \mathbb{F}_2^n$, for $q=\exp(\sqrt{\log n \log\log n})$ with constant relative rate. The encoding and local decoding algorithms can be computed in $\poly(n)$ time, and the local decoding algorithm uses at most $\exp(\sqrt{\log n \log\log n})$ random bits.
\end{lemma}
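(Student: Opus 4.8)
This lemma is a restatement of the main construction of Kopparty, Meir, Ron-Zewi and Saraf \cite{KMRS17}; I would not reprove it from scratch but cite it and then extract the precise form stated above. The core input is their theorem producing, for all sufficiently large block lengths $n$ (padding up to the nearest ``good'' length when needed), a \emph{linear locally correctable code} (LCC) over some alphabet $\Sigma$ with constant relative rate, constant relative distance, polynomial-time encoding, and a local corrector that makes $q_0 = \exp(O(\sqrt{\log n \log\log n}))$ queries, uses $O(\polylog n)$ random bits, and has some constant error probability $\epsilon_0 < 1/2$. At a high level their corrector starts from Reed--Muller / multiplicity-type LCCs --- restrict the received word to a random affine line through the queried point and decode the induced low-degree univariate code, a non-adaptive procedure with query complexity $n^{\beta}$ for a constant $\beta$ --- and then applies $\Theta(\sqrt{\log n/\log\log n})$ levels of a recursive query-reduction step (built on expander-based distance amplification together with a tensor-type composition), each level using $O(\log n)$ random bits, which drives the query complexity down to $\exp(O(\sqrt{\log n\log\log n}))$ while keeping the rate and distance constant; the composed corrector queries a structured random set determined only by the target index and the randomness, so it stays non-adaptive. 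For our purposes it suffices to treat this as a black box with these parameters.

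Given this black box, I would reach the statement of the lemma through three standard transformations. (i) \emph{Binary alphabet:} if $\Sigma \neq \mathbb{F}_2$, concatenate with an asymptotically-good, systematic binary inner code (\Cref{lem:justesen_code}) of block length $\polylog n$; since a small constant fraction of bit-corruptions can push at most a proportionally small constant fraction of inner blocks outside the inner decoding radius, the outer corrector --- whose correction radius is a fixed constant, chosen larger than that fraction --- still succeeds, now reading $q_0\cdot\polylog n$ bits per query, with rate and distance changing only by constant factors and linearity preserved. (ii) \emph{LCC $\Rightarrow$ LDC:} bring both the \cite{KMRS17} code and the inner code into systematic form by a change of basis on their generator matrices; this leaves each code unchanged as a set, hence preserves distance and the local correctability of every codeword bit, while making the $k$ message bits appear verbatim as codeword bits, so that decoding message index $i\in[k]$ (\Cref{def:LDC}) is exactly locally correcting codeword bit $i$. (iii) \emph{Error amplification:} run the constant-error corrector $O(\log n)$ times with independent randomness and output the majority; a Chernoff bound makes the error $1/\poly(n)$ for any prescribed polynomial, at the cost of an $O(\log n)$ factor in queries and $O(\log n)$ independent copies of the random string. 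Each step preserves polynomial-time encoding/decoding and non-adaptivity: reading a whole (fixed) inner block and running independent non-adaptive copies are non-adaptive, so the final query set is a deterministic function of $(i,R)$, i.e.\ the function $\LDCDecodeIndices$ is well defined.

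For the bookkeeping: the final query count is $q_0\cdot\polylog n\cdot O(\log n)=\exp(O(\sqrt{\log n\log\log n}))\cdot\exp(O(\log\log n))=\exp(O(\sqrt{\log n\log\log n}))$, using $\log\log n=o(\sqrt{\log n\log\log n})$; this matches the claimed $q=\exp(\sqrt{\log n\log\log n})$ read up to the constant in the exponent. The randomness used is $O(\polylog n)+O(\log n)\cdot O(\polylog n)=\polylog n=\exp(O(\log\log n))$, comfortably within the stated $\exp(\sqrt{\log n\log\log n})$ budget. The rate is a product of constants hence a constant, and the relative distance is $\Omega(1)$, giving $\delta_C=\Omega(1)$.

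The only parts I would expect to need genuine care, rather than invocation of standard facts, are two. First, verifying that the \cite{KMRS17} corrector is, or can be arranged to be, non-adaptive \emph{throughout} the recursive composition --- this matters because \Cref{thm:informal_adaptive_randomized} relies on exposing $\LDCDecodeIndices$. Second, the concatenation step (i): one must fix the inner code's relative distance and the outer corrector's correction radius consistently, a constant-versus-constant balancing that pins down the final $\delta_C=\Omega(1)$ and the (small) corruption fraction the resulting LDC tolerates; this is routine but is the place where the constants have to be tracked.
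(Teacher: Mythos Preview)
Your proposal is correct and, in fact, more detailed than the paper itself: the paper gives no proof of this lemma at all, simply citing \cite{KMRS17} as a black box (the acknowledgments note that the authors consulted the \cite{KMRS17} authors directly regarding the needed properties, in particular non-adaptivity and the randomness bound). Your outline of alphabet reduction via concatenation, the systematic-form LCC-to-LDC conversion, and error amplification by repetition is the standard route and would indeed yield the stated parameters; the two points you flag as needing care --- non-adaptivity through the recursive composition and the constant balancing in the concatenation step --- are precisely the places where one would need to look inside \cite{KMRS17} rather than cite it, but the paper chooses not to do this and you need not either.
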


\noindent \textbf{Sparse Recovery Sketches.} Intuitively, a $k$-sparse recovery sketch is a succinct dynamic data structure of $\widetilde{O}(k)$ bits, where we may insert or remove elements, and we may recover all elements currently present in the data structure assuming there at most $k$ elements at the time of recovery.\footnote{More accurately, we do not insert or remove elements, but rather change their frequency in the data structure in either a positive or negative manner. The sketch can return all non-zero frequency elements, assuming there are at most $k$ such elements.} 

We refer to \cite{CJ19} for a survey. To introduce this notion more formally, consider the following setting. Let $U$ be a universe of elements, and let $\sigma$ be a multi-set of $K$ tuples $\{(e_i,f_i)\}_{i=1}^K$, where $e_i \in U$ is an element in the universe and $f_i$ is an integer referred to as the change in frequency (which can be either negative, positive, zero). The \emph{frequency} of an element $e \in U$ in the multi-set $\sigma$ is defined as the sum of its changes of frequency, i.e., $f(e) = \sum_{i:e_i = e} f_i$. Denote by $N(\sigma)$ the set of elements in $U$ with non-zero frequency in $\sigma$, i.e., $N(\sigma) = \{e \in U \stt f(e) \neq 0\}$.


\begin{lemma}[\cite{CF14}, Section 2.3.2]
    \label{lem:k_sparse_recover} 
    For an integer parameter $k \geq 1$ and a universe set $U$, there exists an algorithm $L$, which given  multi-set $\sigma$ with elements in $U \times \{-\poly(|U|),\dots,\poly(|U|)\}$ of size $|\sigma| = O(\poly(|U|))$, and a string $R$ of $O(k\log^2 |U|)$ random bits, outputs a randomized string $L(\sigma,R)$ called a $k$-recovery sketch, of size $O(k\log^2{|U|})$ bits. The following operations are defined on the sketch:
    \begin{itemize}
        \item $\Recover$: a deterministic procedure in which given a (randomized) sketch $\tau = L(\sigma,R)$ outputs $N(\sigma)$, assuming $|N(\sigma)| \leq k$, with probability $1-\frac{1}{\poly(|U|)}$ over the choice of $R$.
        \item $\Add$: a deterministic procedure in which given a sketch $L(\sigma,R)$, a pair $(e_i,f_i) \in U \times \{-\poly(|U|),\dots,\poly(|U|)\}$, and the randomness $R$, outputs $L(\sigma \cup \{(e_i,f_i)\},R)$.
    \end{itemize}
\end{lemma}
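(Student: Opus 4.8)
The plan is to prove this by the textbook hashing reduction from $k$-sparse recovery to $1$-sparse recovery, following \cite{CF14}: first build a constant-size ``$1$-sparse'' gadget that recovers a hidden frequency vector when it has at most one nonzero coordinate (and flags failure, whp, otherwise), then hash the universe $U$ into $\Theta(k)$ buckets so that, whenever $|N(\sigma)|\le k$, most elements of $N(\sigma)$ land alone in their bucket, and finally repeat the hashing $d=\Theta(\log|U|)$ times so that \emph{every} element of $N(\sigma)$ is alone in some bucket with probability $1-1/\poly(|U|)$.

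For the $1$-sparse gadget I would fix a prime $p$ with $|U|^{c}\le p\le 2|U|^{c}$ for a sufficiently large constant $c$ (large enough to dominate the polynomial bound on $|\sigma|$ and the target $1/\poly(|U|)$), fix an injective integer labelling $U\hookrightarrow\{1,\dots,|U|\}$, and draw $r\in\Fp$ uniformly. The $1$-sparse sketch of a sub-multiset $\sigma'$ of the input stores three running counters, $c_0=\sum f_i$, $c_1=\sum f_i\cdot e_i$, and $c_2=\sum f_i\cdot r^{e_i}\bmod p$; since $|\sigma|$ and all $|f_i|$ are $\poly(|U|)$-bounded, each counter fits in $O(\log|U|)$ bits, and $\Add$ just updates them additively and deterministically given $r$. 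Decoding reports ``empty'' if $c_0=c_1=c_2=0$; otherwise, if $c_0\neq 0$, $c_0\mid c_1$, $e:=c_1/c_0\in\{1,\dots,|U|\}$ and $c_2\equiv c_0 r^{e}\pmod p$, it reports $(e,c_0)$, and in every other case reports ``fail''. I would then verify: (i) if the net frequency vector of $\sigma'$ has a single nonzero coordinate $e^\star$ with value $f^\star$, decoding deterministically outputs $(e^\star,f^\star)$, and if it is the zero vector it outputs ``empty''; (ii) if the net support $T$ has $|T|\ge 2$, then since $c_0,c_1$ do not depend on $r$ the candidate $e=c_1/c_0$ is a fixed label, so the fingerprint test passes only if $r$ is a root of the fixed polynomial $P(x)=\sum_{e'\in T}f(e')x^{e'}-c_0 x^{e}$, which has degree $\le|U|$ and is not identically zero because $T\setminus\{e\}\neq\emptyset$ and each $e'\in T\setminus\{e\}$ contributes the nonzero coefficient $f(e')$; hence a spurious report occurs with probability $\le |U|/p=1/\poly(|U|)$ over $r$.

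Next I would assemble the $k$-sparse sketch. We may assume $k<|U|$ (else store the whole net vector explicitly in $O(k\log|U|)$ bits). Take $d=\Theta(\log|U|)$ independent pairwise-independent hash functions $h_1,\dots,h_d\colon U\to[2k]$, each describable in $O(\log|U|)$ bits, and let $L(\sigma,R)$ be the $2kd$-entry array of $1$-sparse sketches $\{s_{j,b}\}$, where $s_{j,b}$ aggregates the pairs of $\sigma$ with $h_j(e_i)=b$, bundled with the seeds $r$ and $h_1,\dots,h_d$ so that $\Recover$ needs only $\tau$. This occupies $O(k\log^2|U|)$ bits and consumes $O(\log^2|U|)$ random bits, within the stated budgets; $\Add((e,f))$ applies the $1$-sparse $\Add$ to $s_{j,h_j(e)}$ for every $j$ and is deterministic given $R$; $\Recover$ runs the $1$-sparse decoder on every bucket and outputs the set of all elements that appear in a reported pair. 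For correctness, fix $\sigma$ with $|N(\sigma)|\le k$ and then draw $R$. Conditioning on the $h_j$'s and using (ii), a union bound over the at most $2kd=\poly(|U|)$ buckets shows that with probability $1-1/\poly(|U|)$ no bucket carrying $\ge 2$ elements of $N(\sigma)$ produces a (spurious) report. Simultaneously, for a fixed $e^\star\in N(\sigma)$ and a fixed $j$, pairwise independence and a union bound over the at most $k-1$ other elements give collision probability $<1/2$, so by independence of the $h_j$ the element $e^\star$ is isolated in some trial with probability $\ge 1-2^{-d}=1-1/\poly(|U|)$, and a union bound over the at most $k<|U|$ elements of $N(\sigma)$ preserves this; when $e^\star$ is isolated in trial $j$, the bucket $s_{j,h_j(e^\star)}$ has singleton net support, so (i) guarantees it reports $e^\star$. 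Intersecting the two events, $\Recover$ outputs exactly $N(\sigma)$ with probability $1-1/\poly(|U|)$.

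The main point to get right --- the only place the argument is more than bookkeeping --- is the $1$-sparse false-positive analysis in (ii): one must observe that the extracted label $c_1/c_0$ is determined \emph{before} $r$ is revealed, so the fingerprint check is genuinely a Schwartz--Zippel test of a fixed nonzero low-degree polynomial, and then pick $p$ as a large enough polynomial in $|U|$ to survive the union bound over all $2kd$ buckets. A secondary point is to respect the quantifier order of the statement, namely that $\sigma$ (hence $N(\sigma)$) is fixed before $R$ is drawn; this is exactly what licenses the union bound over $N(\sigma)$ in the isolation step, and the $\poly(|U|)$ slack in the failure probability is freely tunable by enlarging the constants in $d$ and $c$.
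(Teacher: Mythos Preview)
The paper does not prove this lemma; it is quoted as a black box from \cite{CF14}, Section 2.3.2. Your proposal is correct and is exactly the standard construction from that reference: the $(c_0,c_1,c_2)$ $1$-sparse gadget with a polynomial-identity fingerprint, lifted to $k$-sparse recovery via $\Theta(\log|U|)$ independent repetitions of pairwise-independent hashing into $\Theta(k)$ buckets. The two delicate points you single out --- that the candidate label $c_1/c_0$ is fixed before the fingerprint randomness $r$ is revealed (so Schwartz--Zippel applies), and that $\sigma$ is fixed before $R$ is drawn (so the union bound over $N(\sigma)$ is licit) --- are indeed the only places where the argument is more than bookkeeping, and you handle both correctly.
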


\smallskip
\noindent \textbf{Using Sparse Recovery Sketches to Correct Faulty Messages \cite{FP23}:}
Consider a single round procedure where each node $u$ sends the messages $m_{u,v}$ to each $v$, and let $\widetilde{m}_{u,v}$ be the message received by $v$.
Let $M_{bad}(v)=\{\widetilde{m}_{u,v} ~\mid~ \widetilde{m}_{u,v}\neq m_{u,v}\}$. Let $k=\max_{v \in V}|M_{bad}(v)|$ and define 
$\Sk(v)$ to be a $\Theta(k)$-sparse recovery sketch obtained by initializing an empty sketch, and applying for each $u \in V$ an $\Add(\ID(u,v) \circ m_{u,v},1,R)$ operation (i.e. adding all messages targeted at $v$, with their associated ids, with frequency $1$), where $R$ is a string of $O(k\log^2 n)$ bits.

\begin{lemma}\label{lem:sparse}
    Given $\Sk(v)$ and the randomness $R$, each $v$ can recover all messages in $M_{bad}(v)$ and their correction, i.e., obtaining $\{m_{u,v}\}_{u \in V}$.
\end{lemma}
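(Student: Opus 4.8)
The plan is to argue that the full list of messages $\{m_{u,v}\}_{u\in V}$ is determined by $\Sk(v)$ together with $R$ and the (possibly corrupted) messages $\widetilde m_{u,v}$ that $v$ actually received. The key observation is that $v$ can locally build a ``received'' sketch $\widetilde{\Sk}(v)$ by starting from an empty sketch and applying $\Add(\ID(u,v)\circ \widetilde m_{u,v},1,R)$ for every $u\in V$, using the same randomness $R$. By the $\Add$ specification in \Cref{lem:k_sparse_recover}, this yields $\widetilde{\Sk}(v)=L(\widetilde\sigma,R)$ where $\widetilde\sigma=\{(\ID(u,v)\circ\widetilde m_{u,v},1)\}_{u\in V}$, while $\Sk(v)=L(\sigma,R)$ for $\sigma=\{(\ID(u,v)\circ m_{u,v},1)\}_{u\in V}$.

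Next I would exploit linearity of the sketch (the fact that $\Add$ just accumulates frequency changes, so $L(\sigma\cup\sigma',R)$ is obtained from $L(\sigma,R)$ by adding the contributions of $\sigma'$, and negative frequencies are allowed). Concretely, $v$ can take $\Sk(v)$ and, for each $u$, apply $\Add(\ID(u,v)\circ\widetilde m_{u,v},-1,R)$; call the result $\Delta(v)$. Then $\Delta(v)=L(\sigma'',R)$ where $\sigma''$ is the multiset obtained by adding $\sigma$ and the negations of $\widetilde\sigma$. The element $\ID(u,v)\circ m_{u,v}$ appears with frequency $+1$ and $\ID(u,v)\circ\widetilde m_{u,v}$ with frequency $-1$; these cancel exactly when $m_{u,v}=\widetilde m_{u,v}$, and otherwise both survive with nonzero frequency. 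Hence $N(\sigma'')=\{\ID(u,v)\circ m_{u,v} : \widetilde m_{u,v}\neq m_{u,v}\}\cup\{\ID(u,v)\circ\widetilde m_{u,v} : \widetilde m_{u,v}\neq m_{u,v}\}$, which has size $2|M_{bad}(v)|\leq 2k$.

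Then I would invoke the $\Recover$ guarantee of \Cref{lem:k_sparse_recover}: since $\Sk(v)$ is a $\Theta(k)$-sparse recovery sketch, and choosing the hidden constant so that it supports $2k$-sparse recovery, $\Recover(\Delta(v))$ returns $N(\sigma'')$ with probability $1-1/\poly(n)$ over $R$. From this set, $v$ can read off, for each $u$ with a corrupted message, the pair $\ID(u,v)\circ m_{u,v}$ (distinguishing the true message from the corrupted one: the true one is the element whose frequency is $+1$ rather than $-1$, or alternatively $v$ pairs up elements sharing the same id and discards the one equal to $\widetilde m_{u,v}$). Overwriting $\widetilde m_{u,v}$ by $m_{u,v}$ for exactly these $u$, and keeping $\widetilde m_{u,v}=m_{u,v}$ for all others, yields the full correct list $\{m_{u,v}\}_{u\in V}$.

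The main obstacle is bookkeeping around the recovery of signed frequencies and the identification step: one must be careful that the ids $\ID(u,v)$ are included in the sketched elements (as they are in the definition of $\Sk(v)$) so that the surviving elements can be attributed to the right sender $u$, and that $\Recover$ returns not just the support but enough information to tell the $+1$ element from the $-1$ element with the same id — this is immediate if $\Recover$ returns elements with their frequencies, or can be finessed by sketching $(\ID(u,v)\circ m_{u,v})$ and $(\ID(u,v)\circ\widetilde m_{u,v})$ into two separate sketches and subtracting. A minor point is ensuring $|\widetilde\sigma|$ and the intermediate multisets stay within the $O(\poly(n))$ size bound required by \Cref{lem:k_sparse_recover}, which holds since there are only $n$ $\Add$ operations per sketch; and that a single shared $R$ of $O(k\log^2 n)$ bits suffices, which it does since all operations on $v$'s sketch use the same $R$.
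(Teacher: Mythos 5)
Your proof is correct and follows essentially the same route as the paper's: subtract (via $\Add$ with frequency $-1$) the received messages $\widetilde m_{u,v}$ from $\Sk(v)$, observe that by linearity only the corrupted pairs $\ID(u,v)\circ m_{u,v}$ and $\ID(u,v)\circ\widetilde m_{u,v}$ survive with nonzero frequency (at most $2k$ of them), then invoke $\Recover$. The extra care you take about distinguishing the $+1$ from the $-1$ element with the same id is a reasonable point that the paper glosses over; your observation that $v$ can simply discard the recovered element equal to its already-known $\widetilde m_{u,v}$ resolves it cleanly.
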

\begin{proof}
       Given $\Sk(v)$ and $R$, node $v$ applies for each $u \in V$ the operation $\Add(\ID(u,v) \circ \widetilde{m}_{u,v},-1,R)$ (i.e. reduces the frequency of each of $v$'s received messages, with their associated ids, by $1$). We observe that the only remaining elements in $\Sk(v)$ are messages $\ID(u,v) \circ m_{u,v}$ and $\ID(u,v) \circ \widetilde{m}_{u,v}$ where $\widetilde{m}_{u,v} \neq m_{u,v}$, which is exactly the set of bad edges $\Mbad$ and their correction. By assumption, $|\Mbad(v)| \leq  k$, hence the number of elements in $\Sk(v)$ with non-zero frequency is at most $2k$. By applying $\Recover$ on $\Sk(v)$ and $R$, node $v$ recovers $\Mbad$ and their correction, corrects them, and outputs $\{m_{u,v}\}_{u \in V}$. 
\end{proof}

\smallskip
\noindent \textbf{Message Sets and Concatenation.} For node sets $A,B \subseteq V$, denote the set of messages $M(A,B) = \{m_{u,v} \mid u \in A, v \in B\}$. Unless stated otherwise, the ID of a message $\ID(m_{u,v})=\ID(u)\circ \ID(v)$. Let $\bar{M}(A,B)$
 be the set of messages in $M(A,B)$ ordered in increasing order of their IDs. For, $\bar{M}(A,B)=(m_1,\ldots, m_\ell)$, let:
 \begin{equation}\label{eq:message-concatentation}
 M^{\circ}(A,B)=m_1 \circ m_2 \circ \ldots \circ m_\ell~.  \end{equation}

For a subset of vertices $A \subseteq V$, let $A[i]$ be the $i^{th}$ node in $A$ based on increasing order of IDs.

\noindent \textbf{$k$-Wise Independent Hash Functions.}
$k$-wise independent hash functions are a powerful tool for reducing the amount of randomness required in algorithms. In essence, it is a hash function family such that a random function from the family sends any given $k$ elements $x_1,\dots,x_k$ in its input domain to a uniformly random element in its range, independently for all $k$ elements.

\begin{definition}[$k$-wise independent hash family; Definition 3.31 in \cite{V12}]
    For integers $N,k$, a family of functions $\mathcal{F} \subseteq \{f:V \rightarrow [N]\}$ is called a $k$-wise independent hash function family (with range $N$) if for any $k$ nodes $u_1,\dots,u_k \in V$, and $k$ values $x_1,\dots,x_k \in [N]$, it holds that $\Pr_{f \in \mathcal{F}}(f(u_1) = x_1 \land \dots \land f(u_k) = x_k) = \frac{1}{N^k}$. In particular, $f(u_1),\dots,f(u_k)$ are independent random variables.
\end{definition}

\begin{lemma}[Construction 3.32 in \cite{V12}]
\label{lem:kwise_indepedent_hash}
For any integers $N,k$, there exists a $k$-wise independent hash function family $\mathcal{F}$ (with range $N$) such that sampling a uniformly random function $f \in \mathcal{F}$ can be done efficiently and using $O(k\log{N})$ random bits.
\end{lemma}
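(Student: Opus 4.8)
The plan is to realise $\mathcal{F}$ as the family of evaluations of a uniformly random low-degree polynomial over a finite field --- the classical polynomial-evaluation construction of $k$-wise independence. First I would fix a prime power $q$ with $\max(|V|,N)\le q<2\max(|V|,N)$, concretely the least power of two in that range (which always exists), so that $\lceil\log_2 q\rceil=O(\log N)$ whenever $|V|\le\poly(N)$ --- the case in every invocation of the lemma, where in fact $N\ge|V|$. Using the KT1 convention $V=\{1,\dots,n\}$, embed $V$ into $\mathbb{F}_q$ by an injection $\iota$ (possible since $|V|\le q$), and identify the range $[N]$ with a subset of $\mathbb{F}_q$: when $N=q$ is a prime power this identification is a bijection and everything below is exact, and for general $N$ one post-composes with a map $\mathbb{F}_q\to[N]$ that is balanced up to an additive $1/q$, the standard harmless caveat of the construction, irrelevant to all applications (where one may simply round $N$ up to a prime power).

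Next I would define the family: for a coefficient vector $g=(g_0,\dots,g_{k-1})\in\mathbb{F}_q^{\,k}$ let $P_g(X)=\sum_{j=0}^{k-1} g_j X^{j}$, set $h_g(v)=P_g(\iota(v))$, and take $\mathcal{F}=\{h_g:g\in\mathbb{F}_q^{\,k}\}$. Sampling a uniform $f\in\mathcal{F}$ amounts to sampling $g$ uniformly from $\mathbb{F}_q^{\,k}$, i.e.\ $k\lceil\log_2 q\rceil=O(k\log N)$ random bits, and $h_g$ is evaluated at any point by Horner's rule in $O(k)$ field operations. Since $\mathbb{F}_q$ with $q$ a power of two admits an irreducible defining polynomial of degree $\lceil\log_2 q\rceil$ computable in $\poly(\log q)$ time, with arithmetic also in $\poly(\log q)$ time, the sampling and evaluation are efficient, as the lemma requires.

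The \emph{substantive} point is $k$-wise independence. Fix distinct $u_1,\dots,u_k\in V$ and targets $x_1,\dots,x_k\in\mathbb{F}_q$; I would prove $\Pr_{g}\!\big[\forall i\in[k]:\,P_g(\iota(u_i))=x_i\big]=q^{-k}$. The map $\Phi:\mathbb{F}_q^{\,k}\to\mathbb{F}_q^{\,k}$, $g\mapsto\big(P_g(\iota(u_1)),\dots,P_g(\iota(u_k))\big)$, is $\mathbb{F}_q$-linear with matrix the Vandermonde matrix $\big(\iota(u_i)^{\,j}\big)_{1\le i\le k,\,0\le j\le k-1}$; its determinant $\prod_{i<i'}(\iota(u_{i'})-\iota(u_i))$ is nonzero because the $\iota(u_i)$ are pairwise distinct, so $\Phi$ is a bijection. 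Hence every target vector $(x_1,\dots,x_k)$ has exactly one preimage $g$, which a uniform $g$ hits with probability $q^{-k}$. (Equivalently, Lagrange interpolation: a unique polynomial of degree $<k$ passes through any $k$ prescribed point--value pairs.) Marginals over fewer than $k$ nodes follow by summing, and transporting along the identification $[N]\leftrightarrow\mathbb{F}_q$ yields the claimed $N^{-k}$, exactly when $N$ is a prime power and up to the caveat above otherwise.

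I do not anticipate a genuine obstacle, as this is a textbook construction; the care needed is purely bookkeeping. The three points to watch: (i) keeping $q$ small enough that $k\log q=O(k\log N)$, ensured by $q<2\max(|V|,N)$ together with $|V|\le\poly(N)$ in every use of the lemma; (ii) the range-versus-field-size mismatch for non-prime-power $N$, handled by the standard convention and irrelevant to the applications; and (iii) checking that constructing $\mathbb{F}_q$ and performing its arithmetic are polynomial time, which is standard. If one prefers to avoid (ii) entirely, the clean alternative --- and the one the paper effectively uses --- is to take $q$ the least prime power $\ge\max(|V|,N)$ and declare the range to be $\mathbb{F}_q$ itself, i.e.\ state the lemma for a prime-power range $N'\in[N,2N)$.
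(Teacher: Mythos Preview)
Your proposal is correct and is exactly the standard polynomial-evaluation construction that the cited reference (Vadhan's Construction~3.32) presents; the paper itself does not give a proof of this lemma but simply invokes it as a known result with a citation. So there is nothing to compare: you have supplied the textbook argument underlying the cited construction, and your bookkeeping caveats (domain/range sizes, prime-power rounding, efficient field arithmetic) are the right ones to flag.
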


\begin{lemma}[Concentration bounds for $k$-wise independent variables, Lemma 2.2, 2.3 in \cite{BR94}]
    For an even integer $k \geq 4$, let $X_1,\dots,X_m$ be $k$-wise independent random variables with values $X_i \in [0,1]$, and let $X = \sum_{i=1}^m X_i$. Denote $\mu = E(X)$. Then for any $\Delta > 0$, 

    \[\Pr(|X-\mu| \geq \Delta) \leq 8\left(\frac{k\mu + k^2}{\Delta^2} \right)^{k/2}.\]
\end{lemma}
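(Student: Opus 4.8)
The plan is to use the $k$-th moment method, the standard route for tail bounds under limited independence (this is exactly the argument of \cite{BR94}). Since $k$ is even, the event $|X-\mu|\ge\Delta$ coincides with $(X-\mu)^k\ge\Delta^k$, so Markov's inequality gives $\Pr[|X-\mu|\ge\Delta]\le E[(X-\mu)^k]/\Delta^k$, and it suffices to prove $E[(X-\mu)^k]\le 8(k\mu+k^2)^{k/2}$ (in fact the factor $8$ will be slack). Set $p_i=E[X_i]$ and $Y_i=X_i-p_i$, so $X-\mu=\sum_{i=1}^m Y_i=:Y$ with $E[Y_i]=0$; moreover $|Y_i|\le 1$ since $X_i,p_i\in[0,1]$, and $E[Y_i^2]=\mathrm{Var}(X_i)\le E[X_i^2]\le E[X_i]=p_i$ because $X_i^2\le X_i$ on $[0,1]$. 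In particular $E[|Y_i|^a]\le E[Y_i^2]\le p_i$ for every integer $a\ge 2$.

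The core step is to expand and bound $E[Y^k]=\sum_{(i_1,\dots,i_k)\in[m]^k}E[Y_{i_1}\cdots Y_{i_k}]$. Each term involves at most $k$ distinct indices, so by $k$-wise independence it factors: if index $\ell$ occurs with multiplicity $a_\ell$ in the tuple, the term equals $\prod_\ell E[Y_\ell^{a_\ell}]$. Hence a term vanishes as soon as some index occurs exactly once (its factor is $E[Y_\ell]=0$), so the surviving tuples are precisely those in which every occurring index occurs at least twice, and they have at most $j\le k/2$ distinct indices. For such a tuple with distinct-index set $L$, $|\prod_{\ell\in L}E[Y_\ell^{a_\ell}]|\le\prod_{\ell\in L}E[|Y_\ell|^{a_\ell}]\le\prod_{\ell\in L}p_\ell$. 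Grouping by $j=|L|$ and then by $L$: the number of ordered $k$-tuples realizing a fixed $L$ with all multiplicities $\ge 2$ is the number $M_j$ of surjections $[k]\twoheadrightarrow[j]$ all of whose fibers have size $\ge 2$ (which depends only on $j$), and $\sum_{|L|=j}\prod_{\ell\in L}p_\ell=e_j(p_1,\dots,p_m)\le\frac1{j!}\bigl(\textstyle\sum_i p_i\bigr)^{j}=\mu^{j}/j!$, the last inequality being the standard Maclaurin/AM--GM bound for elementary symmetric polynomials. Combining,
\[
E[Y^k]\ \le\ \sum_{j=1}^{k/2} \frac{M_j}{j!}\,\mu^{j}.
\]

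It remains to estimate $M_j$ and compare with $(k\mu+k^2)^{k/2}=\sum_{j=0}^{k/2}\binom{k/2}{j}k^{k-j}\mu^{j}$ coefficientwise in $\mu^j$. For $M_j$ I would use the crude overcount: distinguish two elements in each of the $j$ fibers and map the remaining $k-2j$ elements to $[j]$ freely, giving $M_j\le\frac{k!}{2^j(k-2j)!}\,j^{k-2j}$. Plugging in, the coefficient of $\mu^j$ on the left is $\frac{k!}{2^j(k-2j)!\,j!}\,j^{k-2j}$; writing $s=k/2-j$ (so $k-2j=2s$) and using $k!/(k/2)!\le k^{k/2}$, $s!/(2s)!\le s^{-s}$, $j\le k/2$ (and a Stirling estimate in the boundary case $s=0$), the ratio of this coefficient to $\binom{k/2}{j}k^{k-j}$ simplifies to at most $2^{-k/2}(k/(2s))^{s}$, which is maximized at $s=k/(2e)$ and is therefore at most $\bigl(e^{1/e}/2\bigr)^{k/2}<1$ for all $k\ge 2$. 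Summing over $j=1,\dots,k/2$ and adding the nonnegative $j=0$ term on the right yields $E[Y^k]\le(k\mu+k^2)^{k/2}$, hence the claimed bound with room to spare. The one genuinely delicate point is this final coefficientwise comparison: the estimate on $M_j$ must stay accurate both near $j\approx k/(2e)$, where $(k/(2s))^s$ peaks, and at $j=k/2$, where $M_{k/2}=k!/2^{k/2}$ is not much smaller than $k^{k/2}$, and it is precisely there that the $1/j!$ saving from the elementary-symmetric bound and the Stirling estimates are needed. Everything else — Markov on the even moment, centering, factoring via $k$-wise independence, killing tuples with a singleton index, and dominating higher moments by the second moment — is routine.
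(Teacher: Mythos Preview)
The paper does not give its own proof of this lemma; it is quoted verbatim as Lemmas~2.2--2.3 of \cite{BR94} and used as a black box. Your proposal reproduces exactly the standard $k$-th moment argument of \cite{BR94}---Markov on the even centered moment, factoring via $k$-wise independence, killing singleton indices, bounding the remaining combinatorics via elementary symmetric polynomials and a coefficientwise comparison---and the details you sketch (including the $M_j$ overcount and the $2^{-k/2}(k/(2s))^s$ estimate) are correct, so there is nothing to add.
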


\begin{corollary}
    \label{lem:concentration_k_wise}
    For any constant $c \geq 1$, there exist a constant $c' > 0$, such that for the setting above, for $k = \lceil c' \log{m} \rceil$, the following bounds holds:
    If $\mu \geq 16\lceil c' \log{m} \rceil$, then
    $\Pr(|X-\mu| \geq \mu/2) \leq \frac{1}{m^c},$
    and if $\mu \leq \lceil c' \log{m} \rceil$, then 
    $\Pr(|X-\mu| \geq 2 \lceil c'\log{m} \rceil) \leq \frac{1}{m^c}.$
\end{corollary}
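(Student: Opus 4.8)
The plan is to derive both tail bounds as direct instantiations of the $k$-wise concentration inequality stated immediately above (from \cite{BR94}), using two different choices of the deviation parameter $\Delta$ in the two regimes. Throughout I write $k=\lceil c'\log m\rceil$ and recall the bound $\Pr(|X-\mu|\ge\Delta)\le 8\left(\frac{k\mu+k^2}{\Delta^2}\right)^{k/2}$. The whole argument reduces to (i) choosing $\Delta$ so that the base $(k\mu+k^2)/\Delta^2$ is at most $1/2$ in each regime, and then (ii) choosing the constant $c'$ large enough, \emph{as a function of $c$ only}, so that $8\cdot 2^{-k/2}\le m^{-c}$.

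For the high-mean case $\mu\ge 16k$ I would take $\Delta=\mu/2$, so that
\[
\frac{k\mu+k^2}{\Delta^2}=\frac{4k}{\mu}+\frac{4k^2}{\mu^2}\le \frac14+\frac1{64}<\frac12 ,
\]
where the two inequalities use $\mu\ge 16k$. For the low-mean case $\mu\le k$ I would take $\Delta=2k$, so that
\[
\frac{k\mu+k^2}{\Delta^2}=\frac{\mu}{4k}+\frac14\le \frac14+\frac14=\frac12 ,
\]
using $\mu\le k$. In both cases the cited inequality gives $\Pr(|X-\mu|\ge\Delta)\le 8\cdot(1/2)^{k/2}=2^{3-k/2}$, and requiring $2^{3-k/2}\le m^{-c}$ is equivalent to $k\ge 2c\log_2 m+6$. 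Since $k\ge c'\log m$, this holds for every $m$ above an absolute constant once $c'$ is a sufficiently large constant multiple of $c$ (the precise multiple depending only on the base of $\log$); for instance $c'=2c+6$ works when $\log=\log_2$. This yields the two claimed inequalities.

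Two bookkeeping points need attention but cause no real difficulty. First, the invoked lemma requires $k$ to be an \emph{even} integer with $k\ge 4$; since $k$-wise independence implies $(k-1)$-wise independence, I would apply the lemma with the nearest even integer $\widetilde k\in\{k-1,k\}$, which is still $\Theta(\log m)$ and is $\ge 4$ once $m$ exceeds an absolute constant, and replacing $k$ by $\widetilde k$ above only inflates the needed $c'$ by a constant. Second, one must be sure $c'$ does not depend on $m$: this is automatic, since $k\ge 2c\log_2 m+6$ is linear in $\log m$ with slope proportional to $c'$, so any slope exceeding (a constant times) $c$ suffices for large $m$ while the finitely many small $m$ are absorbed into the additive slack.

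There is no genuine obstacle here --- the statement is a routine specialization of the \cite{BR94} bound. The only things to watch are that the chosen constant $c'$ is uniform in $m$ and that the parity and threshold hypotheses of the cited lemma are respected, both handled by the remarks above.
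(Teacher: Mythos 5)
Your proof is correct and follows essentially the same route as the paper's: you choose $\Delta=\mu/2$ in the high-mean regime and $\Delta=2k$ in the low-mean regime, bound the base $(k\mu+k^2)/\Delta^2$ by $1/2$ in each case, and then pick $c'$ large enough (proportional to $c$) so that $8\cdot 2^{-k/2}\le m^{-c}$. The only difference is that you explicitly address the parity requirement (the cited bound needs $k$ even and $\ge 4$), a small hypothesis the paper's proof passes over silently; your fix of dropping to the nearest even $\widetilde k\le k$ and absorbing the loss into $c'$ is the right one.
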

\begin{proof}
    Taking $c' = 100\log(c+1)$, if $\mu \geq 16\lceil c'\log{m} \rceil = 16k$, then
    \[\Pr(|X-\mu| \geq 2\mu) \leq 8\left(\frac{k\mu + k^2}{\mu^2/4} \right)^{k/2} \leq 8\left(\frac{1}{2}\right)^{k/2} \leq \frac{1}{m^c}.\]
    and if $\mu \leq \lceil c'\log{m} \rceil = k$, then 
    \[\Pr(|X-\mu| \geq 2 \lceil c'\log{m} \rceil) = \Pr(|X-\mu| \geq 2k) \leq 8\left(\frac{k\mu + k^2}{4k^2} \right)^{k/2} \leq 8\left(\frac{1}{4}\right)^{k/2} \leq \frac{1}{m^c}.\]
\end{proof}

\smallskip
\noindent \textbf{Useful Assumptions.} In some of our algorithms, it is convenient to assume some properties on the value of $n$ (e.g. that $n$ is divisible by some number, $n$ is a power of $2$, or that $\sqrt{n}$ is an integer). The following lemma shows that this can be done up to a constant loss in the $\alpha$ parameter.

\begin{lemma}
\label{lem:change_n_value}
Let $n/2 \leq n' \leq n$ be an integer. If there exists a protocol $P'$ that solves $\AllToAllComm$ in the $n'$-clique for the $\alpha$-ABD setting (resp. $\alpha$-NBD setting) within $O(T)$ rounds, then there exists a protocol $P$ that solves $\AllToAllComm$ in the $n$-clique in the $(\alpha/2)$-ABD (resp. $(\alpha/2)$-NBD setting) setting within $O(T)$ rounds. The local computation time and usage of randomness of $P$ only increase by a constant factor compared to $P'$.
\end{lemma}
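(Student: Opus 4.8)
The plan is to reduce a single $n$-clique instance of $\AllToAllComm$ to a constant number of $n'$-clique instances, solved one after another by fresh invocations of $P'$. Since $n'\ge n/2$, we can partition $V$ into $K=O(1)$ blocks $B_1,\dots,B_K$ each of size at most $n'/2$ (possible up to negligible rounding because $n'\ge n/2$; the handful of very small values of $n$ are trivial). For every $a\le b$ let $S_{a,b}$ be an arbitrary size-$n'$ subset of $V$ containing $B_a\cup B_b$ — well-defined since $|B_a\cup B_b|\le n'\le n$ — obtained by padding $B_a\cup B_b$ with further vertices. Then $\mathcal{S}=\{S_{a,b}\}_{a\le b}$ has $\binom{K}{2}+K=O(1)$ members, and since $V\times V=\bigcup_{a,b}(B_a\times B_b)$ and $B_a\cup B_b\subseteq S_{a,b}$, every pair $\{u,v\}\subseteq V$ is contained in some member of $\mathcal{S}$.

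The protocol $P$ runs in $|\mathcal{S}|$ phases, one per $S\in\mathcal{S}$. In the phase for $S$, the $n'$ nodes of $S$ execute $P'$ on the clique they induce, with node $s\in S$ using $\{m_{s,s'}\}_{s'\in S}$ as its input to the $\AllToAllComm$ instance on $S$; this input is available locally, since $s$ is given all of $\{m_{s,v}\}_{v\in V}\supseteq\{m_{s,s'}\}_{s'\in S}$ in the $n$-clique instance. Nodes outside $S$ are idle during that phase. Upon completion, every $s'\in S$ has learned $\{m_{s,s'}\}_{s\in S}$ (conditioned on the phase succeeding — see below). After all phases, each $v\in V$ knows $m_{u,v}$ for all $u$: the pair $\{u,v\}$ lies in some $S_{a,b}$, so $v$ learned $m_{u,v}$ during that phase. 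The round complexity is $|\mathcal{S}|\cdot O(T)=O(T)$, and the local computation time and randomness usage of $P$ are those of $P'$ scaled by the constant $|\mathcal{S}|$.

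It remains to argue that each phase succeeds. Consider the phase for $S$, occupying real rounds $i_0+1,\dots,i_0+T'$ with $T'=O(T)$, against an $(\alpha/2)$-ABD adversary with corrupted sets $F_1,F_2,\dots$. From the internal viewpoint of the $P'$ execution on $S$, the corrupted edges in its $j$-th round form $F_{i_0+j}\cap E(S)$. Each $s\in S$ has at most $\deg(F_{i_0+j})\le(\alpha/2)n$ incident corrupted edges overall, hence at most $(\alpha/2)n$ incident corrupted edges inside $S$; and $(\alpha/2)n\le\alpha n'$ precisely because $n'\ge n/2$. Thus the adversary induced on the $n'$-clique $S$ is a valid $\alpha$-ABD adversary, so $P'$'s guarantee applies. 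In the non-adaptive setting the sets $F_i$ are fixed at the start of $P$, a fortiori before the phase for $S$ begins, so the induced adversary on $S$ is non-adaptive as well (the contents of corrupted messages may still be chosen adaptively, as the model allows); an adaptive adversary only helps $P'$ in the adaptive setting. If $P'$ is randomized, each phase fails with probability at most $1/\poly(n)$, and a union bound over the $O(1)$ phases keeps the overall success probability high (for any desired polynomial).

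The only step needing genuine care is the covering claim and its attendant off-by-small-constant bookkeeping — that $O(1)$ subsets of size $n'\ge n/2$ cover $V\times V$, that the chosen block size yields $|B_a\cup B_b|\le n'$, and that $(\alpha/2)n\le\alpha n'$ — which is exactly where the factor-$2$ loss in $\alpha$ enters. Everything else is a routine phase-by-phase simulation; the one point worth flagging is that, because each node already holds the input it needs for every phase, no communication is required between phases, so in particular there is no expensive gather/forwarding step (which is what rules out the more naive ``merge each extra node into a proxy'' approach).
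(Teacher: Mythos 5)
Your proof is correct and takes essentially the same approach as the paper: cover all pairs $\{u,v\}$ by $O(1)$ subsets of size $n'$ obtained by padding pairwise unions of constant-size blocks of $V$, run $P'$ on each such subset, and observe that the induced per-node corrupted degree inside any such subset is at most $(\alpha/2)n \le \alpha n'$. The only differences are cosmetic --- the paper fixes five consecutive blocks of size $\lfloor n/5\rfloor$ and takes the $\binom{5}{2}=10$ padded pairwise unions, while you parameterize with $K$ blocks --- plus you spell out some bookkeeping (sequential execution, preservation of (non-)adaptivity of the induced adversary, and the union bound in the randomized case) that the paper leaves implicit.
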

\begin{proof}
    First, we notice that there exists sets $V_1,\dots,V_{10} \subseteq V$ of size $|V_i| = n'$, such that any pair $u,v \in V$ is contained by at least one $V_i$. Indeed, let $$S_1 = \{v_1,\dots,v_{\lfloor n/5 \rfloor}\}, S_2 = \{v_{\lfloor n/5 \rfloor+1},\dots,v_{\lfloor 2n/5 \rfloor}\},\dots,S_5 = \{v_{\lfloor 4n/5 \rfloor+1},\dots,v_{n}\}.$$ 
    
    For each $1 \leq j < k \leq 5$ let $U_{j,k} \subseteq V$ be an arbitrary set of $n'-|S_j \cup S_k|$ vertices such that $U_{j,k} \cap (S_j \cup S_k) = \emptyset$. Assign each $i \in [10]$ a unique pair of indexes $1 \leq j_i < k_i \leq 5$, and set $V_i = S_{j_i} \cup S_{k_i} \cup U_{j_i,k_i}$. By construction, $|V_i| = n'$, and on the other hand, since $S_1,\dots,S_5$ are a partition of $V$, and for any pair of sets in the partition there exists a set $V_i$ containing them, each pair $u,v \in V$ is contained in some $V_i$. 
    
     The network runs $P'$ on each $V_i$ (where we set the input of each node $u \in V_i$ as $\{m_{u,v}\}_{v \in V_i}$). We note that in the $(\alpha/2)$-ABD (resp. $(\alpha/2)$-NBD setting) setting, in any round $j$, $\deg_{F_j}(u) \leq \alpha n/2 \leq \alpha n'$ for each $u \in V_i$. Hence, the algorithm $P'$ terminates correctly in this setting (w.h.p. if randomized). Recall each $u,v \in V$ is contained in some $V_i$, hence $v$ receives $m_{u,v}$ in the application of $P'$ on $V_i$.
\end{proof}

Finally, we show that any $r$-round algorithm for the $\AllToAllComm$ problem with $B=1$ provides an $r$-round algorithm for any $B' \in \{1,\dots,\poly{n}\}$.

\begin{lemma}[Informal]
\label{lem:bandwidth_reduction}
Let $B' \in \{1,\dots,\poly{n}\}$. Given an instance of the $\AllToAllComm$ problem with each message $m_{u,v}$ has $B=1$ bits, and let $\Pi$ be an $r$-round algorithm for solving this problem using $B$-bit messages in the $\alpha$-BD setting, w.h.p. Then, there is an algorithm $\Pi'$ for solving an instance of $\AllToAllComm$ problem with each message $m_{u,v}$ has $B'$ bits, such that $\Pi'$ runs in $r$ many $B'$-Congested Clique rounds, and has the same resilience as that of $\Pi$.
\end{lemma}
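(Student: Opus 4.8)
The plan is a direct parallelization: I would decompose each $B'$-bit input message into its $B'$ bits and run $B'$ independent copies of $\Pi$ in lockstep, with the $j$-th copy responsible for the $j$-th bit of every message. Concretely, write $m_{u,v}=m_{u,v}^{(1)}\circ\cdots\circ m_{u,v}^{(B')}$ with each $m_{u,v}^{(j)}\in\{0,1\}$, and for $j\in[B']$ let $\Pi_j$ be the instance of $\Pi$ in which node $u$'s input is $\{m_{u,v}^{(j)}\}_{v\in V}$ (each $\Pi_j$ drawing its own fresh randomness when $\Pi$ is randomized). The protocol $\Pi'$ runs all $B'$ copies simultaneously: in round $i$, the single bit that $\Pi_j$ would send over an edge $(u,v)$ is placed in the $j$-th coordinate of the $B'$-bit message $\Pi'$ sends over $(u,v)$. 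Since $\Pi$ has bandwidth $B=1$ and runs for $r$ rounds, this exactly fits $r$ rounds of the $B'$-Congested Clique; at termination each $v$ recovers $m_{u,v}$ for all $u$ by concatenating the bits $m_{u,v}^{(j)}$ output by the $\Pi_j$'s.

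The key point to argue is resilience: a single $\alpha$-BD adversary against $\Pi'$ must induce, on each individual copy $\Pi_j$, a legitimate $\alpha$-BD adversary against $\Pi$. I would fix $j$ and argue by reduction. Suppose $\Pi_j$ (embedded in $\Pi'$) fails --- some $v$ does not learn some $m_{u,v}^{(j)}$ --- with probability exceeding the failure probability of $\Pi$. Then I construct a standalone adversary $\mathcal{A}_j$ for $\Pi$ that internally simulates the other $B'-1$ copies of $\Pi$ (with fresh randomness) together with the $\Pi'$-adversary; observing the true round-$i$ messages of $\Pi_j$ (as permitted in the rushing adaptive model), $\mathcal{A}_j$ feeds them into this internal simulation, reads off the set $F_i$ and the $j$-th coordinate the $\Pi'$-adversary writes on each edge of $F_i$, and applies precisely that corruption to $\Pi_j$. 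Because the $\Pi'$-adversary respects $\deg(F_i)\le\alpha n$, so does $\mathcal{A}_j$, hence $\mathcal{A}_j$ is a valid $\alpha$-ABD adversary (and in the non-adaptive case the identical construction with $F_1,\dots,F_r$ committed in advance yields a valid $\alpha$-NBD adversary). The distribution of a standalone execution of $\Pi$ against $\mathcal{A}_j$ coincides with that of $\Pi_j$ inside $\Pi'$, contradicting the assumed resilience of $\Pi$. Taking a union bound over the $B'\le\poly(n)$ copies --- after first boosting the success probability of $\Pi$ to, say, $1-1/n^{c+2}$, which inflates its randomness and running time by only constant factors --- shows $\Pi'$ succeeds with high probability, with the same $\alpha$ as $\Pi$; its local computation and randomness grow by the factor $B'=\poly(n)$.

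The main obstacle is exactly this resilience step: one must be sure that forcing all $B'$ copies to share a single corrupted edge set $F_i$ per round, and giving the per-copy adversary access to the other copies' transcripts, neither breaks the $\alpha$-BD degree budget nor pushes the effective per-copy adversary outside the class for which $\Pi$ is guaranteed correct. The simulation argument above resolves both: any action the $\Pi'$-adversary takes on $\Pi_j$ is reproducible by a standalone $\alpha$-BD adversary that internally simulates "the rest of the world", so correctness of each $\Pi_j$ reduces to that of $\Pi$. Everything else --- splitting the messages, packing $B'$ bits per edge, and reassembling the outputs --- is routine bookkeeping and does not affect the round count.
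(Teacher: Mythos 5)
Your proposal is essentially the same approach as the paper's (split the $B'$-bit messages bitwise, run $B'$ parallel copies of $\Pi$, pack one bit per copy into each $B'$-bit Congested Clique message, and union-bound over the copies). The paper's own proof is terser and labeled informal: it simply asserts that each parallel application succeeds w.h.p.\ and therefore all do; you make explicit the two points it sweeps under the rug, namely (i) that the shared corrupted edge set $F_i$ respects the degree budget simultaneously for every copy, and (ii) the simulation argument showing that the combined adversary, which sees all copies' randomness and transcripts, still induces a legitimate $\alpha$-BD adversary against each standalone copy of $\Pi$ (using independence of the per-copy randomness and computational unboundedness of the adversary to resample the rest of the world), together with the boosting needed so the union bound over $B'=\poly(n)$ copies closes. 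These are correct and genuinely useful additions rather than a different route.
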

\begin{proof}
An instance of the $\AllToAllComm$ problem with each message $m_{u,v}$ has $B'$ bits can be viewed as $B'$ independent instances of an  $\AllToAllComm$ problem with $B=1$, where instance $i$ is restricted to the $i^{th}$ bit in the $m_{u,v}$ messages for every $u,v$. We can then run protocol $\Pi$ in parallel for each of the $B'$ instances. Since a single application of $\Pi$ sends $1$ bit messages on each edge, we can run the $B'$ applications in $r$ many $B'$-Congested Clique rounds. Since each application succeeds w.h.p., all $B'$ applications succeed w.h.p., as well.
\end{proof}

    \section{Technical Overview}\label{sec:overview}

\noindent \textbf{The Challenge and Some Intuition.} Consider the $\AllToAllComm$ problem with bandwidth $B=1$ and an adaptive adversary, namely, $\alpha$-ABD. In this problem, each node $v$ is required to receive $1$-bit messages $m_{u,v}$ from $n$ nodes $u \in V$, hence a total of $n$ bits. The first intuitive step is to encode each $m_{u,v}$ message into $C(m_{u,v})$ for some ECC $C$ with constant rate and distance. This step comes ``for free" as the message size only increases by a constant factor and has the benefit of 
tolerating a constant fraction of errors. From that point on, we focus on sending the codewords $\{C(m_{u,v})\}_{u,v \in V}$ to their destinations. Since the adversary corrupts $\alpha n$ edges incident to each node $u$, a node $u$ will need to send its $C(m_{u,v})$ message to at least $c\cdot \alpha n$ distinct nodes $A_{(u,v)}$ for a large enough constant $c$, to make sure that the
received codeword, distributively held by $A_{(u,v)}$, is close enough to the original message $C(m_{u,v})$. Clearly, the issue with this approach is bandwidth limitation as each node needs to communicate with $\Omega(\alpha n)$ nodes for each of its $n$ input messages. Our starting observation is that our task becomes significantly simpler in a setting where each node $v$ is required to receive only one \emph{large} $\Theta(n)$-bit message $M_{u,v}$ from a \emph{single} source node $u=g(v)$ rather than from $n$ distinct sources (which is the typical situation when simulating a Congested Clique round). Let $g: V \to V$ be some bijection function where $g(u)\neq g(v)$ for every $u,v$. We observe a simple $2$-round solution for this relaxed task. 
Assume, w.l.o.g., that each $C(M_{u,v})$ has exactly $n$ bits. In such a case, each $g(v)$ sends $C(M_{g(v),v})$ in one round to the $n$ nodes, by sending the $i^{th}$ bit in $C(M_{g(v),v})$ to the $i^{th}$ node $v_i \in V$. In the second round, each $v_i$ sends its received bit to $v$. Since the adversary corrupts a total of $2\alpha n$ bits in each of these two rounds, by setting $\alpha=\delta_C/4$, $v$ can successfully decode the message and recover $M_{g(v),v}$. We do it in parallel for all nodes $v$. This suggests that the main challenge in our original routing instance is not in the number of bits that a node needs to receive, but rather in the large number of sources that hold this information. As one can only bound the number of corrupted edges incident to a given node (rather than for an arbitrary edge set) in a given round, our strategy is:
\begin{strategy}
    \center Concentrate Information on Few Key Nodes $K(u)$ for Each Node $u$.
\end{strategy}
\smallskip
Since there are a total of $n^2$ bits to be delivered through the network, this poses a great challenge and calls for various compression schemes. We start with formulating our strategy by characterizing the routing instances that are solvable in $O(1)$-rounds in the presence of $\alpha$-ABD adversary. This routing formulation will be our replacement to the well known Lenzen's routing \cite{L13} that has been heavily used in fault-free Congested Clique algorithms.

\smallskip
\noindent \textbf{The Super-Message Routing Procedure.} We assume that each node is a source and target of $k=O(1/\alpha)$ super-messages each of $O(\alpha n)$ bits. Let $\{m_j(u)\}_{j \in [k]}$ be the input messages of $u$ and let $\Target(m_j(u))\subseteq V$ be the targets of the message $m_j(u)$.\footnote{This stands in contrast to the Lenzen's routing formulation where each message has a single target.} We require that these target sets will be known to all nodes. In the end of this part, we compare our routing to the fault-free routing formulation by \cite{L13} and its generalizations by \cite{CFG20}.  We start again by encoding each message into $C(m_j(u))$ where $C$ is an ECC with constant rate and distance, e.g., the Justesen code from \Cref{lem:justesen_code}. Assume, w.l.o.g., that each encoded message $C(m_j(u))$ has $L \leq n$ bits. The idea is to locally define at each node $u$ a subset of ordered $L$-nodes $A_{(u,j)}$. The algorithm then has two rounds. In this first, $u$ distributes the $C(m_j(u))$ to the nodes in $A_{(u,j)}$ (by sending the $i^{th}$ bit $C(m_j(u))$ to the $i^{th}$ node in $A_{(u,j)}$). This is done in parallel for all $kn$ messages. In the second step, the nodes in $A_{(u,j)}$ send their received bits to each node $v$ in $\Target(m_{u,j})$. The main challenge is in defining the $A_{(u,j)}$ sets in a balanced way so that each node $u$ sends at most $1$ bit to any other node. A trivial approach is defining $A_{(u,j)}$ to be an arbitrary set of $L$ nodes, but this will lead to a round complexity of $O(1/\alpha)$ by the approach taken in the previous paragraph. As some of our algorithms uses non-constant values of $\alpha$ (in particular, our main algorithm of \Cref{thm:informal_adaptive_randomized}), our goal is to implement the routing in constant number of rounds (independent of $\alpha$).

Towards that goal we use the notion of cover-free sets (e.g. \cite{KRS99}), that have been used frequently in the context of distributed graph coloring algorithms \cite{L92}. Informally, an $(r,\delta)$-cover free family is a family $\mathcal{A}$ of sets where the union of every $r$ sets $A_1,\dots,A_r \in \mathcal{A}$ covers at most a $\delta$ fraction of the elements of any other set $A_0 \in \mathcal{A}$.
The existing constructions of $(r,\delta)$-cover free set do not provide the desired tradeoff between $r,\delta$. We therefore provide a construction\footnote{Our construction is a simple adaptation of a randomized construction in \cite{KRS99} for $(r,\delta)$-cover free sets. In addition to adapting it to the new variant, we also derandomize the construction using a deterministic LLL algorithm.} for a weaker variant of $(r,\delta)$-cover free sets that exploits the fact that the number of constraints in our case is bounded by $2kn$, corresponding to the total number of input and output messages of each node. At the end, our computed $kn$ sets $A_{(u,j)}$ for $j \in [k]$ and $u \in V$ might still admit a large overlap, and hence we should use them with care. 
Specifically, our algorithm explicitly avoids congestion (i.e., omit the overlap) by ensuring that $u$ sends its bit to some $w \in A_{(u,j)}$ only if $w$ does not appears in any other $A_{(u,j')}$ for $j' \neq j$. (Similar constraints are employed on the second step where the received bits are sent to their designated targets). We use the properties of the cover-free sets to show that for every codeword $C(m_j(u))$, there are sufficiently many nodes in $A_{(u,j)}$ that correctly receive their designated bit in $C(m_j(u))$, and that every $v \in \Target(m_j(u))$ receive sufficiently many of these correct bits from $A_{(u,j)}$.

\emph{Comparison to Lenzen's Routing and its Generalizations.} In the classical Lenzen's routing scheme \cite{L13} every node has a total of $nB$ messages where each $B$-message has one designated target. The routing can be implemented in $O(1)$ many $B$-Congest Clique rounds provided that each node is the source and target of at most $n$ messages. A generalization proposed by Censor-Hillel et al.~\cite{CFG20} allows each $B$-bit message to have multiple targets provided that each node is the source and target of at most $O(n)$ messages, and assuming the source and targets of all messages are known to all (See \cite{CFG20} Lemma 9). Our routing scheme can be seen as the analogue of the above to the $\alpha$-BD adversarial setting, with some restrictions compared to \cite{CFG20}: for $k = O(1/\alpha)$, each node can be the source and target of at most $k$ many $O(n/k)$-bit messages. Additionally, the source and targets are known to all nodes, unlike \cite{L13}.

\smallskip
\noindent \textbf{Towards an Adaptive $O(1)$-Round Randomized Algorithm with $\alpha=\widehat{O}(1)$.} Our key result is a $O(1)$-round algorithm for the $\AllToAllComm$ problem in the presence of $\alpha$-ABD setting with $\alpha=1/n^{o(1)}$. Before explaining our approach, we revisit two relaxed settings for the problem and show why their solutions cannot be extended to our setting. The first, handled by the prior work of \cite{FP23} supports a considerably smaller number of only $\widetilde{\Theta}(n)$ edge corruptions, compared to the almost quadratic number of edges in our setting. As we show, the approach of \cite{FP23} fails already for $\alpha=1/n$, i.e., where the adversary controls a matching (one distinct edge per node). 
Moreover, the algorithm of \cite{FP23} runs in $\poly\log n$ rounds while we aim for constant round complexity. The second setting that we consider a warm-up in this paper, supports $\alpha=\Theta(1)$ but only works for non-adaptive adversaries. 

\smallskip
\noindent \textbf{Prior Work \cite{FP23}: Adaptive Adversary with Total $\Theta(n)$ Corruptions and $\poly\log n$ Rounds.}
The most relevant prior work is that of \cite{FP23}.\footnote{We remark that their \cite{FP23}'s techniques are aimed towards a wider class of topologies. Here we present their approach only through the lens of the Congested Clique.} Their algorithm works as follows: First, each node $u$ sends the message $m_{u,v}$ to $v$, which receives a potentially corrupted message $\widetilde{m}_{u,v}$. Their goal becomes to detect all messages such that $\widetilde{m}_{u,v} \neq m_{u,v}$ and correct them. To do so, they partition the $n$-clique into $n$ (nearly disjoint\footnote{Each edge appears in exactly two trees.}) spanning trees, where the $i^{th}$ tree is taken to be all edges incident to $v_i$. Since the adversary controls only $cn$ edges for some small constant $c$, a majority of these $n$ spanning trees are reliable in a given round.
They then attempt to aggregate information on each tree to discover all messages $\widetilde{m}_{u,v} \neq m_{u,v}$, using sketching based techniques, and specifically sparse recovery sketches, on which we elaborate later on. 
Using these sketches, each node $v$ is required to receive $\poly\log n$ bits from which it can deduce its corrupted messages. 

While the sketching technique of \cite{FP23} is very useful in our $\alpha$-ABD setting, our general approach is quite different due to the following observation. For \cite{FP23} it was critical to have a majority of reliable spanning trees in a given round. This no longer holds in our setting, even for $\alpha=1/n$ as the adversary can still corrupt an edge in each tree \emph{simultaneously} in a given round. We therefore must take a different approach for aggregating the sketch information, as explained later on.

\smallskip
\noindent \textbf{This Work (Warm-Up): Non-Adaptive Adversary with $\alpha=\Theta(1)$ and $O(1)$ Rounds.}
This algorithm only works for $B$-Congested Clique for $B=\Theta(\log n)$, which is the standard bandwidth assumption for this model. 
The algorithm is simple, and exemplifies the gap in adversary strength between adaptive and non-adaptive settings.

Consider the following simple sub-procedure: For each $v \in V$, send \emph{all} messages $\{m_{u,v}\}_{u \in V}$ addressed to $v$ to a single random node $v'$. Then, using the resilient super-message routing procedure, send the $n$ messages received by $v'$ to $v$. Since the adversary is oblivious to randomness, one can easily show that each bit message $m \in \{m_{u,v}\}_{u \in V}$ arrives uncorrupted to $v$ with probability at least $1-\alpha$.  Given this sub-procedure, we show how to solve the $\AllToAllComm$ problem. We assume that $B=\Theta(\log{n})$, therefore we can (a) encode each message using an ECC with constant rate (of Lemma \ref{lem:justesen_code}), and (b) split each resulting codeword into $\Theta(\log{n})$ parts, each of one bit. We then apply the sub-procedure above for each part independently. Consequently, each received codeword $\widetilde{C}(m_{u,v})$ for every $u \in V$, has $O(\alpha)$ fraction of corruptions w.h.p., and thus $v$ can recover the messages $\{m_{u,v}\}_{v \in V}$. 

We remark that this approach fails completely for an adaptive adversary: if the adversary is aware of all randomness generated, we cannot elude it by sending the information through a random edge. We also remark that randomization is paramount: for deterministic algorithms, it is easy to see that non-adaptive and adaptive adversaries are equivalent (since in both settings the adversary knows the initial state, then in a deterministic setting it can deduce the states of all nodes at all times).

\smallskip
\noindent \textbf{Finally: Adaptive Adversary with $\alpha=\widehat{O}(1)$ and $O(1)$ Rounds.}
Let $C$ be a non-adaptive LDC code with constant distance and rate and with query complexity of $q=n^{o(1)}$ as obtained by \Cref{lem:LDC}
To ease presentation, we first show an algorithm that runs in $O(q)$ rounds. Then, we integrate also the tool of sparse recovery sketches to reduce the round complexity down to $O(1)$.

\smallskip
\noindent \textbf{Take I: $n^{o(1)}$ Rounds.} 
Initially, each node $u$ has a $1$-bit message $m_{u,v}$ for each node $v$ in the network. The node $u$ first concatenates all its $\{m_{u,v}\}_{v \in V}$ into an $n$-bit message $M(u,V)$ where $M(u,V)[i] = m_{u,v_i}$, and encodes it into a codeword $C(M(u,V))$ with $O(n)$ bits. 
$u$ partitions the codeword $C(M(u,V))$ into $n$ parts each of size $O(1)$ bits, and sends the $j^{th}$ part to each node $v_j$ in the network (where some parts may be corrupted). This can be done in $O(1)$ rounds. 
The goal of each node $v_i$ is to recover $M(u,V)[i] = m_{u,v_i}$ for all $u \in V$. 

In order for $v_i$ to recover $M(u,V)[i]$ for some $u$, it should ``query'' $q = n^{o(1)}$ random nodes by using our resilient routing scheme, obtain the $q$ nodes' parts of $C(M(u,V))$, and recover $M(u,V)[i]$. To be able to recover $M(u,V)[i]$ for all $u \in V$ \emph{simultaneously}, we use the non-adaptivity of the LDC $C$. The latter property implies that the identity of the $q$ random nodes to be queried depend only on the identity of $v_i$. Hence, $v_i$ needs to query the same $q$ nodes for all $u \in V$. This exactly fits our strategy of concentrating the information of $v$ on a small set of $q$ nodes.  We can use our super-messages routing scheme to send all codeword parts of the $q$ nodes to $v_i$, allowing it to obtain all parts and recover $\{m_{u,v_i}\}_{u \in V}$. In this process, each node learns $O(nq)$ bits, and indeed it would take $O(q) = n^{o(1)}$ rounds to implement.

\smallskip
\noindent \textbf{Take II: $O(1)$ Rounds.} In the previous algorithm each node was required to learn $n$ bits of information, and using LDCs, for each bit it had to query $q$ bits of the encoded information, leading to a total of $qn$ bits which requires $q$ rounds. To improve the round complexity, our goal is to reduce the total number of bits that a node should recover from LDC codewords to $\widetilde{O}(\alpha n)=o(n)$. Since for each such bit, $u$ should query $q$ positions in the codeword, it should receive a total of $\widetilde{O}(q\alpha n)=O(n)$ bits, where the last equality follows by setting $\alpha=1/(q \polylog n)$. 

For that purpose, we use $k$-sparse recovery sketches. The protocol starts by letting all nodes exchange their $m_{u,v}$ messages. Letting $\widetilde{m}_{u,v}$ be the received message at each node $v$, we note that there are at most $\alpha n$ corrupted messages with $\widetilde{m}_{u,v}\neq m_{u,v}$ per node $v$. The goal of each node is to learn which of its received messages are corrupted along with their correction.

We take the following approach. Let $P = \{P_1,\dots,P_{\alpha n}\}$ be a random partition of the nodes where each set $P_j$ is of size $O(1/\alpha)$. This random partition is computed after the execution of the first round, and therefore by Chernoff it holds that the number of corrupted messages that $v$ received from the nodes in $P_j$ is at most $O(\log n)$, w.h.p. I.e., $|\{\widetilde{m}_{u,v} ~\mid~ u \in P_j, \widetilde{m}_{u,v}\neq m_{u,v}\}|=O(\log n)$. 

This allows us to use $k$-sparse recovery sketch of \Cref{lem:k_sparse_recover} with parameter of $k=O(\log n)$ bits (see Lemma \ref{lem:sparse}). For each $P_j \in P$ and $v \in V$, the nodes in $P_j$ compute a $k$-sparse recovery sketch $\Sk(P_j,v)$ obtained by adding to the sketch the (uncorrupted) $M(P_j,\{v\})$ messages. Each sketch $\Sk(P_j,v)$ has a total of $O(k\log^2 n)=\widetilde{O}(1)$ bits. By using carefully the super-message routing procedure, the algorithm 
encodes the $\{\Sk(P_j,v)\}_{v \in V, j \in [\alpha n]}$ sketches using the LDC code and this encoded information is held in a distributed manner by all nodes in $V$. To recover its designated $\{m_{u,v}\}_{u \in V}$ messages, it is sufficient for each $v$ to learn its $\alpha n$ sketches $\{\Sk(P_j,v)\}_{j \in [\alpha n]}$. For each bit in that sketch it needs to query $q$ nodes, hence $v$ is required to obtain $\widetilde{O}(q \alpha n)=o(n)$ bits from which it can deduce the sketch $\{\Sk(P_j,v)\}_{j \in [\alpha n]}$ and hence its messages $\{m_{u,v}\}_{u \in V}$. We implement each step by a constant number of super-message routing instances, which can be completed in a total of $O(1)$ rounds.

\smallskip
\noindent \textbf{Deterministic Algorithms.}
So far, we presented randomized solutions. We next describe two deterministic algorithms for different regimes of $\alpha$. 

\smallskip 
\noindent \emph{(I) $\alpha=O(1)$ with $O(\log{n})$ Rounds.} 
For $\alpha \leq c$ for some small constant $c \in (0,1)$, we show a deterministic $O(\log{n})$ round algorithm for $\AllToAllComm$. By Lemma \ref{lem:change_n_value}, we may assume that $n$ is a power of two. Also let the nodes' IDs be numbered $\{0,\dots,n-1\}$.

On a high level, our protocol has a relatively simple communication pattern: We have $\log{n}$ iterations. At each iteration $i$, we match each node $u$ with a node $u'$, where $u$ and $u'$ are matched if and only if their ids agree on all bits except $i^{th}$ bit. Each matched pair exchanges messages with each other, and no other node, using our resilient routing scheme.

Specifically, at the beginning of iteration $i$, each node $u$ holds a set of messages $M_i(u)$, where initially, $M_1(u) = M(\{u\},V)$. It sorts the messages by target-ID in descending order, and defines $M^-_i(u)$ and $M^+_i(u)$ as the lower and upper half of $M_i(u)$ according to the ordering, respectively. When $u$ and $u'$ exchange messages, one node receives $M^-_i(u),M^-_i(u')$, and one node receives $M^+_i(u),M^+_i(u')$ (according to which has the higher ID). If node $u$ receives $M^-_i(u),M^-_i(u')$, it sets $M_{i+1}(u) = M^-_i(u) \cup M^-_i(u')$ and otherwise, $M_{i+1}(u) = M^+_i(u) \cup M^+_i(u')$.

We are able to characterize the message set $M_i(u)$ that node $u$ holds at the beginning of iteration $i$ in the following manner. 
let $P(u,i)$ is the set of nodes whose IDs agree with $\ID(u)$ on the first $i$ bits, and $S(u,i)$ the set of nodes that agree on the last $\log n-i+1$ bits. We prove by induction on $i$ that $M_i(u)=M(S(u,i),P(u,i))$. Hence, at the end of the last iteration, each node $u$ holds $M_{\log{n}+1}(u) = M(V,\{u\})$, and may output this set.  

\smallskip
\noindent\emph{(II) $\alpha = \Theta(1/\sqrt{n})$ with $O(1)$ Rounds.} 
The final algorithm is a simple application of the resilient super-message routing scheme. A very similar construction is provided in \cite{ABEGH19} (Theorem 1, therein) for the stochastic interactive coding setting, which is quite different than the Byznatine setting considered in this work.  For $\alpha = O(1/\sqrt{n})$, the scheme allows each node to safely send and receive $\Theta(\sqrt{n})$ super-messages of size $\Theta(\sqrt{n})$ each. For simplicity, assume $\sqrt{n}$ is an integer. We split $V$ into $\sqrt{n}$ equal sized sets $S_1,\dots,S_{\sqrt{n}}$. The algorithm has two steps: in the first step, we make sure that each $S_i$ collectively holds $M(S_i,V)$ (i.e., all $m_{u,v}$ messages from $u \in S_i$ to $v \in V$), by having a node $S_i[j]$ learn the message set $M(S_i,S_j)$ using the routing scheme. In the second step, we have each node $v$ learns $M(V,\{v\})$ by learning each of $M(S_1,\{v\}),\ldots,M(S_{\sqrt{n}},\{v\})$ (i.e., $\sqrt{n}$ super-messages each of $\sqrt{n}$ bits) by another application of the routing scheme.

    \section{Resilient Super-Message Routing}
\label{sec:key_routing}

For two given integer parameters $k,\lambda$, an instance of the $\SuperMessagesRouting$ is described as follows: Each node $u$ is given as input $k$ messages $\Minput(u) = \{m_1(u),\dots,m_k(u)\}$, each of at most $\lambda$ bits, where $\lambda$ is possibly much larger than the bandwidth parameter $B$. Hereafter, a $\lambda$-bit message is referred to as a \emph{super-message}. Each super-message $m_j(u)$ is required to be sent to a list of targets $\Target(m_j(u)) \subseteq V$ such that each node is in the target list of at most $k$ super-messages. Importantly, all the nodes know the target lists of all the $kn$ super-messages, given by $\{\Target(m_j(u))\}_{u\in V, j \in [k]}$. Specifically, if we identify each super-message by the tuple $(u,j)$, each node knows the target list of each super-message $(u,j)$. The goal is for each node $u$ to learn the $k$ super-messages addressed to it, i.e., to output
 $$\Moutput(u) = \{m_j(v) \stt u \in \Target(m_j(v)), j \in [k], v \in V\},$$
and more specifically, for each $m_j(v) \in \Moutput(u)$, it outputs  $m^u_j(v) = m_j(v)$.

Our key end result is a deterministic algorithm for solving the $\SuperMessagesRouting$ problem with input parameter $k$ and $\lambda$ in the presence of an $\alpha$-ABD adversary for $\alpha\leq 1/(8 \cdot 10^4)$. 

\begin{theorem}
\label{thm:generalized_routing_main}
Let $\alpha \leq 1/(8 \cdot 10^4)$, and let $k = O(\min(1/\alpha,n/\log n))$ be an integer. For any bandwidth $B$, there is a deterministic protocol $\ExtMatchingTransmission$ that can solve the $\SuperMessagesRouting$ problem with parameters $k$ and $\lambda$ in the $\alpha$-ABD setting within $O(k\lambda/(Bn))$ rounds. 
\end{theorem}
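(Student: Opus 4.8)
The plan is to realize $\ExtMatchingTransmission$ as a two-round ``spread-and-deliver'' protocol built around an error-correcting code, whose key ingredient is a specially tailored cover-free family of subsets of $V$ that lets every node spread each of its codewords across the network while only ever sending one bit per edge. I would first handle $B=1$ and super-messages of a fixed length $\lambda_0$; the general statement follows by splitting each $\lambda$-bit super-message into $O(\lambda/\lambda_0+1)=O(k\lambda/(Bn)+1)$ blocks of at most $\lambda_0$ bits and running the base protocol once per block (for $B>1$, running $B$ of the one-bit streams in parallel in each round). In the base protocol each node first encodes each super-message $m_j(u)$ with the Justesen code $C$ of \Cref{lem:justesen_code}, obtaining a binary codeword $C(m_j(u))$ of a common length $L$, where I pick $L$ to be a sufficiently small constant times $n/k$ (so $L=\Theta(n/k)$) and set $\lambda_0 := \tau_C L$; the code has relative distance $\delta_C>1/10$. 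It then suffices that every target $v$ of $m_j(u)$ learn $C(m_j(u))$ on ``enough'' coordinates together with the list of the coordinates it is missing, since $v$ can then run an errors-and-erasures decoder for $C$ and recover $m_j(u)$ whenever $2e+s<\delta_C L$, where $e,s$ are, from $v$'s viewpoint, the numbers of corrupted and of missing coordinates of $C(m_j(u))$.

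The crux is the cover-free family. Since all nodes know all target lists, they can deterministically and consistently compute a family $\{A_{(u,j)}\}_{u\in V,\,j\in[k]}$ of size-$L$ subsets of $V$ with two low-coverage guarantees (set $\delta':=\delta_C/10$): (i) for every $(u,j)$, the union $\bigcup_{j'\ne j}A_{(u,j')}$ covers at most $\delta' L$ elements of $A_{(u,j)}$; and (ii) for every node $v$ and every super-message $(u,j)$ with $v\in\Target(m_j(u))$, the union of the sets $A_{(u',j')}$ over the other super-messages $(u',j')$ that also target $v$ covers at most $\delta' L$ elements of $A_{(u,j)}$. In both cases the covering union involves at most $k-1$ sets of the family --- in case (ii) because each node is a target of at most $k$ super-messages --- and there are at most $2kn$ constraints in total ($kn$ of type (i), and at most $k$ of type (ii) per node $v$). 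I would build the family by the probabilistic method: sample each $A_{(u,j)}$ as an independent uniformly random $L$-subset of $V$. For a fixed constraint the number of covered elements of the victim set is a sum over $w\in V$ of negatively associated indicators of mean at most $kL^2/n=\Theta(L)$; taking the constant in $L=\Theta(n/k)$ small enough pushes this mean below $\delta' L/2$, and since $L=\Theta(n/k)=\Omega(\log n)$ (using $k=O(n/\log n)$) a Chernoff-type tail bound makes each constraint fail with probability $n^{-\Omega(1)}$ with a tunable exponent. A union bound over the $\le 2kn$ constraints already shows existence; to make the construction \emph{deterministic} and polynomial-time (so that all nodes agree on the same family) I would instead run the Lov\'asz Local Lemma --- each bad event depends on at most $k$ of the samples, so the dependency degree is $\poly(n)$ --- together with one of its deterministic algorithmic versions. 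The point is that off-the-shelf cover-free families do not give the tradeoff we need among the ``$r$'' parameter ($k-1$), the coverage $\delta'$, the ground-set size $n$, and the set size $L=\Theta(n/k)$; we can afford the parameters we want precisely because only $2kn$ of the $\binom{kn}{k-1}$ possible unions need to be controlled. Finally put $A'_{(u,j)}:=A_{(u,j)}\setminus\bigcup_{j'\ne j}A_{(u,j')}$; guarantee (i) gives $|A'_{(u,j)}|\ge(1-\delta')L$, and for a fixed $u$ the sets $\{A'_{(u,j)}\}_j$ are pairwise disjoint.

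With the family in hand the protocol is two rounds. Round $1$ (\emph{spread}): for each $(u,j)$ and each coordinate $i\in[L]$ with $A_{(u,j)}[i]\in A'_{(u,j)}$, node $u$ sends the bit $C(m_j(u))[i]$ to node $A_{(u,j)}[i]$; by disjointness of the $A'_{(u,j)}$ over $j$ this is at most one bit per edge and at most $\sum_j|A_{(u,j)}|=kL\le n$ bits from each node, so a single $B=1$ round suffices, after which the node $w=A_{(u,j)}[i]$ holds a (possibly adversarially corrupted) copy of $C(m_j(u))[i]$. Round $2$ (\emph{deliver}): each node $w$, for each target $v$ of a super-message whose current bit it holds, forwards to $v$ the bit of the lexicographically smallest such super-message; again this is at most one bit per edge, so one round, and each $v$ receives at most $kL\le n$ bits. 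Using the known family, $v$ can, for every $(u,j)$ targeting it, determine exactly which coordinates of $C(m_j(u))$ were forwarded to it, mark the rest as erasures, assemble the received word, and decode.

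It remains to check that decoding succeeds. Fix a super-message $(u,j)$ and a target $v$. The erased coordinates are those $i$ with $A_{(u,j)}[i]\notin A'_{(u,j)}$ --- at most $\delta' L$ by guarantee (i) --- together with those whose holder forwarded a different super-message to $v$ --- at most $\delta' L$ by guarantee (ii), since such holders lie in $A_{(u,j)}\cap\bigcup A_{(u',j')}$ over the other super-messages targeting $v$ --- so $s\le 2\delta' L$. A forwarded coordinate is wrong only if the round-$1$ edge $u\to w$ or the round-$2$ edge $w\to v$ was corrupted; since $\deg(F_1),\deg(F_2)\le\alpha n$, at most $\alpha n$ edges at $u$ and at most $\alpha n$ edges at $v$ are corrupted, so $e\le 2\alpha n$. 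Hence $2e+s\le 4\alpha n+2\delta' L$, and $v$ decodes correctly provided $4\alpha n<(\delta_C-2\delta')L$. Since $L=\Theta(n/k)=\Omega(\alpha n)$ when $k=O(1/\alpha)$, this inequality holds once the constant in $L=\Theta(n/k)$, the bound $\alpha\le 1/(8\cdot 10^4)$, and the constant in $k=O(1/\alpha)$ are fixed appropriately relative to $\delta_C>1/10$ and $\tau_C\le 1/200$; and the $k=O(n/\log n)$ cap is exactly what makes $L=\Omega(\log n)$, which the cover-free construction needs. I expect the cover-free family to be the one genuinely delicate step --- the density/coverage tradeoff is infeasible for general cover-free families and is salvaged only by restricting to the $2kn$ relevant unions and derandomizing a local-lemma argument; the encoding, the two communication rounds, the bandwidth accounting, and the errors-and-erasures decoding are then routine.
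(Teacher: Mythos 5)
Your proposal matches the paper's proof in all essentials: the same tailored cover-free family with only $\Theta(kn)$ relevant union constraints (built probabilistically and derandomized via a deterministic Lov\'asz Local Lemma algorithm), the same two-round spread-and-deliver routing with the ``send only when exclusive'' rule on both hops, encoding with a constant-rate/constant-distance (Justesen) code of block length $\Theta(n/k)$, and the general $\lambda,B$ case reduced to $O(k\lambda/(Bn))$ parallel runs of the base protocol. The only cosmetic divergence is that you decode with errors-and-erasures (exploiting that $v$ knows which coordinates are missing) and forward the lexicographically smallest conflicting bit, whereas the paper sets unsent or conflicting bits to $0$ and bounds pure Hamming distance; both yield the same parameter constraints.
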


In particular, for $\lambda = \Theta(nB/k)$, the routing procedure above terminates in $O(1)$ rounds. To prove the theorem, we prove the following lemma, from which the theorem follows as a corollary.

\begin{lemma}
\label{lem:ext_matching_transmission}
Let $c,\hat{c} > 0$ be some sufficiently small universal constants. Assume $B = 1$. Let $\alpha \leq 1/(8 \cdot 10^4)$, and let $k \leq \min(\lfloor 1/(8 \cdot 10^4 \alpha) \rfloor,\hat{c}n/\log{n})$ be an integer. There is a deterministic $O(1)$-round protocol $\ExtMatchingTransmission$ for solving the $\SuperMessagesRouting$ problem with parameters $k$ and $\lambda=cn/k$ in the $\alpha$-ABD setting.
\end{lemma}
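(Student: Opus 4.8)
The plan is to realize the informal two-round template from the technical overview, but to make it work for $k$ simultaneous super-messages per node we need the sets $A_{(u,j)}$ of ``relay'' nodes to be spread out enough that no node is overloaded. First I would fix the Justesen code $C$ from \Cref{lem:justesen_code} with rate $\tau_C \le 1/200$ and distance $\delta_C > 1/10$, and encode each input super-message $m_j(u)$ into $C(m_j(u))$; by choosing the constant $c$ small enough in $\lambda = cn/k$, each codeword has length exactly $L$ bits for some $L = \Theta(cn/k) \le n/(100)$, say. The algorithm has two communication phases. In Phase~1, node $u$ chooses for each $j \in [k]$ an ordered list $A_{(u,j)}$ of $L$ nodes and sends bit $C(m_j(u))[i]$ to the $i$-th node of $A_{(u,j)}$ --- but only along edges $(u,w)$ for which $w$ lies in exactly one of $u$'s lists $A_{(u,1)},\dots,A_{(u,k)}$, so that $u$ sends at most one bit to each neighbor. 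In Phase~2, each relay node $w$, which now holds a (possibly corrupted) received bit $b_{(u,j)}$ for various super-messages $(u,j)$ with $w \in A_{(u,j)}$, forwards $b_{(u,j)}$ to every target in $\Target(m_j(u))$, again restricting to edges that carry only one bit per endpoint pair. Finally each target $v$ collects, for every super-message $(u,j)$ addressed to it, the bits it received from the positions of $A_{(u,j)}$, treats the missing/dropped positions as erasures or arbitrary symbols, and runs $\Decode$ to recover $m_j(u)$.

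The heart of the argument, and the step I expect to be the main obstacle, is constructing the family $\{A_{(u,j)}\}$ so that (a) the total ``congestion'' constraints above only discard a small fraction of each list, and (b) after discarding those plus the $\alpha n$ adversarially corrupted positions, each target still receives strictly more than $L - \delta_C L/2$ correct bits of each codeword it needs. For (a) I would use the cover-free-family machinery alluded to in the overview: I want a family $\mathcal A$ of $L$-subsets of $V$ such that for any $r = O(k)$ sets in $\mathcal A$, their union covers at most a $\delta = O(1)$ (small) fraction of any other set --- but only against the $\le 2kn$ ``constraint sets'' actually arising as the $A_{(u,j)}$'s and the target lists, not against all subsets. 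A random construction (each element of each $A_{(u,j)}$ chosen i.i.d.\ uniformly, or via a $\Theta(\log n)$-wise independent hash à la \Cref{lem:kwise_indepedent_hash}) gives each pair $(u,w)$ probability $\approx L/n$ of $w \in A_{(u,j)}$; a union/Chernoff (or \Cref{lem:concentration_k_wise}) bound shows that with positive probability no node $w$ lies in more than, say, $2k L/n + O(\log n) = O(1)$ of the $A_{(u',j')}$'s over all $u',j'$ and that each $A_{(u,j)}$ overlaps the union of all other lists of $u$ in at most a $\delta$-fraction, and symmetrically for the Phase-2 direction; a deterministic LLL / Moser--Tardos-type argument (as the footnote promises) then derandomizes this, and since $k \le \hat c n/\log n$ the parameters are feasible.

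Granting such a family, the accounting is routine: for a fixed super-message $(u,j)$, at most a $\delta$-fraction of the $L$ positions of $A_{(u,j)}$ are killed by $u$'s own congestion rule in Phase~1; at most $\alpha n \le \alpha \cdot (100 L) = O(\alpha/c)\cdot L$ of the transmitted bits along $u$'s incident edges are corrupted by the adversary that round; in Phase~2 each relay $w$ forwards to $v$ along an edge that, by the same cover-free property applied to target lists, is congestion-free except on a $\delta$-fraction, and the adversary corrupts at most $\alpha n$ of the bits incident to $v$ that round. Summing, $v$ fails to get the correct value of at most $(2\delta + O(\alpha/c)) L$ positions of $C(m_j(u))$; choosing $\delta$ small, $c$ appropriately (this is what forces $\alpha \le 1/(8\cdot 10^4)$ and $k \le \lfloor 1/(8\cdot10^4\alpha)\rfloor$), and recalling $\delta_C > 1/10$, this is below $\delta_C L/2$, so $\Decode$ returns $m_j(u)$ exactly. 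Since each node is a target of at most $k$ super-messages and each super-message has $L \le n/100$ positions, Phase~2 also respects the one-bit-per-edge budget after the discard, so both phases run in $O(1)$ rounds with $B=1$; this establishes \Cref{lem:ext_matching_transmission}, and \Cref{thm:generalized_routing_main} follows by splitting each $\lambda$-bit super-message into $\lceil \lambda/(cn/k)\rceil$ chunks and using bandwidth $B$, for a total of $O(k\lambda/(Bn))$ rounds.
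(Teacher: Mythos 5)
Your proposal follows the paper's proof essentially verbatim: encode each super-message with the Justesen code, distribute the $L$ codeword bits to an ordered relay set $A_{(u,j)}$ drawn from a cover-free family restricted to the $O(kn)$ actually-relevant constraint sets ($\IN(u)$ and $\OUT(u)$), suppress any bit whose sender or forwarder would otherwise be overloaded ($\inload$/$\outload$ greater than $1$), and decode via a triangle-inequality accounting of Hamming errors. The LLL derandomization you invoke is exactly what the paper does in \Cref{sec:cover_free_lll_derandomization}.

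One arithmetic slip is worth flagging: you write that the adversary corrupts at most $\alpha n \le \alpha\cdot(100L)=O(\alpha/c)\cdot L$ positions. Since $L=\Theta(cn/k)$, we have $n=\Theta(kL/c)$, so the correct fraction of a fixed codeword that the $\alpha$-adversary can touch is $\alpha n / L = \Theta(\alpha k/c)$, not $\Theta(\alpha/c)$; the paper's \Cref{lem:secure_routing_parameter_calculation}(b) records this as $\alpha n \le (8/\delta)\alpha k\,|A|$. Dropping the $k$ would suggest the constraint only concerns $\alpha$, when in fact the decoding budget is what forces $\alpha k = O(1)$ (and hence $k\le \lfloor 1/(8\cdot 10^4\alpha)\rfloor$). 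You do state the right constraint in your conclusion, so the error is local, but the displayed inequality as written requires $n \le 100L$, which contradicts your own earlier assumption $L\le n/100$ except in the trivial case $k=O(1)$. A smaller nit: your side-claim that each node lies in at most ``$2kL/n+O(\log n)=O(1)$'' of the $A_{(u',j')}$'s should read $O(\log n)$, though that global bound is never actually used — only the per-constraint-set cover-free property is needed.
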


On a high level, our routing procedure has a two-round structure. Initially, each node $u$ encodes each of its input messages $m_j(u)$ using an error correction code $C$. In the first round, for every $j \in \{1,\ldots, k\}$, the node $u$ sends an $L$-bit encoded message $C(m_j(u))$ to a subset $A_{(u,j)}\subseteq V$ of $L$ nodes, ordered by the vertex-IDs, where the $i^{th}$ node of $A_{(u,j)}$ receives the $i^{th}$ bit of $C(m_j(u))$. In the second round, the nodes $A_{(u,j)}$ send their received bit to each $v \in \Target(m_j(u))$. The main challenge is in defining the set $A_{(u,j)}$. Towards that goal we use the notion of $(r,\delta)$-Cover-Free Sets which can be viewed as a stronger variant of the well-known $r$-cover-free sets, that have been used frequently in the context of distributed graph coloring \cite{L92}. 

\subsection{Cover-Free Sets}

We start by providing the definition of $(r,\delta)$-cover-free sets. 

\begin{definition}[$(r,\delta)$-Cover-Free Sets]
Let $N,r$ be integers, and let $0 < \delta < 1$. Let  $\mathcal{A} = \{A_1,A_2,\dots,A_m\} \subseteq 2^{[N]}$ for some $m \geq r+1$. Then $\mathcal{A}$ is called an $(r,\delta)$-covering set if for any $(r+1)$ distinct sets $A_{i_0},\dots,A_{i_{r}} \in \mathcal{A}$, it holds that $|A_{i_0} \setminus \bigcup_{j=1}^{r} A_{i_j}| \geq (1-\delta)|A_{i_0}|$. 
\end{definition}

The notion of $(r,\delta)$-cover-free sets is closely related to the standard notion of $r$-cover-free sets, where it is only required that no set in the family be fully covered by $r$ other sets.

The seminal work of \cite{KRS99} introduced $(r,\delta)$-cover-free sets and showed several constructions with various parameters. Unfortunately, none of these constructions obtain parameters sufficient for our purposes. We therefore define a slightly weakened variant of $(r,\delta)$-cover-free set, and show a construction of such a family, inspired by one of the constructions of \cite{KRS99}.

\begin{definition}[$(r,\delta)$-Cover-Free Sets w.r.t.~$H$]
    Let $N,r$ be integers, and let $0 < \delta < 1$. A set $\mathcal{A} = \{A_1,\dots,A_m\} \subseteq 2^{[N]}$ is called an $(r,\delta)$-covering set w.r.t. $H$ for a set $H \subseteq \mathcal[m]^{r+1}$, if for any $i_0 \in [m]$ and $\{i_0,\dots,i_{r}\} \in H$ it holds that $|A_{i_0} \setminus \cup_{j = 1}^{r} A_{i_j}| \geq (1-\delta)|A_{i_0}|$.
\end{definition}

For a family of sets $\mathcal{A} = \{A_1,\dots,A_m\}$, we say that the set size of $\cA$ is $x$ if $|A_i| = x$ for all $i \in [m]$. We remark that any family $\mathcal{A}$ is trivially a $(0,\delta)$-cover-free set for any $0 < \delta < 1$.

\begin{lemma}
    \label{lem:cover_free_proof}
    Let $N$ be a positive integer, and let $0 < \delta < 1$ be a constant. Let $c_1 \in (0,1),c_2 > 0$ be constants such that $c_1$ is sufficiently small compared to $c_2$. Let $r \geq 0$ be an integer such that $r+1 \leq c_1 \delta N/\log{N}$. Then for any integer $m \geq r+1$ and any $H \subseteq [m]^{r+1}$ of size $|H| = O(N^{c_2})$, there exists an $(r,\delta)$-cover-free set $\mathcal{A} \subseteq 2^{[N]}$ w.r.t. $H$ of size $|\mathcal{A}| = m$, and of set size $L = \lfloor \delta N/(4r+4) \rfloor$. Moreover, such a family can be computed sequentially deterministically in polynomial time.
\end{lemma}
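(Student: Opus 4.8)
The plan is to use the probabilistic method, specifically the Lovász Local Lemma, to establish existence, and then invoke a deterministic LLL algorithm to obtain the polynomial-time construction. Set $L = \lfloor \delta N/(4r+4)\rfloor$. For each $i \in [m]$, choose $A_i$ by picking $L$ elements of $[N]$ uniformly at random (say, independently with repetition and then padding, or more cleanly, let each element of $[N]$ be included in $A_i$ independently with probability slightly above $L/N$ and condition on the size — but the cleanest route is to sample each $A_i$ as a uniformly random $L$-subset, with the $m$ choices mutually independent across $i$). For a fixed constraint $(i_0,\dots,i_r) \in H$, consider the "bad event" $B_{(i_0,\dots,i_r)}$ that $|A_{i_0} \setminus \bigcup_{j=1}^r A_{i_j}| < (1-\delta)|A_{i_0}|$, i.e.\ that more than a $\delta$-fraction of $A_{i_0}$ is covered by the union of the other $r$ sets.

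First I would bound $\Pr[B_{(i_0,\dots,i_r)}]$. Fix $A_{i_0}$; for each element $x \in A_{i_0}$, the probability that $x \in A_{i_j}$ is at most $L/(N-L) \le 2L/N$ (or exactly $L/N$ in the i.i.d.\ model), so by a union bound over the $r$ sets, $\Pr[x \in \bigcup_{j\ge1} A_{i_j}] \le 2rL/N \le \delta/2$ by our choice of $L$. Conditioned on $A_{i_0}$, the events "$x$ is covered" for distinct $x \in A_{i_0}$ are negatively correlated (or independent in the i.i.d.\ model), so a Chernoff-type bound gives $\Pr[\text{more than }\delta L\text{ of them covered}] \le \exp(-c\delta L)$ for an absolute constant $c$. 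Plugging $L = \Theta(\delta N/r)$ and $r+1 \le c_1 \delta N/\log N$ yields $\delta L = \Omega(\delta^2 N/r) = \Omega(\log N / c_1)$, so this probability is at most $N^{-t}$ for a parameter $t$ we can make as large as we like by taking $c_1$ small relative to $c_2$; this does not depend on the fixed choice of $A_{i_0}$, so the bound holds unconditionally.

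Next I would set up the dependency structure for LLL. The event $B_{(i_0,\dots,i_r)}$ depends only on the random sets $A_{i_0},\dots,A_{i_r}$, hence it is mutually independent of all other bad events not sharing an index. Each index $i$ appears in at most $|H| = O(N^{c_2})$ constraints, so the dependency degree is $d \le (r+1)\cdot O(N^{c_2}) = O(N^{c_2+1})$, which is $\poly(N)$. The symmetric LLL condition $e \cdot p \cdot (d+1) \le 1$ then reads $e \cdot N^{-t} \cdot O(N^{c_2+1}) \le 1$, which holds once $t > c_2 + 2$, i.e.\ for $c_1$ small enough relative to $c_2$. This proves existence of a family $\mathcal{A}$ avoiding all bad events, which is exactly an $(r,\delta)$-cover-free set w.r.t.\ $H$ of size $m$ and set size $L$.

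Finally, for the algorithmic claim: since $p$ and $d$ are polynomially bounded and satisfy a polynomially-slack LLL condition, I would invoke a deterministic algorithmic Lovász Local Lemma (e.g.\ Moser–Tardos derandomized, or the Chandrasekaran–Goyal–Haeupler deterministic LLL), noting that the bad events are efficiently checkable (computing $|A_{i_0}\setminus\bigcup A_{i_j}|$ is trivial) and that the number of constraints is polynomial; this yields the family in $\poly(N)$ time. The main obstacle I anticipate is getting the quantitative chain $r+1 \le c_1\delta N/\log N \;\Rightarrow\; \delta L \gtrsim \log N \;\Rightarrow\; p \le N^{-t}$ with $t$ beating the dependency-degree exponent $c_2+1$ — i.e.\ correctly tracking how small $c_1$ must be as a function of $c_2$ and the absolute Chernoff constant — together with handling the $r=0$ edge case separately (where the statement is trivially true, as already noted). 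The sampling model (exact $L$-subsets vs.\ independent inclusion) is a minor technical nuisance: I would likely present it with independent inclusion at rate $2L/N$ and then pass to the subfamily event, or simply sample $L$-subsets and use negative association, whichever keeps the Chernoff step cleanest.
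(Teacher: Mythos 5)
Your existence argument is sound (modulo the usual Chernoff bookkeeping), but it is doing more work than necessary and, more importantly, it leaves the algorithmic half of the statement unaddressed in a way that is not merely a matter of citing the right black box.

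On the existence side: because the constraint set $H$ has size only $O(N^{c_2})$, a plain union bound over the at most $(r+1)|H| \le N|H| = O(N^{1+c_2})$ bad events already suffices once each bad event has probability $N^{-t}$ with $t > 1+c_2$; this is exactly what the paper does. Invoking LLL is harmless but not needed, since the number of bad events is polynomial. The paper also samples differently: rather than uniformly random $L$-subsets, it partitions $[N]$ into $L$ blocks of size $\lfloor (r+1)/\gamma\rfloor$ and takes one uniform element per block, which makes the per-block coverage indicators $I_1,\dots,I_L$ literally independent (not just negatively associated) and makes each $|A_i| = L$ exactly. Both models give essentially the same Chernoff bound, so for existence this is a stylistic difference.

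The real gap is the deterministic polynomial-time construction. You write that the bad events are ``efficiently checkable'' and that the number of constraints is polynomial, and then cite Moser--Tardos derandomizations / Chandrasekaran--Goyal--Haeupler. But efficient checkability of a bad event under a \emph{full} assignment is not what these results need. Deterministic LLL (and, equivalently, a method-of-conditional-expectations argument over the polynomially many bad events) requires a \emph{partial expectation oracle}: a polynomial-time procedure computing $\Pr[B \mid Y_{i_1}=\sigma_1,\dots]$ for a partial assignment of the underlying variables, or at least a pessimistic estimator for it. In your $L$-subset model it is unclear how to supply this: if $A_{i_0}$ is fixed and some of $A_{i_1},\dots,A_{i_r}$ are not, the coverage indicators $\{\,x\in\bigcup_j A_{i_j}\,\}_{x\in A_{i_0}}$ are mutually dependent (they share the same unfixed sets), and there is no closed form for the tail of their sum. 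The paper's block-partition sampling is chosen precisely to sidestep this: the indicator $I_j$ depends only on the variables $\{Y_{i,j}\}_i$ attached to block $j$, so $X=\sum_j I_j$ is a sum of independent Bernoullis (a Poisson binomial), whose conditional tail probability under any partial assignment can be computed exactly in polynomial time. That PEO is what makes Harris's deterministic LLL (the result the paper invokes) applicable. Without either adopting that sampling scheme or exhibiting a concrete conditional-probability (or pessimistic-estimator) computation in your model, the ``deterministic polynomial time'' clause of the lemma is not established.
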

In this section, we give a randomized construction of the claim above, and show an efficient derandomization of it in Appendix~\ref{sec:cover_free_lll_derandomization}. We remark that this simplified randomized construction is sufficient to imply \Cref{thm:generalized_routing_main}, though without the efficient local time computation guarantee (via a trivial brute-force derandomization).

\begin{proof}[Proof of randomized variant of \Cref{lem:cover_free_proof}]
 The construction is similar to the randomized construction of \cite{KRS99}. Let $\gamma = \delta/4$. Therefore, $L = \lfloor N\gamma/(r+1) \rfloor$. Partition $[N]$ into consecutive sets $S_1,\dots,S_L$ of size $\lfloor\frac{r+1}{\gamma} \rfloor$ each (ignoring the remaining elements). Each set $A_1,\dots,A_m$ is an i.i.d. random set which contains exactly one uniformly random element from each of $S_j$, i.e. each element is chosen with probability $1/\lfloor \frac{r+1}{\gamma} \rfloor \leq 2/(\frac{r+1}{\gamma}) = \frac{2\gamma}{r+1}$, where the first inequality follows due to the fact that for any $x \geq 1$, $x \leq 2\lfloor x \rfloor$. Indeed, the size of each set is $|A_i| = L$ for all $i \in [m]$.
    
For any fixed $r+1$ distinct sets $A_{i_0},\dots,A_{i_{r}}$, it holds by union bound that the unique element $a \in A_{i_0} \cap S_j$ is contained in any of $A_{i_1},\dots,A_{i_r}$ with probability at most $r \frac{2\gamma}{r+1} \leq 2\gamma = \delta/2$. Let $I_j(i_0,\dots,i_{r})$ be an indicator variable that this occurs. Then for $X(i_0,\dots,i_{r}) = \sum_{j=1}^{L} I_{j}(i_0,\dots,i_{r})$, it holds that $E(X(i_0,\dots,i_{r})) \leq (\delta/2) \cdot L $. We note that $I_1(i_0,\dots,i_{r}),\dots,I_{L}(i_0,\dots,i_{r})$ are independent random variables. By choice of $c_1$, we get that $L = \lfloor N\gamma/(r+1) \rfloor \geq c'\log{N}$, for some sufficiently large $c' > 0$, and in particular we assume $c' \geq 24+12c_2$. Therefore by Chernoff, 
$$\Pr(|A_{i_0} \setminus \bigcup_{j=1}^{r} A_{i_j}| < (1-\delta)|A_{i_0}|) = \Pr\left(X > \delta|A_{i_0}|\right) = \Pr\left(X > \delta L\right)  \leq e^{-c'\log{N}/12} \leq e^{-(2+c_2)\log{N}}\leq \frac{1}{N^2|H|}.$$ By union bound over all choices of $i_0 \in [m]$ and $\{i_0,\dots,i_{r}\} \in H$ (there are $(r+1)|H| \leq N|H|$ many such choices - one for each tuple in $H$ and choice of $i_0$ from the tuple), it holds that for any $i_0 \in [m],\{i_0,\dots,i_{r}\} \in H$, the sets $A_{i_0},\dots,A_{i_{r}}$ satisfy this condition with probability $1-\frac{1}{N}$.
\end{proof}

\subsection{Deterministic Algorithm for the Super-Message Routing}
Let $\IN(u)=\{(u,1), \ldots, (u,k)\}$ be the indices of the super-messages given as input to $u$, and $\OUT(u)=\{(v,j) \stt u \in \Target(m_j(v))\}$ be the indices of the super-messages that $u$ should receive. We then have that $|\IN(u)|,|\OUT(u)|\leq k$, and that all nodes know the $2n$ sets $\{\IN(u)\}_{u \in V} \cup \{\OUT(u)\}_{u \in V}$. 

\smallskip
\noindent \textbf{Round 0 (Local Computation).} Let $0 <\delta < 1$ be a constant sufficiently close to zero, that we fix in the analysis. The algorithm starts by a local computation of a $(k-1,\delta)$-cover-free set $\cA$ w.r.t. a collection $H$ of $2kn$ sets
each of size $\lfloor \frac{\delta n}{4k} \rfloor$, given by: 
\begin{equation}\label{eq:H}
H= \{\IN(u)\}_{u \in V} \cup \{\OUT(u)\}_{u \in V}.
\end{equation}
To do so, each node uses the following construction follows from \Cref{lem:cover_free_proof}.

\begin{lemma}
    \label{lem:cover_free_routing}
    Given $0 < \delta < 1$ and set $H$, every node can locally and deterministically compute a $(k-1,\delta)$-cover-free set $\cA = \{A_{(v,j)}\}_{v \in V,j \in [k]}$ w.r.t $H$ with set size $L=\lfloor \delta n/(4k) \rfloor$ and each $A_{(v,j)} \subseteq V$.  
\end{lemma}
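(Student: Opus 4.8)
The plan is to obtain this lemma as an essentially immediate corollary of \Cref{lem:cover_free_proof}, the only real content being (a) matching up the parameters and (b) observing that the construction there is deterministic and driven entirely by data that every node already holds, so no communication is needed.

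First I would set up the bookkeeping. Under the KT1 assumption we identify $V$ with $[n]$, and we identify the $kn$ super-message indices $(v,j)$, $v\in V$, $j\in[k]$, with $[kn]$; the family we seek is indexed by these, so we aim for $m=kn$. Next I would turn $H$ from \eqref{eq:H} into a legal input for \Cref{lem:cover_free_proof}, i.e.\ a subset of $[m]^{r+1}$ with $r=k-1$: each $\IN(u)=\{(u,1),\dots,(u,k)\}$ is already a $k$-tuple over $[kn]$, while each $\OUT(u)$ has at most $k$ entries, so I would pad it up to exactly $k$ entries by repeating one of its indices. Padding with a duplicate leaves the union $\bigcup_{j\ge 1}A_{i_j}$ appearing in the cover-free condition unchanged, so the guarantee for the padded $k$-tuple yields the guarantee for the original $\OUT(u)$. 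After this, $H$ is a collection of at most $2n$ tuples in $[kn]^{k}$.

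Then I would invoke \Cref{lem:cover_free_proof} with $N=n$, the given $\delta$, $r=k-1$ (so $r+1=k$ and $4r+4=4k$), $m=kn\ge r+1$, and $c_2=1$, which is legitimate since $|H|\le 2n = O(N^{c_2})$. Its hypothesis $r+1\le c_1\delta N/\log N$ reads $k\le c_1\delta n/\log n$, and this is ensured by the standing bound $k\le \hat c n/\log n$ in \Cref{lem:ext_matching_transmission} once we pick the universal constant $\hat c$ small enough that $\hat c\le c_1\delta$, with $c_1$ in turn small compared to $c_2=1$ (the order is: fix $\delta$, then $c_1$, then $\hat c$). \Cref{lem:cover_free_proof} then produces, by a deterministic polynomial-time procedure, a family $\cA\subseteq 2^{[n]}$ of size $kn$ --- which we relabel $\{A_{(v,j)}\}_{v\in V,j\in[k]}$ with each $A_{(v,j)}\subseteq V$ --- of set size $L=\lfloor \delta n/(4k)\rfloor$, that is $(k-1,\delta)$-cover-free with respect to $H$.

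Finally, for the ``every node can locally compute it'' part I would note that the construction of \Cref{lem:cover_free_proof} depends only on $N,\delta,r,m$ and $H$: the first four are global parameters, and $H$ is determined by $\{\IN(u)\}_{u}$ and $\{\OUT(u)\}_{u}$, which are known to all nodes (target lists $\Target(m_j(v))$ are global by assumption and node IDs are global under KT1). Hence every node runs the identical deterministic procedure on identical input and obtains the identical family. I do not expect a genuine obstacle here; the only things requiring a moment of care are the $\OUT$-tuple padding and checking that $\hat c$ may be chosen after $\delta$ so that $r+1\le c_1\delta N/\log N$ holds --- everything else is a direct instantiation.
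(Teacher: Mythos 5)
Your proposal is correct and follows essentially the same route as the paper: instantiate \Cref{lem:cover_free_proof} with $N=n$, $r=k-1$, $\delta$ given, $|H|=O(n)$, and use the KT1 identification of $V$ with $[n]$ to interpret the resulting sets as subsets of $V$. In fact you are somewhat more careful than the paper's one-line proof: you correctly take $m=kn$ (the paper writes $m=2nk$, an apparent slip, though it is harmless since any $m\ge r+1$ works), you explicitly handle the fact that $\OUT(u)$ may have fewer than $k$ entries by padding with a repeated index (which leaves the union in the cover-free condition unchanged), and you make explicit the order in which $\delta$, $c_1$, and the universal constant $\hat c$ from \Cref{lem:ext_matching_transmission} must be fixed so that $r+1\le c_1\delta N/\log N$ holds; the paper leaves these points implicit.
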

\begin{proof}
Recall that we assume $k \leq \hat{c}n/\log{n}$ for some small constant $\hat{c} > 0$. Therefore, by \Cref{lem:cover_free_proof} with parameters $N=n,r=k-1, m = 2nk$, there exists a  $(k-1,\delta)$-cover-free set $\cA = \{A_{(v,j)}\}_{v \in V,j \in [k]}$ with set size $\lfloor \delta n/(4k) \rfloor$. While according to \Cref{lem:cover_free_proof}, each set $A_i \in \cA$ is a subset of $[n]$, we instead assume that $A_i \subseteq V$ using the natural bijection between $[n]$ and $V$ that maps each $i \in [n]$ to $v_i \in V$.
\end{proof}

From this point on, we assume that all nodes agree on a  $(k-1,\delta)$-cover-free set $\cA$ w.r.t $H$. Let $C$ be the Justesen code (\Cref{lem:justesen_code}). 
Each node $v$ computes the codewords $C(m_1(v)),\dots,C(m_k(v))$. Recall that we assume that each super-message has $cn/k$ bits for sufficiently small $c > 0$. In particular we assume $c \leq 200c'$ where $c'$ is the constant such that $\cA$ has set size $c'n/k$.  Therefore, we may assume that each $C(m_j(v))$ has $|A_{(v,j)}|$ bits. 
Each vertex set $A_{(v,j)}$ is considered to be lexicographical ordered (e.g., in increasing order of the nodes IDs). Define:
$$\inload(v,w)=|\{ j \in [k] \stt w \in A_{(v,j)}\}| \mbox{~and~} \outload(w,v)=|\{ (u,j)\in \OUT(v) ~\mid~ w \in A_{(u,j)}\}|~.$$

That is, $\inload(v,w)$ bounds that total number of bits that $v$ needs to send $w$ when sending its input encoded messages to the $A_{(v,j)}$ sets. Similarly, $\outload(w,v)$ bounds that total number of bits that $v$ needs to receive from $w$ when receiving its encoded target messages.

Since the nodes are provided with the target sets $\{\Target(m_j(u))\}_{u \in V,j \in [k]}$ and with the cover-free set $\mathcal{A}$, each node $u$ can locally compute $\inload(v,w), \outload(w,v)$ for every $w, v \in V$.

\smallskip
\noindent \textbf{Round 1:} Each vertex $u$ applies the following procedure for each $j \in [k]$. Let $A_{(u,j)}=\{w_1,\ldots, w_L\}$. For every $\ell\in [L]$, $u$ sends the $\ell^{th}$ bit in $C(m_j(u))$ to node $w_\ell$ \textbf{only if} $\inload(u,w_\ell)=1$. For every $u,w \in V$, let $b_1(u,w)$ be the bit received at $w$ from node $u$ at the end of this round; if no bit is received, then set $b_1(u,w)=0$.

\smallskip
\noindent \textbf{Round 2:} Each vertex $w$ applies the following procedure for each of its (at most $n$) received bits 
 $b_1(u,w)$, provided that $\inload(u,w)=1$. Let $j \in [k]$ be the unique index such that $w \in A_{(u,j)}$ (since $\inload(u,w)=1$, such an index $j$ exists). Then $w$ sends the bit $b_1(u,w)$ to each vertex $v$ such that (i) $v \in \Target(m_j(u))$ and (ii) $|\outload(w,v)|=1$. For every $w, v \in V$, let $b_2(w,v)$ be the bit received at $v$ from node $w$. If no bit is received from $w$ at $v$, then set $b_2(w,v)=0$. 

\smallskip
\noindent \textbf{Output:} In this step, each vertex $v \in V$ locally computes its output to each of its $k$ super-messages, i.e. a set
$\{m^v_j(u) \mid (u,j)\in \OUT(v)\}$, where $m^v_j(u)$ corresponds to the message $m_j(u)$. To output the super-message $m^v_j(u)$, node $v$ performs the following. Let $A_{(u,j)}=\{w_1,\ldots, w_L\}$. Set $\widetilde{C}_v(m_j(u)) = (b_2(w_1,v),\dots,b_2(w_L,v))$ be the bits $v$ receive corresponding to the message $C(m_j(u))$. Set $m^v_j(u)=\Decode(\widetilde{C}_v(m_j(u)))$. Each node $v$ outputs the set $\Moutput(v) = \{m^v_j(u) \stt (u,j)\in \OUT(v)\}$.

\smallskip\noindent \textbf{Analysis.} For the remainder of the section, we set $\delta = 1/50$. Recall that $\delta_C$ denotes the distance of the error correcting code of \Cref{lem:justesen_code}.

\begin{lemma}
\label{lem:secure_routing_parameter_calculation}
The following inequalities hold:
(a) $(16/\delta)\alpha k + 2\delta < \delta_C/2$; (b) For any $A \in \cA$, $\alpha n \leq (8/\delta) \alpha k|A|$.
\end{lemma}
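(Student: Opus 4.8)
The plan is to verify (a) and (b) by straightforward substitution, using the fixed values $\delta = 1/50$ (set at the start of the analysis) and $\delta_C > 1/10$ (\Cref{lem:justesen_code}), together with the two constraints on $k$ from \Cref{lem:ext_matching_transmission}: the bound $k \le \lfloor 1/(8\cdot 10^4\,\alpha)\rfloor$ and the density bound $k \le \hat{c}\,n/\log n$ for a small constant $\hat{c}$. Also recall from \Cref{lem:cover_free_routing} that every $A \in \cA$ has the common size $|A| = L = \lfloor \delta n/(4k)\rfloor$.

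For (a), the main point is that since $k$ is an integer with $k \le \lfloor 1/(8\cdot 10^4\,\alpha)\rfloor$, we get the clean bound $\alpha k \le \alpha\lfloor 1/(8\cdot 10^4\,\alpha)\rfloor \le 1/(8\cdot 10^4)$. With $\delta = 1/50$ we have $16/\delta = 800$, so $(16/\delta)\,\alpha k \le 800/(8\cdot 10^4) = 1/100$ and $2\delta = 1/25$; hence the left-hand side is at most $1/100 + 1/25 = 1/20$, while $\delta_C/2 > 1/20$, which gives the required strict inequality.

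For (b), after cancelling the positive factor $\alpha$ it suffices to show $n \le (8/\delta)\,kL$. The only care needed is with the floor: using $k \le \hat{c}\,n/\log n$ with $\hat{c}$ small guarantees $\delta n/(4k) \ge 1$, so the elementary inequality $\lfloor x\rfloor \ge x/2$ for $x \ge 1$ (already invoked in the proof of \Cref{lem:cover_free_proof}) yields $L \ge \delta n/(8k)$. Substituting, $(8/\delta)\,kL \ge (8/\delta)\,k\cdot \delta n/(8k) = n$, as desired.

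The computation is entirely routine arithmetic, so there is no real obstacle; the only thing worth flagging is that the density constraint $k \le \hat{c}\,n/\log n$ — and not merely $k \le \lfloor 1/(8\cdot 10^4\alpha)\rfloor$ — is genuinely used in part (b), since it is exactly what keeps $\delta n/(4k) \ge 1$ so that the floor does not collapse the cover-free set size to $0$. In part (a), the numerical choices $\alpha \le 1/(8\cdot 10^4)$ and $\delta = 1/50$ are precisely calibrated so that the total error budget $(16/\delta)\alpha k + 2\delta$ stays strictly below half the Justesen distance.
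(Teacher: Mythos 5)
Your proposal is correct and takes essentially the same approach as the paper: for (a) the same numerical substitution of $\delta = 1/50$ and $\alpha k \le 1/(8\cdot 10^4)$ giving $1/100 + 1/25 = 1/20 < \delta_C/2$, and for (b) the same floor inequality $\lfloor x\rfloor \ge x/2$ applied to $L = \lfloor\delta n/(4k)\rfloor$. The one place you go slightly beyond the paper is in explicitly flagging that $\delta n/(4k) \ge 1$ — and hence the density constraint $k \le \hat{c}n/\log n$ — is what legitimizes the floor inequality; the paper invokes $x \le 2\lfloor x\rfloor$ without comment, so this is a fair and careful note but not a different argument.
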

\begin{proof}
(a). Recall that $k \leq \lfloor 1/(8 \cdot 10^4 \alpha) \rfloor$, and $\delta = 1/50$. Therefore,
    $$(16/\delta)\alpha k+2\delta  \leq (1/100)+(1/25) = 1/20 < \delta_C/2.$$ 
(b). By Lemma \ref{lem:cover_free_routing}, $|A| = \lfloor \delta n/(4k) \rfloor$. It holds that $\delta n/(4k) \leq 2\lfloor \delta n/(4k) \rfloor = 2|A|$. By rearranging, $\alpha n \leq (8/\delta)\alpha k|A|$.
\end{proof}

\begin{lemma}
    \label{lem:secure_corrupt_bound}
    For any $v \in V$ and $(u,j) \in \OUT(v)$, it holds that \\$\Hamm(\widetilde{C}_v(m_j(u)),C(m_j(u))) < \delta_C \cdot |C(m_j(u))|/2$. 
\end{lemma}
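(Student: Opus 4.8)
The plan is to bound the number of corrupted coordinates of the received word $\widetilde{C}_v(m_j(u))$ by accounting separately for the three ways a coordinate $\ell\in[L]$ can go wrong. Recall coordinate $\ell$ corresponds to node $w_\ell\in A_{(u,j)}$, and the intended value is the $\ell^{\text{th}}$ bit of $C(m_j(u))$. For $v$ to receive this bit correctly we need: (i) in Round~1, node $u$ actually sent the bit to $w_\ell$, i.e. $\inload(u,w_\ell)=1$; (ii) that transmission $u\to w_\ell$ was not corrupted by the adversary; (iii) in Round~2, node $w_\ell$ actually forwarded the bit to $v$, i.e. $\outload(w_\ell,v)=1$; and (iv) that transmission $w_\ell\to v$ was not corrupted. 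So the set of bad coordinates is contained in the union of four sets: $B_1=\{\ell: \inload(u,w_\ell)>1\}$, $B_2=\{\ell: \text{edge }(u,w_\ell)\in F_1\}$, $B_3=\{\ell: \outload(w_\ell,v)>1\}$, and $B_4=\{\ell: \text{edge }(w_\ell,v)\in F_2\}$, where $F_1,F_2$ are the adversary's corrupted edge sets in the two rounds. I would bound each.

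For $B_2$ and $B_4$: the adversary controls at most $\alpha n$ edges incident to $u$ in Round~1 and at most $\alpha n$ edges incident to $v$ in Round~2, so $|B_2|\le \alpha n$ and $|B_4|\le \alpha n$. Using Lemma~\ref{lem:secure_routing_parameter_calculation}(b), $\alpha n \le (8/\delta)\alpha k |A_{(u,j)}| = (8/\delta)\alpha k L$, so $|B_2|+|B_4|\le (16/\delta)\alpha k L$. For $B_1$ and $B_3$: here is where the cover-free property does the work. The coordinates $w_\ell$ with $\inload(u,w_\ell)>1$ are exactly the nodes of $A_{(u,j)}$ that also lie in some other input set $A_{(u,j')}$ with $j'\neq j$; since $\IN(u)=\{(u,1),\dots,(u,k)\}\in H$ and $\cA$ is $(k-1,\delta)$-cover-free w.r.t.\ $H$, we have $|A_{(u,j)}\setminus \bigcup_{j'\neq j} A_{(u,j')}| \ge (1-\delta)L$, hence $|B_1|\le \delta L$. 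Symmetrically, $\outload(w_\ell,v)>1$ means $w_\ell$ lies in $A_{(u,j)}$ and also in some other set $A_{(u',j')}$ with $(u',j')\in\OUT(v)$, $(u',j')\neq(u,j)$; since $\OUT(v)\in H$ and has size at most $k$, the same cover-free bound gives $|B_3|\le\delta L$. Summing, the number of bad coordinates is at most $(16/\delta)\alpha k L + 2\delta L = \big((16/\delta)\alpha k + 2\delta\big)L$, and by Lemma~\ref{lem:secure_routing_parameter_calculation}(a) this is strictly less than $(\delta_C/2)L = \delta_C\cdot|C(m_j(u))|/2$, which is the claim.

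The main thing to be careful about — rather than a deep obstacle — is making sure the "other sets" that can inflate $\inload$ and $\outload$ are genuinely captured by a single application of the cover-free definition with the appropriate tuple of $H$. For $B_1$ the relevant tuple is $\IN(u)$, a set of $k$ indices, so we invoke $(k-1,\delta)$-cover-freeness with $A_{i_0}=A_{(u,j)}$ and the other $k-1$ sets being the $A_{(u,j')}$; for $B_3$ the relevant tuple is $\OUT(v)$, also of size at most $k$ (pad if necessary, or note the definition only requires it for tuples in $H$ and $|\OUT(v)|\le k$). One should also double check the edge-count bounds for $B_2,B_4$ correctly use that the adversary's degree bound applies per node per round and that Round~1 corruptions are counted at $u$ while Round~2 corruptions are counted at $v$ — there is no double counting because the two rounds have independent fault sets $F_1,F_2$. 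Everything else is the arithmetic already packaged in Lemma~\ref{lem:secure_routing_parameter_calculation}.
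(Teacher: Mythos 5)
Your proof is correct and follows essentially the same argument as the paper: the paper routes the bound through an intermediate string $M_{\mathrm{mid}}$ (the bits held by $A_{(u,j)}$ after Round~1) and applies the triangle inequality, whereas you take a direct union bound over the four bad-coordinate sets $B_1,\dots,B_4$, but the sets, the cover-free invocations, and the final arithmetic $(2\delta + (16/\delta)\alpha k)L < (\delta_C/2)L$ are identical. One small point in your favor: you correctly attribute the Round-2 corruption budget to $\deg_{F_2}(v)$ (the receiver), whereas the paper's proof writes $\deg_{F_2}(u)$ — harmless since both are bounded by $\alpha n$, but your accounting is the right one.
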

\begin{proof}
    Let $v \in V$, and $(u,j) \in \OUT(v)$. Let $w_1,\dots,w_{L}$ be the elements of $A_{(u,j)}$ ordered by lexicographical order. Set $M_{\mathrm{mid}} = b_1(u,w_1),\dots,b_1(u,w_L)$, i.e. the string the nodes of $A_{(u,j)}$ receives after the first round.
    
    First, we bound $\Hamm(M_{\mathrm{mid}},C(m_j(u)))$. We notice that $M_{\mathrm{mid}}[i] \neq C(m_j(u))[i]$ if at least one of two cases occur: (a) $\inload(u,w_i) > 1$; (b) the adversary corrupted the bit $b_1(u,w_i)$. We bound the number of indices that each case can occur, separately: 
    \begin{enumerate}[(a)]
        \item Recall that $\{(u,1),\dots,(u,k)\} \in H$. Therefore, only $\delta |A_{(u,j)}|$ elements of $A_{(u,j)}$ appear in any of $A_{(u,1)},\dots,A_{(u,j-1)},A_{(u,j+1)},A_{(u,k)}$.
        Hence, case (a) occurs in at most $\delta |A_{(u,j)}| = \delta |C(m_j(u)|$ indices. 
        \item We note that $\deg_{F_1}(u) \leq \alpha n \leq (8/\delta)\alpha k |A_{(u,j)}| = (8/\delta)\alpha k |C(m_j(u)|$, where the second inequality is shown in \Cref{lem:secure_routing_parameter_calculation}(b). Therefore, the adversary may corrupt at most $(8/\delta)\alpha k |C(m_j(u)|$ many indices of $M_{\mathrm{mid}}$ from the original value of $C(m_j(u))$.
    \end{enumerate}
    
    Overall, $$\Hamm(M_{\mathrm{mid}},C(m_j(u))) \leq (\delta+(8/\delta)\alpha k)|C(m_{j}(u))|.$$ Next, we bound $\Hamm(M_{\mathrm{mid}},\widetilde{C}_v(m_j(u)))$ in a very similar manner.   Notice that $M_{\mathrm{mid}}[i] \neq \widetilde{C}_v(m_j(u))[i]$ only if one of two cases occur: (a) $\outload(w_i,v) > 1$, or (b) if the adversary corrupted the bit $b_2(w_i,v)$ sent from $w_i$ to $v$. We again bound each case separately:  
    \begin{enumerate}[(a)]
        \item  For case (a), by Eq. (\ref{eq:H}), $\OUT(v) \in H$. Therefore, at most $\delta |A_{(u,j)}| = \delta|C(m_j(u)|$ elements $w \in A_{(u,j)}$ have $\outload(w,v) > 1$. 
        \item  Since $\deg_{F_2}(u) \leq \alpha n \leq (8/\delta)\alpha k |C(m_j(u)|$, then the adversary may corrupt at most $(8/\delta)\alpha k |C(m_j(u)|$ many indices of $\widetilde{C}_v(m_j(u))$ from the original value of $M_{\mathrm{mid}}$.
    \end{enumerate}
    Overall, $\Hamm(M_{\mathrm{mid}},\widetilde{C}_v(m_j(u))) \leq (\delta+(8/\delta)\alpha k)|C(m_{j}(u))|$. By the triangle inequality, 
    \begin{flalign*}
        \Hamm(\widetilde{C}_v(m_j(u)),C(m_j(u))) &\leq \Hamm(M_{\mathrm{mid}},C(m_j(u))) + \Hamm(M_{\mathrm{mid}},\widetilde{C}_v(m_j(u))) \\ &\leq (2\delta+(16/\delta)\alpha k)|C(m_j(u))| < \delta_C \cdot |C(m_j(u))|/2,
    \end{flalign*}
    where the last inequality is shown in \Cref{lem:secure_routing_parameter_calculation}(a).
\end{proof}

\begin{lemma}
    \label{lem:correctness_extended}
    For every $v \in V, (u,j) \in \OUT(v)$, it holds that $\Decode(\widetilde{C}_v(m_j(u)))=m_j(u)$.
\end{lemma}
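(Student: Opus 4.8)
\textbf{Proof plan for \Cref{lem:correctness_extended}.} The plan is to observe that this lemma is an immediate consequence of \Cref{lem:secure_corrupt_bound} together with the defining property of the decoding function of an error correcting code (\Cref{def:ECC}). All the real work has already been done: \Cref{lem:secure_corrupt_bound} guarantees that for every $v \in V$ and $(u,j) \in \OUT(v)$, the word $\widetilde{C}_v(m_j(u))$ that $v$ reconstructs in the output step is within Hamming distance strictly less than $\delta_C \cdot |C(m_j(u))|/2$ of the true codeword $C(m_j(u))$.

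First I would recall that, by construction, $\widetilde{C}_v(m_j(u)) \in \{0,1\}^{L}$ where $L = |C(m_j(u))|$, so it lies in the same ambient space as codewords of $C$, and that $m_j(u)$ is a valid message in the domain of $C$. Then I would invoke \Cref{def:ECC}: since there exists a message (namely $x = m_j(u)$) with $\Hamm(C(x), \widetilde{C}_v(m_j(u))) < \delta_C \cdot L/2$, the decoding function satisfies $\Decode(\widetilde{C}_v(m_j(u))) = m_j(u)$. Because the Justesen code of \Cref{lem:justesen_code} has an efficiently computable decoding algorithm, this is exactly the value computed by $v$ in the output step, so $m^v_j(u) = m_j(u)$.

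There is essentially no obstacle in this step; the only things to double-check are bookkeeping: that the block length used when $u$ computes $C(m_j(u))$ indeed matches $|A_{(u,j)}| = L$ (guaranteed by the padding/rate discussion before Round~1, using $c \le 200 c'$), and that the indexing of $\widetilde{C}_v(m_j(u))$ as $(b_2(w_1,v),\dots,b_2(w_L,v))$ lines up with the lexicographic ordering of $A_{(u,j)}$ used when $u$ sent the bits in Round~1 — so that a non-corrupted, non-congested coordinate of $\widetilde{C}_v(m_j(u))$ really does equal the corresponding coordinate of $C(m_j(u))$. Both are already fixed by the protocol description, so the proof is a one-line appeal to \Cref{lem:secure_corrupt_bound} and \Cref{def:ECC}. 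With \Cref{lem:correctness_extended} in hand, the correctness part of \Cref{lem:ext_matching_transmission} follows, since each $v$ outputs $\Moutput(v) = \{m^v_j(u) \mid (u,j) \in \OUT(v)\} = \{m_j(u) \mid (u,j) \in \OUT(v)\}$, which is exactly the required output; it then only remains to argue the $O(1)$ round complexity (bandwidth and load bounds), which I would handle separately.
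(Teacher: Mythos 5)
Your proof is correct and takes exactly the same route as the paper: invoke \Cref{lem:secure_corrupt_bound} to bound the Hamming distance below $\delta_C \cdot |C(m_j(u))|/2$, then apply the defining property of $\Decode$ from \Cref{def:ECC}. The extra bookkeeping remarks about block length and index alignment are sound but unnecessary for the formal argument, which the paper states in two lines.
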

\begin{proof}
    By \Cref{lem:secure_corrupt_bound}, for any  $(u,j) \in \OUT(v)$ it holds that $\Hamm(\widetilde{C}_v(m_j(u)),C(m_{j}(u))) < \delta_C \cdot |C(m_j(u))|/2$. Since the fraction of corrupted bits is less than $\delta_C/2$, the decoding returns $\Decode(\widetilde{C}_v(m_j(u))) = m_{j}(u)$.
\end{proof}

\Cref{lem:ext_matching_transmission} follows immediately from \Cref{lem:correctness_extended}, since each node $v$ outputs $\{m^v_j(u) \stt (u,j)\in \OUT(v)\} = \{m_j(u) \stt (u,j)\in \OUT(v)\}$. Next, we prove \Cref{thm:generalized_routing_main}:

\begin{proof}[Proof of \Cref{thm:generalized_routing_main}]
    We may assume w.l.o.g. that $k \leq \min(\lfloor 1/(8 \cdot 10^4 \cdot \alpha) \rfloor,\hat{c}n/\log{n})$, otherwise we can repeat the procedure of \Cref{lem:ext_matching_transmission} a constant number of times, each time defining a different set of messages. A $\SuperMessagesRouting$ instance with parameters $k$ and $\lambda$ can be partitioned into $O(k\lambda/n)$ instances of $\SuperMessagesRouting$ with $k$ input messages per node, and message size $cn/k$, by splitting each input message into messages of size $cn/k$ bits each. On each such instance, we can apply the procedure of \Cref{lem:ext_matching_transmission} to solve it in $O(1)$ rounds. Moreover, the network has bandwidth $B$, while the protocol of \Cref{lem:ext_matching_transmission} only uses $1$-bit of bandwidth. Hence we may run in parallel $B$ procedures of \Cref{lem:ext_matching_transmission}. We conclude that $\SuperMessagesRouting$ with parameters $k$ and $\lambda$ can be solved in $O(k\lambda/(Bn))$ rounds.
\end{proof}

This completes the proof of \Cref{thm:generalized_routing_main} (i.e., \Cref{thm:informal_routing}). As an immediate corollary, we obtain a procedure that allows one node to broadcast a message of size $O(n)$ to the rest of the network. Formally, in the $\mathrm{Broadcast}$ problem, a special node $v$ initially holds a super-message $m$ of size $O(n)$ bits. The goal is for all nodes to output $m$. We remark that we prove this for $\alpha \leq 1/(8 \cdot 10^4)$, which is sufficient for our purposes, but this constant can be significantly improved for this specific case.

\begin{corollary}
\label{cor:broadcast}
Assume $\alpha \leq 1/(8 \cdot 10^4)$, then the $\mathrm{Broadcast}$ problem can be solved deterministically in $O(1)$ rounds. 
\end{corollary}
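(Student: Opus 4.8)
The plan is to reduce the $\mathrm{Broadcast}$ problem to a single instance of the $\SuperMessagesRouting$ problem and then invoke \Cref{thm:generalized_routing_main}. The special node $v$ holds a super-message $m$ of size $O(n)$ bits. We set up an instance with parameter $k = O(1)$ — say $k = c_0$ for the constant $c_0$ implicit in the $O(n)$ bound on $|m|$ — so that the relevant regime $k = O(\min(1/\alpha, n/\log n))$ is satisfied (since $\alpha \leq 1/(8\cdot 10^4)$ is a constant, $1/\alpha = \Omega(1)$, and $n/\log n \geq k$ for $n$ large). Node $v$'s input super-messages are the (at most) $k$ pieces of $m$, padded to $\lambda = \Theta(n/k) = \Theta(n)$ bits each; every other node $u \neq v$ is given $k$ empty (all-zero) input super-messages. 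Each of $v$'s $k$ super-messages has target list equal to all of $V$, and the empty super-messages of the other nodes can be given, say, a single arbitrary target. Then each node is the target of at most $k + (k-1)\cdot(\text{few}) = O(k) = O(1)$ super-messages; after re-indexing into exactly $k' = O(1)$ super-messages per target (absorbing the constant into $k$ exactly as in the proof of \Cref{thm:generalized_routing_main}) the hypotheses of \Cref{thm:generalized_routing_main} are met.

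Next I would observe that all nodes know all target lists: the target list of each of $v$'s pieces is $V$, which is globally known, and the trivial target lists of the empty messages can be fixed by convention, so no communication is needed to agree on them. Applying \Cref{thm:generalized_routing_main} with $k = O(1)$ and $\lambda = \Theta(n) = \Theta(nB/k)$ (for $B = 1$; the general $B$ case only helps), the routing completes in $O(k\lambda/(Bn)) = O(1)$ rounds, after which every node in $\Target$ of $v$'s pieces — i.e., every node — has recovered all $k$ pieces of $m$ and hence $m$ itself. The procedure is deterministic, as promised, since \Cref{thm:generalized_routing_main} is deterministic.

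The only mild subtlety — and the thing I would be most careful about — is the bookkeeping that ensures the per-node target-count bound $k$ is respected exactly rather than up to constants: the raw reduction gives each node target-degree $\Theta(k)$ (a few empty messages plus $v$'s broadcast pieces), so one must either re-pack the super-messages into exactly $k$ slots per target, or simply appeal to the ``w.l.o.g.'' argument already used in the proof of \Cref{thm:generalized_routing_main} whereby a constant blow-up in $k$ is handled by a constant number of repeated applications of \Cref{lem:ext_matching_transmission}. Either way this costs only a constant factor in rounds, which is absorbed into the $O(1)$. Everything else is immediate, which is why the corollary is stated with essentially no proof; I would write at most two or three sentences making the reduction explicit.
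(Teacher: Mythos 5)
Your approach is correct and follows the same basic route as the paper (reduce $\mathrm{Broadcast}$ to a $\SuperMessagesRouting$ instance and invoke \Cref{thm:generalized_routing_main}), but you chose an unnecessarily complicated instantiation, and the ``mild subtlety'' you flag is an artifact of that choice, not of the problem. The paper simply takes $k=1$ and $\lambda = |m| = O(n)$: node $v$'s input is the single super-message $\Minput(v)=\{m\}$ with $\Target(m)=V$, and every other node has no (non-trivially targeted) input super-messages, so every node is the source of at most one super-message and the target of exactly one. The hypotheses of \Cref{thm:generalized_routing_main} hold with $k=1$, and the round bound $O(k\lambda/(Bn)) = O(n/n) = O(1)$ follows immediately --- the theorem already handles $\lambda=\Theta(n)$ by internally splitting into $O(k\lambda/n)$ sub-instances of size $cn/k$, so splitting $m$ into $k=c_0$ pieces up front duplicates work the theorem's proof does for you. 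Likewise, assigning each non-source node $k$ empty super-messages with ``a single arbitrary target'' is what creates the per-node target-count concern you then have to patch; had those nodes simply been given nothing (or super-messages with empty target lists), the target-count is $1 \le k$ for free and no re-packing or ``w.l.o.g.\ constant blow-up'' appeal is needed. So: correct, same idea, but the paper's two-sentence proof dodges every complication you raise by picking $k=1$.
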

\begin{proof}
	 We use $\ExtMatchingTransmission$ where $v$ is the source of a single super message $\Minput(v) = \{m\}$, and for any other node $u$ we set $\Moutput(u) = \{m\}$. \Cref{thm:generalized_routing_main} guarantees that Proc. $\ExtMatchingTransmission$ terminates correctly in $O(1)$ rounds.
\end{proof}
    \section{Randomized Compilers}
\label{sec:randomized_compilers}
In this section we provide randomized compilers against non-adaptive and adaptive adversaries. Specifically, we provide randomized algorithms for the $\AllToAllComm$ problem that captures the single-round simulation in the Congested Clique model. Recall, that in this problem each node $u$ holds $B$-bit messages $m_{u,v}$ for every $v \in V$ and it is required for each $v$ to learn its message set $\{m_{u,v}\}_{u \in V}$. 

\subsection{Handling Non-Adaptive Adversaries} \label{sec:nonadaptive}
In this section, we require bandwidth of $B = \Theta(\log{n})$.

\begin{theorem}
\label{thm:non_rusing_main}
For $\alpha \leq 1/(8 \cdot 10^4)$, there is a randomized protocol $\StaticAllToAll$ for the $\AllToAllComm$ problem in the $\alpha$-NBD setting in $O(1)$ rounds, assuming bandwidth $B = \Theta(\log{n})$.
\end{theorem}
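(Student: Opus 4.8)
The plan is to follow the high-level sketch from the technical overview: encode, disperse through a random intermediary, re-gather via the resilient routing scheme, then decode. Since $B = \Theta(\log n)$, the key observation is that each message $m_{u,v}$ together with its ID fits in $O(\log n)$ bits, and after encoding with the Justesen code of \Cref{lem:justesen_code} it still fits in $O(\log n)$ bits. I would first describe the following two-step sub-procedure, call it $\mathsf{RandRoute}$, parameterized by a single bit to be delivered: every node $v$ picks a uniformly random node $g(v) \in V$ (independently), and for each of its $n$ designated messages sends one bit to $g(v)$; then $g(v)$ forwards the $n$ bits it received for $v$ back to $v$. Because the adversary is \emph{non-adaptive}, the sets $F_1, F_2$ are fixed before $g$ is chosen, so for a fixed target bit $m_{u,v}$ the edge $(u, g(v))$ is corrupted in round $1$ with probability at most $\deg(F_1)/n \le \alpha$ over the choice of $g(v)$, and similarly the edge $(g(v), v)$ is corrupted in round $2$ with probability at most $\alpha$; hence each individual bit arrives correctly with probability $\ge 1 - 2\alpha$. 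The catch is that in step $1$ node $g(v)$ may be the image of many nodes, so I must handle the load: use the resilient super-message routing of \Cref{thm:generalized_routing_main} instead of naive sending, which is legitimate precisely because all nodes can be told the target lists (each node broadcasts the identity of its chosen $g(v)$ using \Cref{cor:broadcast}, or we observe the target lists only need to be known, and broadcasting the $n$ values $g(v)$ takes $O(1)$ rounds and $O(n\log n)$ total bits, fitting a single super-message).

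Next I would assemble the full algorithm. Each node $u$ forms the concatenation of its $n$ messages, encodes the whole thing — or, more cleanly, encodes each $m_{u,v}$ individually into an $O(\log n)$-bit codeword $C(m_{u,v})$, then splits that codeword into $\Theta(\log n)$ single-bit parts. We now have $\Theta(\log n)$ independent ``layers'', where in layer $\ell$ every node $u$ must deliver to each $v$ the $\ell$-th bit of $C(m_{u,v})$. Run $\mathsf{RandRoute}$ for each layer; crucially, use \emph{independent} random choices $g_\ell(v)$ for each layer so that a fixed corrupted edge in $F_1$ or $F_2$ can only affect, for a given $(u,v)$ pair, a $\le \alpha$ fraction of layers in expectation. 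More precisely: fix $u, v$; the bits of $C(m_{u,v})$ that arrive corrupted at $v$ are exactly those layers $\ell$ where either $(u, g_\ell(v)) \in F_1$ or $(g_\ell(v), v) \in F_2$ for that layer's routing, and since the $g_\ell(v)$ are i.i.d. and each causes corruption with probability $\le 2\alpha$, a Chernoff bound over the $\Theta(\log n)$ layers gives that at most, say, a $4\alpha$ fraction of the codeword bits are corrupted, except with probability $1/n^{c+2}$. (Here I must be slightly careful: the super-message routing in each layer may introduce deterministic ``congestion'' losses, which \Cref{thm:generalized_routing_main} bounds by a $\delta$-fraction; so the per-layer corruption probability is $\le 2\alpha + \delta$ for the relevant codeword, and I'd choose $\delta$ and $\alpha$ small enough that $4(2\alpha+\delta) < \delta_C/2$ — this is the same style of inequality as \Cref{lem:secure_routing_parameter_calculation}(a).) A union bound over all $n^2$ pairs $(u,v)$ then shows that w.h.p. \emph{every} received codeword $\widetilde{C}(m_{u,v})$ is within the decoding radius $\delta_C/2$, so $v$ decodes $m_{u,v}$ correctly for all $u$.

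\textbf{The main obstacle.} The delicate point, and where I'd expect to spend the most care, is the load-balancing inside $\mathsf{RandRoute}$: when node $v$ ships ``the $n$ bits addressed to it'' to a random $g(v)$, the image node $g(v)$ must receive one bit from each node $u$ on behalf of $v$, and it is the target of $\Theta(n)$ such bits but may also be the chosen intermediary for many other nodes $v'$. So in round $1$ of $\mathsf{RandRoute}$ node $u$ wants to send, to node $w$, one bit for \emph{every} $v$ with $g(v) = w$ — that could be $\omega(1)$ bits on a single edge if some $w$ is popular. The fix is exactly the reason we invoke \Cref{thm:generalized_routing_main} rather than hand-rolling the routing: phrase round $1$ as a $\SuperMessagesRouting$ instance where node $v$'s super-message is ``the $n$ bits it holds for itself'' (size $O(n)$, fits the $\lambda = \Theta(nB/k)$ regime with $k = O(1)$) targeted at the single node $g(v)$; since at most $k = O(1/\alpha)$... but wait — a popular $w$ could be the target of many super-messages, violating the ``each node is target of $\le k$ super-messages'' hypothesis. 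To get around this I would \emph{condition on} the event that the random map $g$ is ``balanced'' — no node is the image of more than $O(1)$ others — which holds with high probability if we instead let $g$ be a random near-perfect assignment (e.g. a random permutation composed with bucketing, or simply resample until balanced, or use a $\Theta(\log n)$-wise independent hash and apply \Cref{lem:concentration_k_wise} to bound image sizes by $O(\log n)$, absorbing the extra $\log n$ factor into $k$ via $k = O(\log n) = O(n/\log n)$). Establishing this balance property cleanly, and verifying it composes correctly with the adversary's non-adaptivity (the adversary fixes $F_i$ before seeing $g$, so conditioning on the balance event — a purely $g$-dependent event — does not help the adversary), is the crux; the rest is bookkeeping with Chernoff bounds and the decoding-radius inequality.
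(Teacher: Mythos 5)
Your high-level plan matches the paper's: route each bit of $C(m_{u,v})$ through a random intermediary, argue via non-adaptivity that each bit is corrupted with probability $\le O(\alpha)$, Chernoff over the $\Theta(\log n)$ codeword positions, union bound over $(u,v)$ pairs, decode. You also correctly identify the crux as load-balancing at the intermediary. The difference is in how the intermediary is chosen, and this is where your primary fix has a gap.

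The paper sidesteps the entire load-balancing concern by using \emph{random cyclic shifts} as the intermediary map: $v_1$ samples $B$ integers $r_1,\dots,r_B$ and broadcasts them, and layer $i$ uses $p_i(v_j) = v_{(j+r_i)\bmod n}$. Each $p_i$ is deterministically a permutation, so in round $1$ every node $u$ sends exactly one bit to every node $w$ per layer (a plain single $B$-Congested-Clique round, no routing subroutine needed), and in round $2$ the layer-$i$ $\SuperMessagesRouting$ instance has each node as the source and target of exactly \emph{one} $n$-bit super-message, i.e.\ $k=1$. Independence across layers comes for free because the $r_i$ are independent. Broadcasting $B$ integers costs $O(\log^2 n) = o(n)$ bits, which fits in a single call to \Cref{cor:broadcast}.

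Your preferred fix does not work as stated. You propose to tolerate $|g_\ell^{-1}(w)| = O(\log n)$ by absorbing the $\log n$ factor ``into $k$ via $k = O(\log n) = O(n/\log n)$,'' but you have read only half of the hypothesis of \Cref{thm:generalized_routing_main}: the constraint is $k = O(\min(1/\alpha,\, n/\log n))$, and for a constant $\alpha$ (here $\alpha \le 1/(8\cdot 10^4)$) the binding term is $1/\alpha = O(1)$, so $k$ must be $O(1)$. With $k = \Theta(\log n)$ super-messages per node per layer the routing theorem simply does not apply in the constant-$\alpha$ regime. Your alternative suggestion — use a random (near-)permutation so that every node is the image of $O(1)$ others — is correct, and the paper's cyclic-shift construction is a particularly clean, low-randomness instantiation of exactly this idea. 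If you commit to a genuine permutation, the ``balance'' event you worry about conditioning on becomes deterministic and the composition issue with the non-adaptive adversary disappears.

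One smaller inaccuracy: you budget a $\delta$-fraction of ``congestion losses'' from $\ExtMatchingTransmission$ into the per-layer corruption probability ($2\alpha + \delta$). \Cref{thm:generalized_routing_main} in fact delivers the super-messages \emph{exactly} correctly (the cover-free $\delta$ is internal to the routing protocol's own ECC analysis, not an error rate in its output), so this extra $\delta$ term is spurious. It is harmless here since it only makes your inequality more conservative, but it reflects a misreading of the routing guarantee that would matter in tighter parameter regimes.
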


\noindent\textbf{Useful Notation.} For an $L$-bit message $m$ for $L\leq n$ held by a node $u$ and a subset $A \subseteq V$ of $L$ nodes, let $\Send(u,A,m)$ be 
the single-round procedure in which $u$ sends the $i^{th}$ bit in $m$ to the $i^{th}$ node in $A$, where the nodes in $A$ are ordered lexicographically, unless otherwise mentioned. A $L$-bit message $m$ is \emph{distributively stored} by $A$ if the $i^{th}$ bit of $m$ is held by the $i^{th}$ node in $A$. For an $L$-bit message $m$ that is distributively stored by a subset $A \subseteq V$ of $L$ nodes and a vertex $v \in V$, let $\Send(A,v,m)$ be the single-round procedure where the $i^{th}$ node in $A$ sends the $i^{th}$ bit of $m$ to $v$.

\smallskip
\noindent \textbf{Description of Algorithm $\StaticAllToAll$:} Let $C$ be the Justesen code from \Cref{lem:justesen_code}. Assume w.l.o.g. that node $u$ holds $B'$-bit messages $m_{u,v}$ for every $v$, of size $B' = \lceil \log{n} \rceil$ (if our input messages are larger, we can split the messages in a constant number of parts, and apply this routing algorithm a constant number of times to transmit the full messages). Hence, each codeword $C(m_{u,v})$ has size $ B'/\tau_C \leq B$ bits; for simplicity, assume for the remainder of the section that $B = B'/\tau_C$. The algorithm starts by having each node $u$ encode each of its $n$ messages into a $B$-bit codeword $\{C(m_{u,v})\}_{v \in V}$. Each codeword $C(m_{u,v})$ is then partitioned between an (ordered) subset $A(u,v) \subseteq V$ of size $B$, defined as follows.

\smallskip
\noindent \textbf{Step (1): Sending encoded messages to received sets.} Node $v_1$ (of smallest $\ID$) (locally) picks $B$ random numbers $r_1,\dots,r_B$ chosen uniformly and independently from $\{1,\dots,n\}$ with repetitions. These numbers are sent by $v_1$ to all nodes by applying Procedure $\ExtMatchingTransmission$ (see \Cref{cor:broadcast}). Each node locally defines the permutations $p_1,\dots,p_B:V \rightarrow V$ such that $p_i(v_j) = v_{(j+r_i) \pmod n}$ for any $i \in [B]$. These $B$ random numbers allow each vertex $u$ to determine the receiving-set $A(u,v)$ for each message $C(m_{u,v})$. Let, 
\begin{equation}\label{eq:message-set}
A(v)=A(u,v)=(p_1(v), \ldots, p_i(v), \ldots, p_B(v)).
\end{equation}
Note that the set $A(u,v)$ as mentioned in Eq. (\ref{eq:message-set}) is ordered (hence, it should not be ordered based on vertex IDs).
Also, note $A(u,v)$ is the same for every $u$, i.e., it depends only on $v$.  Each node $u$ then sends the codeword $C(m_{u,v})$ to $A(u,v)$ by applying Proc. $\Send(u,A(u,v),C(m_{u,v}))$. 
Let $\widetilde{C}(m_{u,v})$ be the received codeword at $A(u,v)$, where $\widetilde{C}(m_{u,v})[i]$ is the bit received by the node $p_i(v)$.

\smallskip
\noindent \textbf{Step (2): Sending encoded messages from received sets to targets.} The algorithm defines $B$ routing instances of the 
$\SuperMessagesRouting$ problem with parameters $k=1$ and $\lambda=n$, i.e., where each node holds a single $n$-bit message. Specifically, the $i^{th}$ instance $P_i$ is defined on the following input. Each node $w$ holds the single $n$-bit message, denoted as $m_{i}(w)$, whose target is $v=p^{-1}_i(w)$,\footnote{Here, $p^{-1}_i$ denotes the inverse permutation of $p_i$, i.e. $p^{-1}_i(p_i(w)) = w$ for all $w \in V$.} i.e.,
\begin{equation}\label{eq:mpi}
m_{i}(w)=\widetilde{C}_i(w)=(\widetilde{C}(v_1,v)[i],\ldots, \widetilde{C}(v_n,v)[i])~ \mbox{~and~} \Target(m_i(w)) =\{v\}.
\end{equation}
That is, the node $w$ holds the $i^{th}$ (possibly corrupted) bit of each of the $n$ encoded messages $\{C(m_{u,v})\}_{u \in V}$.

Each instance $P_i$ is solved by applying Alg. $\ExtMatchingTransmission$ with bandwidth parameter of $1$ (see \Cref{lem:static_learn_subprocedure} for the details). At the end of these $B$ applications, each $v$ obtains all the messages targeted at $v$, i.e., $\{\widetilde{C}_i(p^{-1}_i(v))\}_{i=1}^B$. This allows $v$ to locally compute its $n$ received codewords:

\begin{equation}
\label{eq:codeuv}
\widetilde{C}(u,v)=\widetilde{C}_1(u,v) \circ \ldots \circ \widetilde{C}_B(u,v)~, \forall u\in V~.
\end{equation}
Finally, $v$ sets $m'_{u,v}=\Decode(\widetilde{C}(u,v))$ for every $u$. In the analysis, we show that $m'_{u,v}=m_{u,v}$ w.h.p. This completes the description of the algorithm.

\paragraph{Analysis (Proof of \Cref{thm:non_rusing_main}).} We start by showing that the algorithm can be implemented in $O(1)$ rounds.

\begin{lemma}
\label{lem:static_first_round}
Applying Proc. $\Send(u,A(u,v),C(m_{u,v}))$ for each $u,v \in V$ can be implemented in parallel using one round.
\end{lemma}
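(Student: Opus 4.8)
The plan is to show that the collection of single-round procedures $\Send(u, A(u,v), C(m_{u,v}))$, over all pairs $u,v \in V$, can be executed simultaneously in one Congested Clique round because each ordered node $p_i(v)$ in $A(u,v)$ receives exactly one bit from each sender $u$. First I would fix a sender $u$ and a receiver $w$, and count how many bits $u$ must send to $w$ in this step: a bit is sent from $u$ to $w$ on behalf of the message $C(m_{u,v})$ precisely when $w = p_i(v)$ for some coordinate $i \in [B]$, in which case the bit sent is $C(m_{u,v})[i]$. Since $p_i$ is a permutation of $V$, for each fixed $i$ there is a unique $v$ with $p_i(v) = w$, namely $v = p_i^{-1}(w)$. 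Hence the set of (message, coordinate) pairs causing $u$ to transmit to $w$ is $\{(C(m_{u, p_i^{-1}(w)}), i) \stt i \in [B]\}$, which has exactly $B$ elements, one per coordinate.

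Next I would observe that the bandwidth constraint is satisfied: $u$ needs to send $B$ bits total to $w$ (one per coordinate $i$), and since $B = B'/\tau_C = \lceil \log n \rceil / \tau_C = \Theta(\log n)$, this is exactly the bandwidth budget of a single round in the $B$-Congested Clique model. Each of these $B$ bits is well-defined and known to $u$ after the local encoding step, and the destination node $w$ can locally identify, for each received bit, which coordinate $i$ and which source message it corresponds to, because it knows all the permutations $p_1, \ldots, p_B$ (these were broadcast to all nodes in Step (1) via $\ExtMatchingTransmission$). So the round is consistent from the receiver's side as well: $w = p_i(v)$ stores $\widetilde{C}(m_{u,v})[i]$ for the unique $v = p_i^{-1}(w)$ and each $i$, which matches the definition of $\widetilde{C}(m_{u,v})$ in the algorithm description.

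I do not anticipate a serious obstacle here; this is essentially a counting argument exploiting that each $p_i$ is a bijection. The one point requiring a little care is making sure that the $B$ permutations $p_1, \ldots, p_B$ are genuinely independent as data structures but that the argument does not require them to be distinct — even if $r_i = r_{i'}$ for $i \neq i'$, the pairs $(C(m_{u,p_i^{-1}(w)}), i)$ and $(C(m_{u,p_{i'}^{-1}(w)}), i')$ are still distinguished by their coordinate index, so $u$ still sends exactly one bit per coordinate and the total remains $B$. A second minor point is to note that the broadcast of $r_1, \ldots, r_B$ in Step (1) is itself only $O(1)$ rounds by \Cref{cor:broadcast} (these are $B = O(\log n)$ numbers in $[n]$, so $O(\log^2 n) = O(n)$ bits, well within the broadcast capacity), so it does not affect the claimed round complexity; but that is arguably part of the surrounding analysis rather than this lemma. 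The lemma itself follows once the per-pair bit-count to any fixed receiver is shown to be at most $B$.
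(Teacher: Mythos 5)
Your proof is correct and follows the same argument as the paper: since each $p_i$ is a permutation, for every coordinate $i \in [B]$ each sender $u$ transmits exactly one bit to each receiver $w$ (the $i$-th bit of $C(m_{u,p_i^{-1}(w)})$), giving exactly $B$ bits per ordered pair and hence one $B$-Congested Clique round. Your additional observations — that repeated $r_i$ values cause no overcounting because bits are indexed by coordinate, and that receivers can unambiguously reassemble the $\widetilde{C}(m_{u,v})$ from knowledge of the $p_i$ — are valid and slightly more explicit than the paper's one-line proof, but not a different route.
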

\begin{proof}
    For each $i \in [B]$, each node sends the $i^{th}$ bit of $C(m_{u,v})$ to $p_i(v)$, the $i^{th}$ node of $A(u,v)$. Since $p_1,\dots,p_B$ are permutations, for any $i \in [B], w \in V$, node $w = p_i(v)$ for exactly one node $v \in V$. Therefore, each node $v$ sends exactly one bit to each node $w$ for each $i \in [B]$, and hence also receives exactly one bit from each node $w$ for each $i \in [B]$. The claim follows.
\end{proof}

\begin{lemma}
\label{lem:static_learn_subprocedure}
For each $i \in [B]$, instance $P_i$ can be solved in $O(1)$ rounds and using $1$ bit of bandwidth per edge (i.e., sending $1$ bit on each edge per round).
Hence, all $B$ instances can be implemented in $O(1)$ rounds, in parallel.
\end{lemma}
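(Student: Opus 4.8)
The plan is to recognize that each $P_i$ is, by construction, an instance of $\SuperMessagesRouting$ that lies within the scope of \Cref{thm:generalized_routing_main}, and then invoke that theorem essentially as a black box. First I would verify the parameters. In $P_i$ every node $w$ is the source of the single $n$-bit super-message $m_i(w)$, so $k=1$ and $\lambda = n$; and since $p_i$ is a permutation, each node $v$ is the target of exactly one super-message, namely $m_i(p_i(v))$, so the requirement that each node be a target of at most $k$ super-messages holds with $k=1$. The bound $k = O(\min(1/\alpha, n/\log n))$ is trivial for $k=1$, and $\alpha \le 1/(8 \cdot 10^4)$ is inherited from the hypothesis of \Cref{thm:non_rusing_main}. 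Finally, the target lists $\{\Target(m_i(w))\}_{w \in V}$ are fully determined by $p_i$, which each node computed locally from the common random shifts $r_1,\dots,r_B$ disseminated in Step~(1) via \Cref{cor:broadcast}; hence all nodes know all target lists, as demanded by the definition of $\SuperMessagesRouting$.

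Given these checks, I would apply \Cref{thm:generalized_routing_main} with bandwidth parameter $1$, obtaining the deterministic protocol $\ExtMatchingTransmission$ that solves $P_i$ in $O(k\lambda/(1\cdot n)) = O(n/n) = O(1)$ rounds while using only a single bit per edge per round. One point worth spelling out explicitly is adversary strength: \Cref{thm:generalized_routing_main} is stated against the adaptive $\alpha$-ABD adversary, whereas we are in the weaker non-adaptive $\alpha$-NBD setting of \Cref{thm:non_rusing_main}; since any $\alpha$-NBD behavior is a special case of $\alpha$-ABD behavior, the protocol remains correct here. This proves the first sentence of the lemma.

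For the parallel conclusion, I would note that the model provides $B = \Theta(\log n)$ bits of bandwidth per edge per round, so the $B$ single-bit executions of $\ExtMatchingTransmission$ for $P_1,\dots,P_B$ can be run simultaneously, assigning the $i$-th bit of every edge to instance $P_i$; the round count stays $O(1)$. The only thing to check is that this parallel composition does not hand any individual $P_i$ a stronger adversary: in a given round the adversary corrupts an edge set $F$ with $\deg(F) \le \alpha n$ and may set all $B$ bits on those edges arbitrarily, but projected onto coordinate $i$ this is still some $\alpha$-BD corruption pattern on $P_i$, and $\ExtMatchingTransmission$ tolerates \emph{every} such pattern. Hence each $P_i$ terminates correctly in parallel, giving the lemma.

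I do not expect a real obstacle here: the substantive work is the setup of the instances $P_i$ in Step~(2), which is already in place, together with \Cref{thm:generalized_routing_main} itself. The only residual subtlety — and the one I would be careful to state cleanly — is the bookkeeping that the $B$ parallel one-bit executions each face a legitimate $\alpha$-BD (indeed $\alpha$-ABD) adversary, so that resilience of a single execution transfers to the parallel composition.
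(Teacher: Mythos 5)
Your proof is correct and follows the same route as the paper: instantiate $\SuperMessagesRouting$ with $k=1$, $\lambda=n$, and a single input/output super-message per node, then invoke \Cref{thm:generalized_routing_main}. The paper's proof is terser (it omits the explicit adversary-strength and parallel-projection remarks), but you fill in exactly the same argument with more of the bookkeeping spelled out.
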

\begin{proof} 
For instance $P_i$, we define the following $\ExtMatchingTransmission$ instance: each node $v$ holds a single $n$-bit super-message $\Minput(v) = \{m_i(v)\}$ and has a single output message $\Moutput(v) = \{m_i(p_i(v))\}$. By \Cref{thm:generalized_routing_main}, Alg. $\ExtMatchingTransmission$ runs in $O(1)$ rounds and uses $1$-bit of bandwidth per edge.
\end{proof}

Next, we prove correctness of the algorithm:

\begin{lemma}\label{lem:distance-NR}[Correctness]
For any $u,v \in V$, w.h.p. it holds that (a) $\Hamm(\widetilde{C}(u,v),C(m_{u,v})) \leq 2\alpha n$ and (b) $m'_{u,v}=m_{u,v}$.
\end{lemma}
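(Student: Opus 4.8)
The plan is to pin all corruption on the single transmission round of Step~(1). First I would note that every other communication step is error-free: the broadcast of $r_1,\dots,r_B$ (\Cref{cor:broadcast}) and each routing instance $P_i$ of Step~(2) are executed by $\ExtMatchingTransmission$, which by \Cref{thm:generalized_routing_main} is deterministically correct against any $\alpha$-ABD adversary — hence against the weaker $\alpha$-NBD adversary — as $\alpha\le 1/(8\cdot 10^4)$. Here $P_i$ is a legitimate instance: each node holds one super-message $m_i(\cdot)$ of $\lambda=n$ bits and is the target of exactly one super-message ($m_i(p_i(v))$), so $k=1$, and all targets $p_i^{-1}(\cdot)$ are globally known. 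Consequently all nodes agree on $p_1,\dots,p_B$, and every $v$ indeed reconstructs, for each $u$, the string $\widetilde C(u,v)$ whose $i$-th coordinate is the bit that $p_i(v)$ received from $u$ over the edge $\{u,p_i(v)\}$ in Step~(1). Hence a coordinate $i$ can have $\widetilde C(u,v)[i]\ne C(m_{u,v})[i]$ only if $\{u,p_i(v)\}$ belongs to the adversarial set $F$ of the Step-(1) round, so writing $N_F(u)$ for the $F$-neighbourhood of $u$ (of size $\deg_F(u)\le\alpha n$),
\[
\Hamm\!\big(\widetilde C(u,v),C(m_{u,v})\big)\ \le\ \big|\{\, i\in[B]\ :\ p_i(v)\in N_F(u)\,\}\big|.
\]

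The next step is to exploit non-adaptivity, which is exactly what makes the scheme work. An $\alpha$-NBD adversary commits to $F$ before $v_1$ samples $r_1,\dots,r_B$, so $F$ is independent of these numbers; since $r_i$ is uniform over $\{1,\dots,n\}$, each $p_i(v)$ is uniform over $V$ and $p_1(v),\dots,p_B(v)$ are mutually independent. Thus the indicators $Z_i=\mathbf{1}[p_i(v)\in N_F(u)]$ are independent with $\Pr[Z_i=1]=\deg_F(u)/n\le\alpha$, and $\Hamm(\widetilde C(u,v),C(m_{u,v}))\le\sum_{i\in[B]}Z_i$. A standard concentration bound over these $B$ independent trials gives $\sum_i Z_i\le 2\alpha n$ with high probability, yielding part~(a); for part~(b) I use the sharper estimate, with $t=\delta_C B/2$,
\[
\Pr\Big[\textstyle\sum_i Z_i\ \ge\ \delta_C B/2\Big]\ \le\ \binom{B}{\delta_C B/2}\alpha^{\delta_C B/2}\ \le\ \Big(\tfrac{2e\alpha}{\delta_C}\Big)^{\delta_C B/2}.
\]
Since $\alpha\le 1/(8\cdot 10^4)$ while $\delta_C\ge 1/10$ by \Cref{lem:justesen_code}, the base is below $1/1000$, and because $B=B'/\tau_C$ is a (large) constant times $\log n$, this probability is at most $n^{-c'}$ for a large constant $c'$ (which can be increased further by using a code of smaller rate $\tau_C$).

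For part~(b), on this high-probability event we have $\Hamm(\widetilde C(u,v),C(m_{u,v}))<\delta_C\,|C(m_{u,v})|/2$, so the decoding guarantee of \Cref{def:ECC} applied to the Justesen code (\Cref{lem:justesen_code}) gives $m'_{u,v}=\Decode(\widetilde C(u,v))=m_{u,v}$. A union bound over the $n^2$ pairs $(u,v)$, each failing with probability at most $n^{-c'}$, shows that all nodes decode all their messages correctly with high probability, as needed.

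The only non-routine point — and the heart of the non-adaptive algorithm — is the independence observation in the second paragraph: by sprinkling the $B$ symbols of each codeword onto $B$ uniformly and independently chosen positions of the clique, a non-adaptive adversary, blind to the permutations $p_i$, hits any \emph{fixed} codeword in each coordinate only with probability $\le\alpha$, independently across the coordinates, so the per-codeword error fraction concentrates well below $\delta_C/2$ even though a codeword carries only $\Theta(\log n)$ symbols. The attendant technicality is minor: since $B$ is merely logarithmic, one must invoke Chernoff in the $\binom{B}{t}\alpha^t$ form rather than the multiplicative form, and check that $2e\alpha/\delta_C\ll 1$ — precisely what $\alpha\le 1/(8\cdot 10^4)$ buys — so that the per-pair failure probability comfortably beats the $n^2$-term union bound.
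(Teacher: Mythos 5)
Your proof follows essentially the same route as the paper's: reduce all corruption to the single transmission round of Step~(1) (since the broadcast and the $P_i$ routings are handled deterministically by $\ExtMatchingTransmission$), use non-adaptivity plus the independence of $r_1,\dots,r_B$ to argue the per-coordinate corruption indicators are independent Bernoulli$(\le\alpha)$, apply Chernoff to bound the Hamming distance below $\delta_C B/2$, and then decode. Your version is a bit more explicit about the Chernoff constants (the $\binom{B}{t}\alpha^t$ form and the role of $\tau_C$), which is a worthwhile refinement given $B=\Theta(\log n)$, and note that in part~(a) the quantity being bounded is really $2\alpha B$ (as the paper's own proof computes) rather than the $2\alpha n$ appearing in the lemma statement.
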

\begin{proof}
For any $u,v \in V, i \in [B]$, let $I_i(u,v)$ be the indicator random variable on the event that $C(m_{u,v})[i] \neq \widetilde{C}(m_{u,v})[i]$. First we show that $\Pr\left(I_i(u,v) = 1 \right) \leq \alpha$, and that for any $u,v \in V$, $I_1(u,v),\dots,I_B(u,v)$ are mutually independent random variables.

Recall that for any $i \in [B]$, $r_i$ is a uniformly random number in $\{1,\dots,n\}$, and therefore $p_i(v)$ is a uniformly random node in $\{v_1,\dots,v_n\}$. Additionally, recall that in the $\alpha$-NBD setting, in every round $j$, the adversary picks set the corrupted edge set $F_j$ only based on the initial configuration, and it is oblivious to the randomness of the algorithm. In particular, for the round $j$ in which the network runs $\Send(u,A(u,v),C(m_{u,v}))$ for all $u,v \in V$ in parallel, it holds that $\Pr\left((u,p_i(v)) \in F_j\right) \leq \deg_{F_j}(u)/n \leq \alpha$. Moreover, since $r_1,\dots,r_B$ are chosen independently, then $p_1(v),\dots,p_B(v)$ are independent random variables, since $p_i(v)$ only depends on $v$ and $r_i$.

Let $X_{u,v} = \sum_{i=1}^B I_i(u,v)$. By the Chernoff bound, $X_{u,v} \leq 2\alpha B$ w.h.p. (since $B = \Omega(\log{n})$). Therefore, w.h.p. $\Hamm(\widetilde{C}(u,v),C(m_{u,v})) \leq 2\alpha B$. Finally, we claim that $m'_{u,v}=m_{u,v}$, w.h.p. Since $\alpha \leq 1/(8 \cdot 10^4)$, we have that 
$$\Hamm(\widetilde{C}(m_{u,v}),C(m_{u,v})) \leq 2\alpha B < \delta_C B/2~.$$ 
Hence, $m'_{u,v} = \Decode(\widetilde{C}(m_{u,v})) = m_{u,v}$.
\end{proof}

\Cref{thm:non_rusing_main} follows immediately by \Cref{lem:static_first_round}, \Cref{lem:static_learn_subprocedure},  and \Cref{lem:distance-NR}. This completes the proof of \Cref{thm:non_rusing_main} (i.e., \Cref{thm:informal_non_adaptive}).

    \subsection{Handling Adaptive Adversaries}\label{sec:adaptive}

We present an $O(1)$-round randomized compiler in the $\alpha$-ABD setting, for $\alpha = \exp(-\sqrt{\log{n}\log\log{n}})$. Our algorithm combines both locally decodable codes (LDCs) and sketching techniques. The limitation on the fault parameter $\alpha$ stems from the parameters of the best known LDC, thus we present our theorem in terms of the LDC parameters, to allow the integration of future LDCs with improved parameters.
Missing proofs are deferred to \Cref{app:adaptive}. 
\begin{theorem}
\label{lem:mobile_main}
Let $C$ be a non-adaptive LDC with query complexity $q$, constant rate and relative distance, such that $\LDCDecode$ fails with probability at most $1/n^3$, and whose local decoding computation uses at most $O(n)$ random bits. There is a randomized protocol $\AdaptativeAllToAll$ that solves an $\AllToAllComm$ instance in the $\alpha$-ABD setting with $B=1$ for $\alpha=\Theta(1/(q\log^6{n}))$ in $O(1)$ rounds.
\end{theorem}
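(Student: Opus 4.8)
The plan is to follow the ``Take~II'' outline of \Cref{sec:overview}. Throughout, set $\alpha=\Theta(1/(q\log^6 n))$, so that $1/\alpha=\Theta(q\log^6 n)$; note $\alpha\le 1/(8\cdot 10^4)$ for large $n$, so \Cref{thm:generalized_routing_main} and \Cref{cor:broadcast} apply, and every shared random object below is generated by $v_1$ strictly after round~$1$ and broadcast via \Cref{cor:broadcast} (each is $O(n)$ bits, hence $O(1)$ rounds).

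First I would reduce the task to delivering one short string per node. In round~$1$ every $u$ sends $m_{u,v}$ to $v$; writing $\widetilde m_{u,v}$ for what is received and $M_{bad}(v)=\{u:\widetilde m_{u,v}\ne m_{u,v}\}$, we have $|M_{bad}(v)|\le\alpha n$ since $\deg(F_1)\le\alpha n$. Sample a $\Theta(\log n)$-wise independent $h\colon V\to[\alpha n]$ (\Cref{lem:kwise_indepedent_hash}, $O(\log^2 n)$ bits), inducing a partition $P_1,\dots,P_{\alpha n}$ of $V$ with all parts of size $\Theta(1/\alpha)$ w.h.p.\ by \Cref{lem:concentration_k_wise}. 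Since $M_{bad}(v)$ is fixed in round~$1$ it is independent of $h$ even against a rushing adversary, and as $\mathbb E[|M_{bad}(v)\cap P_j|]\le 1$, \Cref{lem:concentration_k_wise} gives $|M_{bad}(v)\cap P_j|=O(\log n)$ for all $v,j$ w.h.p. For each $(j,v)$ form the $\Theta(\log n)$-sparse recovery sketch $\Sk(P_j,v)$ of \Cref{lem:k_sparse_recover} by inserting $(\ID(u)\circ m_{u,v},1)$ for $u\in P_j$ (all sketches sharing one broadcast random seed); it has $O(\log^3 n)$ bits. By \Cref{lem:sparse}, once $v$ knows $S_v:=\{\Sk(P_j,v)\}_{j\in[\alpha n]}$ — an $O(\alpha n\log^3 n)=o(n)$-bit string — it can subtract its received bits and $\Recover$ all of $\{m_{u,v}\}_{u\in V}$; so it suffices to hand each $v$ its $S_v$ resiliently.

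Next I would compute all the sketches with one routing call. Split the targets publicly into blocks $V_1,\dots,V_{1/\alpha}$ of $\alpha n$ nodes and make $P_j[i]$ responsible for $\{\Sk(P_j,v)\}_{v\in V_i}$; run a single $\SuperMessagesRouting$ instance routing $m_{u,v}$ from $u$ to $P_{h(u)}[i]$ whenever $v\in V_i$. Here $u$ forms $1/\alpha$ super-messages of $\alpha n$ bits, the $i$-th being $\{m_{u,v}:v\in V_i\}$ with the single target $P_{h(u)}[i]$ (publicly computable from $u,v,h$), and each responsible node is the target of $|P_j|=O(1/\alpha)$ of them, so \Cref{thm:generalized_routing_main} applies in $O(1)$ rounds; being deterministic it is unaffected by adaptivity, and since $u$ holds the true $m_{u,v}$ the received bits are correct, so $P_j[i]$ computes the correct sketches locally. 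Then I would encode these sketches with $C$ into a codeword $W$ of $\Theta(n^2)$ bits and distribute $W$ in bulk over $V$ — after a publicly known random permutation of codeword positions — so that every node holds a $\Theta(n)$-bit chunk with publicly known ownership, via $O(1)$ more $\SuperMessagesRouting$ calls. Fixing public randomness $R$ for $\LDCDecode$, each $v$ must learn the $O(\alpha n\log^3 n)$ bits of $S_v$ inside $W$; by \emph{non-adaptivity} of $C$ the $q$ codeword positions needed to decode each such bit are $\LDCDecodeIndices(i,R)$, depending only on $(i,R)$, so every node can compute the whole set $Q_v$ of positions $v$ wants, with $|Q_v|=O(q\,\alpha n\log^3 n)=O(n/\log^3 n)=o(n)$. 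The permutation makes $Q_v$ near-uniform, so each node holds $O(\log n)$ of each $Q_v$, and $O(1)$ further $\SuperMessagesRouting$ calls (bundling targets into blocks, using the routing theorem's largest admissible $k$) deliver $\{W[p]:p\in Q_v\}$ to every $v$; finally $v$ runs $\LDCDecode$ to reconstruct each bit of $S_v$ (one decode fails with probability $\le n^{-3}$; a union bound over the $\le n^2$ (vertex, bit) pairs is $o(1)$), assembles $S_v$, and applies \Cref{lem:sparse}.

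The hard part will be the two bulk-transfer steps of the last paragraph: forcing the distribution of the codeword $W$, and the symmetric collection of the query sets $Q_v$, through the rigid $\SuperMessagesRouting$ interface — at most $O(1/\alpha)$ super-messages of $\Theta(\alpha n)$ bits per node, with publicly known targets — while keeping every node's send- and receive-load $O(n)$. It is exactly the $\mathrm{polylog}\,n$ and $q$ factors accumulated in this accounting (sketch length $\Theta(\log^3 n)$, concentration slack, routing bundling overhead) that are absorbed by taking $\alpha$ as small as $\Theta(1/(q\log^6 n))$; the ``LDC $+$ sketch'' idea itself is straightforward. A secondary point that must be handled is the independence between algorithmic randomness and what the rushing adversary has committed to: because $h$, the sketch seed, the position-permutation and $R$ are all drawn after round~$1$, each $M_{bad}(v)$ — the only adversarial object the analysis conditions on — is independent of all of them, even though the adversary sees them in later rounds.
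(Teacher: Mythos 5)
Your high-level plan — round-1 transmission, a post-hoc random partition for sketch concentration, LDC-encoding the sketch strings, and delivering query bits via $\SuperMessagesRouting$ — matches the paper's Steps~I--IV, and your first routing instance (send $m_{u,v}$ to $P_{h(u)}[i]$ for $v\in V_i$) is essentially the paper's Lemma~\ref{lem:rand_mobile_transmission_first}. However, there is a genuine gap at the encoding step. You propose to encode \emph{all} sketches into a single codeword $W$ of $\omega(n)$ bits (you write $\Theta(n^2)$; in fact it is $\Theta(\alpha n^2\log^3 n)$, still $\omega(n)$). No node can compute this encoding: an LDC encoding reads its entire input, and here the input (the concatenated sketches) is held distributively across $n$ nodes and is far larger than any single node's bandwidth budget, so it cannot be gathered in $O(1)$ rounds. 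The paper sidesteps this with a crucial intermediate step you omit: split each sketch string $\Sk(P_j)$ into $b=\widetilde\Theta(1)$ blocks $\Sk_\ell(P_j)$ of exactly $x=\Theta(n)$ bits, route block $\ell$ to a designated leader $P_j[\ell]$ (Lemma~\ref{lem:rand_mobile_transmission_second}), and have that leader \emph{locally} encode its $\Theta(n)$-bit block and disperse the resulting $\Theta(n)$-bit codeword over $V$ in one round.

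A second, closely related difference is that your ``random permutation of codeword positions'' destroys the structure the paper relies on to make the final routing fit the $\SuperMessagesRouting$ interface. In the paper, $\Sk(P_j,\{v_i\})$ occupies the \emph{same} positions $\loc(v_i)$ inside $\Sk_{\ell_i}(P_j)$ for every $j$; combined with non-adaptivity of the LDC under a single shared seed $R_3$, the set $\LDCDecodeIndices(k,R_3)$ is also $j$-independent, so there is a fixed set $N(v_i)$ of only $q\cdot t$ nodes whose received bits suffice to decode all $\alpha n$ of $v_i$'s sketch blocks simultaneously; each such node then sends $v_i$ one coherent $(b\alpha n)$-bit super-message $\widetilde M(v_r)$ (Lemma~\ref{lem:rand_mobile_transmission_third}). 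With your permutation, $v$'s $|Q_v|=\Theta(q\alpha n\log^3 n)$ query positions are scattered uniformly, so from each sender $w$ node $v$ needs only $O(\log n)$ bits, and $w$ would have to form $\Theta(n)$ tiny super-messages with distinct targets — this does not conform to the $O(1/\alpha)$ super-messages of $\Theta(\alpha n)$ bits per node required by Theorem~\ref{thm:generalized_routing_main}. You flag this ``bundling'' as the hard part, but it is not just parameter accounting that $\alpha$ can absorb: the position alignment is the mechanism that makes it work, and it is missing from your proposal.
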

 \Cref{thm:informal_adaptive_randomized} is obtained from \Cref{lem:mobile_main} by using the LDC of \Cref{lem:LDC}. Throughout the algorithm, we assume w.l.o.g. that $1/\alpha$ is an integer. Otherwise, we can take $\alpha' < \alpha$ to be such that $\frac{1}{\alpha'} = \lceil \frac{1}{\alpha} \rceil$ and apply the algorithm on $\alpha'$ (which only affects constants). Additionally, by \Cref{lem:change_n_value}, we can assume w.l.o.g. that $n$ is divisible by $\frac{1}{\alpha}$.

\smallskip
\noindent \textbf{Description of Algorithm $\AdaptativeAllToAll$.}

\noindent \textbf{Step I. Initialization Phase.} In the first round, each node $u$ sends to each node $v$ the message $m_{u,v}$ (some of these messages could be corrupted). Denote by $\widetilde{m}_{u,v}$ the message that $v$ receives from $u$ in this step. We identify each message $m_{u,v}$ with $\ID(u,v)=\ID(u)\circ \ID(v)$. Note that despite the fact that the received message $\widetilde{m}_{u,v}$ might be corrupted, $v$ still knows the ID of this message $\ID(u,v)$ as it knows the sender name $u$ (as we assume the KT1 model). 

Following this, node $v_1$ generates two random strings $R_1,R_2$ of size $O(n)$ bits each, and broadcasts $R_1,R_2$ to the entire network using Proc. $\ExtMatchingTransmission$ (see \Cref{cor:broadcast}). For every node $v$, denote the subset of corrupted received messages by: 
$$\Mbad(v) = \{\widetilde{m}_{u,v} \stt u \in V, \widetilde{m}_{u,v} \neq m_{u,v}\}.$$ 

For the remainder of the algorithm, our goal is to allow each node $v \in V$ identifying and correcting the corrupted messages $\Mbad(v)$. Crucially, we remark that the random bits $R_1,R_2$ are determined only \emph{after} $\Mbad(v)$ has been chosen by the adversary.

For a subset of vertices $A \subseteq V$, let
$\Mbad(A,v) = \{m_{u,v} \neq \widetilde{m}_{u,v} \stt u \in A\}$ be the set of corrupted messages received by $v$ in this step from a node in $A$.

\smallskip
\noindent \textbf{Step II. Information Concentration Phase.} 
The goal of this step is to concentrate the information needed to correct the corrupted messages $\bigcup_{v \in V} \Mbad(v)$ on some ``small'' set of $\widetilde{\Theta}(\alpha n)$ nodes.   Information-theoretically, this is possible, since for any node $v$ it holds that $|\Mbad(v)| \leq \alpha n$, i.e., there are at most $\alpha n^2$ corrupted bits and each node can receive $O(n)$ bits in $O(1)$ rounds. To achieve this goal, we use sparse recovery sketches as a sort of a compression mechanism, that allows us to encode the information needed to correct $\Mbad(v)$ for each $v$ using less space, while being oblivious to the set $\Mbad(v)$.

\smallskip
\noindent \textbf{II(a). Node Partitionings.}
The algorithm defines two partitions of the nodes, given by $S$ and $P$. The first partition $S=\{S_1,\ldots, S_{1/\alpha}\}$ is obtained by partitioning $V$ deterministically into $1/\alpha$ consecutive subsets of size $\alpha n$, denoted $S_1 = \{v_1,\dots,v_{\alpha n}\}, S_2 = \{v_{\alpha n+1},\dots,v_{2\alpha n}\}, \dots S_{1/\alpha} = \{v_{n-\alpha n+1},\dots,v_n\}$. The second partition $P=\{P_1,\ldots, P_{\alpha n}\}$ with $|P_j|=1/\alpha$ is randomized and is computed locally by each node using the random coins of $R_1$, as described in the next lemma.

\begin{lemma}
\label{lem:bad_messages_in_random_partition}
Given the random string $R_1$, there is a zero-round protocol that outputs a random partition $P = \{P_1,\dots,P_{\alpha n}\}$, such that (a) $|P_i| = 1/\alpha$ for each $j \in [\alpha n]$, and (b) w.h.p. for all $j \in [\alpha n]$ and $v \in V$ it simultaneously holds that $|\Mbad(P_j,v)| = O(\log{n})$. The local computational time of the nodes is polynomial.
\end{lemma}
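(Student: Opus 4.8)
My plan is to read a $\Theta(\log n)$-wise independent hash function off $R_1$, use it to split $V$ into $\alpha n$ approximately balanced buckets, and then deterministically re-block those buckets into parts of exact size $1/\alpha$; the bounded-independence tail bound of \Cref{lem:concentration_k_wise} will control both the balance of the buckets and the number of corrupted senders falling into each of them. For each $v \in V$ write $B_v = \{u \in V : \widetilde{m}_{u,v} \neq m_{u,v}\}$ for the set of senders whose first-round message to $v$ was corrupted. Since messages can only be altered along edges of $F_1$ and $\deg(F_1) \leq \alpha n$, we have $|B_v| \leq \alpha n$ for every $v$; moreover $R_1$ is generated only after the first round, so the sets $\{B_v\}_{v}$ are fixed before $R_1$ is drawn and may be treated as arbitrary but fixed. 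As $|\Mbad(P_j,v)| = |P_j \cap B_v|$, claim (b) reduces to bounding $|P_j \cap B_v|$.

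\textbf{Constructing $P$.} Using \Cref{lem:kwise_indepedent_hash} and $O(\log^2 n) = O(n)$ bits of $R_1$, every node would locally compute the same $k$-wise independent hash $f : V \to [\alpha n]$, with $k = \lceil c'\log n\rceil$ taken even, where $c'$ is the constant from \Cref{lem:concentration_k_wise} for an exponent $c$ fixed large below. Put $Q_j = f^{-1}(j)$, list $Q_1,\ldots,Q_{\alpha n}$ in order with each $Q_j$ internally sorted by ID to obtain a total order on $V$, and let $P_j$ be the $j$-th block of $1/\alpha$ consecutive nodes in this order. Since $n$ is divisible by $1/\alpha$ (which we may assume by \Cref{lem:change_n_value}), this yields $\alpha n$ parts with $|P_j| = 1/\alpha$, establishing (a); the construction uses no communication and polynomial local time, so it is a zero-round protocol.

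\textbf{Concentration.} Fixing $j$, the size $|Q_j| = \sum_{v\in V}\mathbf{1}\{f(v)=j\}$ is a sum of $n$ $k$-wise independent $[0,1]$ variables with mean $\mu = 1/\alpha$; since $1/\alpha = \Omega(\log^6 n) \geq 16\lceil c'\log n\rceil$ in the relevant regime, \Cref{lem:concentration_k_wise} gives $\tfrac{1}{2\alpha}\leq |Q_j|\leq \tfrac{3}{2\alpha}$ except with probability $n^{-c}$. Fixing $j$ and $v$, the intersection $|Q_j\cap B_v| = \sum_{u\in B_v}\mathbf{1}\{f(u)=j\}$ is, after padding with $n-|B_v|$ identically-zero variables, a sum of $n$ $k$-wise independent $[0,1]$ variables with mean $\mu = |B_v|/(\alpha n)\leq 1\leq \lceil c'\log n\rceil$, so \Cref{lem:concentration_k_wise} gives $|Q_j\cap B_v|\leq \mu + 2\lceil c'\log n\rceil = O(\log n)$ except with probability $n^{-c}$. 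Choosing $c$ large and union-bounding over the $\leq n$ buckets and the $\leq n^2$ pairs $(j,v)$, all these bounds hold simultaneously with high probability.

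\textbf{Finishing, and the main obstacle.} On the high-probability event, every $Q_i$ has size at least $1/(2\alpha)$, so any window of $1/\alpha$ consecutive nodes in the concatenated order meets at most three consecutive buckets $Q_i, Q_{i+1}, Q_{i+2}$, whence
\[
|\Mbad(P_j,v)| \;=\; |P_j\cap B_v| \;\leq\; \sum_{\ell=0}^{2}|Q_{i+\ell}\cap B_v| \;=\; O(\log n)
\]
for all $j$ and $v$, which is (b). The crux is that $R_1$ carries only $O(n)$ bits, far too few for a uniformly random balanced partition, which forces the use of bounded independence; the resulting buckets are only approximately balanced, and the whole plan works only because the target part size $1/\alpha$ is $\omega(\log n)$ in this regime --- so the buckets concentrate to within a constant factor of $1/\alpha$ and each final part lies in $O(1)$ buckets --- and because the $k$-wise tail of \Cref{lem:concentration_k_wise} is sharp enough (for a suitable constant $c$) to survive the $n^2$-way union bound. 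These are exactly the points that will need careful verification.
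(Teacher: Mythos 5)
Your proposal matches the paper's proof essentially step for step: both read a $\Theta(\log n)$-wise independent hash $f:V\to[\alpha n]$ off $R_1$, bound bucket sizes from below by $1/(2\alpha)$ and bad counts per bucket by $O(\log n)$ via \Cref{lem:concentration_k_wise}, then re-block the $f$-sorted order into consecutive windows of size $1/\alpha$ and observe that each window meets at most three buckets. Your treatment is in fact a bit more careful than the paper's on two points it leaves implicit — verifying the $\mu$-thresholds needed to invoke \Cref{lem:concentration_k_wise} (using $1/\alpha=\omega(\log n)$ for balance and $|B_v|\le\alpha n$ for sparsity) and padding the bad-count sum to $n$ terms so the failure probability is $n^{-c}$ rather than $|B_v|^{-c}$ — so the argument is sound.
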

\def\APPENDRandomPartition{
\begin{proof}[Proof of Lemma \ref{lem:bad_messages_in_random_partition}]
    Each node does the following sequential sampling process, using the shared randomness $R_1$. We remark that while the size of $R_1$ is $\Theta(n)$ bits, in fact $\widetilde{\Theta}(1)$ random bits would suffice.
    
    Before describing the construction of $P$, we describe the construction of a partition $P' = \{P'_1,\dots,P'_{\alpha n}\}$ such that for each $j \in [\alpha n]$, both $|P'_j| = \Theta(1/\alpha)$ and $\Mbad(v,P'_j) = O(\log{n})$. Let $f:V \rightarrow [\alpha n]$ be a uniformly random function taken from a $\Theta(\log{n})$-wise independent hash function family (By \Cref{lem:kwise_indepedent_hash}, such a function can be computed using $\widetilde{O}(1)$ random bits). Let $P'_j = \{u \in V \stt f(u) = j\}$ be the set of nodes mapped into value $j \in [\alpha n]$. Since $|\Mbad(v)| \leq \alpha n$, then by the standard concentration bound on $\Theta(\log{n})$-wise independent variables (see \Cref{lem:concentration_k_wise}), it holds w.h.p. for every $j \in [\alpha n]$ that 
    \begin{enumerate}[(a)]
        \item $\frac{1}{2\alpha} \leq |P'_j| \leq \frac{2}{\alpha}$,
        \item  $|\Mbad(v,P'_j)| = O(\log{n})$.
    \end{enumerate}
    
    This concludes the construction of $P'$. Next, we describe how to adapt $P'$ to obtain a partition $P$ with set size exactly $1/\alpha$ while retaining the rest of the desired properties. We order the elements of $V$ by their value in the hash function $f(v)$, breaking ties arbitrarily. I.e., we assign a value to each node $\pi: V \rightarrow [n]$, such that for any $u_1,u_2 \in V$ with $f(u_1) < f(u_2)$, then $\pi(u_1) < \pi(u_2)$. Let $$P_j = \{u \stt  \frac{j-1}{\alpha}+1\leq \pi(u) \leq \frac{j}{\alpha}\}$$ be the $j^{th}$ segment of size $1/\alpha$ in the ordering $\pi$ of $V$. We remark that indeed, $|P_j| = 1/\alpha$. Moreover, since $P_j$ is comprised of consecutive elements of the ordering, and by the size guarantees on the sets of $P'$ (property (a)), for each $j \in [\alpha n]$ there exists a value $j' \in [\alpha n -2]$ such that $P_j \subseteq P'_{j'} \cup P'_{j'+1} \cup P'_{j'+2}$. By property (b), $$|\Mbad(v,P'_{j'})|+|\Mbad(v,P'_{j'+1})|+|\Mbad(v,P'_{j'+2})| = O(\log{n}),$$ meaning that $|\Mbad(v,P_{j})| = O(\log{n})$. 
\end{proof}
}

\smallskip
\noindent \textbf{II(b). Sketching and Routing.}
In this step, we perform the following three sub-steps: 
\begin{itemize}
\item For each $j \in [\alpha n]$ in parallel, each node $P_j[i]$ learns the set of messages $M(P_j,S_i)$ by applying Alg. $\ExtMatchingTransmission$ (see \Cref{lem:rand_mobile_transmission_first}).

\item Node $P_j[i]$ initializes for each $v \in S_i$ an empty $O(\log{n})$-sparse recovery sketch, denoted $\Sk(P_j,\{v\})$ using randomness $R_2$ (see \Cref{lem:k_sparse_recover}).  For each message in $m_{u,v} \in M(P_j,S_i)$, node $P_j[i]$ performs operation $\Add(\Sk(P_j,\{v\}),\ID(u,v) \circ m_{u,v},1,R_2)$, i.e., it adds the string $\ID(u,v) \circ m_{u,v}$ with frequency $1$ to the sketch $\Sk(P_j,\{v\})$. Crucially, we assume that all sketches are encoded by some fixed bit-length $t = O(\log^3 n)$. This can be assumed, w.l.o.g., via padding with zeros.

In the resulting state, node $P_j[i]$ has $|S_i| = \alpha n$ sketches, each encoded by exactly $t = \widetilde{O}(1)$ bits. For the remainder of the algorithm, for a subset $X=\{x_1,\ldots, x_\ell\} \subseteq V$ with $\ID(x_1)<\ldots< \ID(x_\ell)$, the concatenation of sketches of the messages $M(P_j,X)$ is denoted by 
\begin{equation}
\label{eq:sketch-string}
\Sk(P_j,X)=\Sk(P_j,\{x_1\})\circ \ldots \circ \Sk(P_j,\{x_\ell\})~.
\end{equation}
When $X=V$, we may omit it and simply write $\Sk(P_j)=\Sk(P_j,V)$. At this point $\Sk(P_j)$ is ``collectively held'' by the nodes of $P_j$. In particular, node $P_j[i]$ holds $\Sk(P_j,S_i)$. We make the following very simple observation, which follows from the fact that all sketches are encoded by the same length $t$. For every $i \in \{1,\ldots, n\}$ and $j \in \{1,\ldots, \alpha n\}$, it holds that:

\begin{equation}
\label{eq:sketch_location_in_string}
   \Sk(P_j,\{v_i\}) = \Sk(P_j)[(i-1)t+1] \circ \dots \circ \Sk(P_j)[i\cdot t]~. 
\end{equation}

The positions of the string $\Sk(P_j,\{v_i\})$ in $\Sk(P_j)$ is independent of $j$. 

Let $C$ be an LDC code $C$ with the properties stated in \Cref{lem:mobile_main} and let $\tau_C \in (0,1)$ denote its rate, define:
\begin{equation}\label{eq:x-def}
 x=\lfloor \tau_C \cdot n \rfloor - (\lfloor \tau_C \cdot n \rfloor \pmod t)~.   
\end{equation}

Note that $x$ is a multiple of $t$ and that $x/\tau_C\leq n$. For the purpose of distributing the codewords of the sketches $\Sk(P_j)$, for every $j \in \{1,\ldots, \alpha n\}$, the algorithm partitions the string $\Sk(P_j)$ into $b$ sub-strings 
$$\Sk(P_j)=\Sk_1(P_j)\circ  \ldots \circ \Sk_b(P_j) \mbox{~where~} b = \lceil n \cdot t/x \rceil~.$$ 
By padding $\Sk(P_j)$ with zeros, we guarantee that 
each $\Sk_\ell(P_j)$ has exactly $x$ bits for every $\ell \in \{1,\ldots, b\}$. Since we take  $x$ to be a multiple of $t$, we ensure that each sketch $\Sk(P_j, \{v_i\})$, for each $v_i \in V$, is fully contained in a single string $\Sk_\ell(P_j)$ (i.e. no sketch is ``cut in half'' between two $\Sk_\ell(P_j)$ strings). 

\item The algorithm applies Proc. $\ExtMatchingTransmission$ in order to send the $\ell^{th}$ sketch $\Sk_\ell(P_j)$ to the   $\ell^{th}$ node in $P_j$ namely, $P_j[\ell]$, for every $\ell \in \{1,\ldots, b\}$ (see \Cref{lem:rand_mobile_transmission_second}). We denote the $b$ nodes $P_j[1],\ldots, P_j[b]$ as the \emph{leaders} of the group $P_j$. We then have that each sketch $\Sk(P_j)$ is held jointly by the $b=\widetilde{O}(1)$ leaders of $P_j$ for every $j \in [\alpha n]$.

\end{itemize}
Overall, at the end of this step, the concatenation of all sketches $\Sk(V)$, is jointly held by $b \cdot \alpha \cdot n$ nodes, as desired.

\smallskip
\noindent \textbf{Step III. Learning the Messages Sketches via LDC.}
Our goal in this step is for each $v$ to learn the sketch of its received messages $\Sk(V,\{v\})=\Sk(P_1,\{v\})\circ \ldots \circ \Sk(P_{\alpha n},\{v\})$. Then, by using the sparse-recovery properties of these sketches, $v$ will be able to deduce the correction of its corrupted messages $\Mbad(v)$. To learn the sketches, the algorithm uses LDCs.

Each sketch $\Sk_\ell(P_j)$ will be encoded using the LDC code $C$ with the properties stated in \Cref{lem:mobile_main}. The bits of each codeword will then be distributed among all the nodes. Each $v$ will compute 
$\Sk(P_j,\{v\})$ by querying a few nodes using the local decoding properties of the LDC. We apply the following steps:

 \begin{itemize}
    \item For each $\ell \in [b]$, node $P_j[\ell]$ encodes $\Sk_\ell(P_j)$ using the LDC $C$ into a codeword $C(\Sk_\ell(P_j))$ of size $x/\tau_C\leq n$. For simplicity, we assume again that each $C(\Sk_\ell(P_j))$ has exactly $n$ bits, this can be done by padding with zeros.

    In a single round of communication, each node $P_j[\ell]$ distributes its codeword $C(\Sk_\ell(P_j))$ among all the nodes in $V$, by sending the $r^{th}$ bit of $C(\Sk_\ell(P_j))$ to node $v_r$. This is done in parallel for every $j \in [\alpha n]$ and $\ell \in [b]$. Let
$\widetilde{C}(j,\ell,r)$ be the bit received by $v_r$ from the leader node $P_j[\ell]$. Also, $\widetilde{M}(v_r)$ be the concatenation of all the received bits $\{\widetilde{C}(j,\ell, r)\}_{j \in [\alpha n], \ell \in [b]}$ ordered by their index $(j,\ell)$. 
Formally, define
$$\widetilde{M}(v_r)=\widetilde{C}(1,1,r)\circ \ldots \circ\widetilde{C}(1,b,r) \circ \ldots \circ \widetilde{C}(\alpha n,1,r) \circ \ldots \circ \widetilde{C}(\alpha n,b,r)~.$$
Denote $M(v_r)[j,\ell]=\widetilde{C}(j,\ell,r)$. Clearly, $\widetilde{M}(v_r)$ has $b \cdot \alpha n$ bits. Our goal is to send each message $\widetilde{M}(v_r)$ to a subset of nodes $\Target(\widetilde{M}(v_r))$, such that each node $v_i \in V$ will be able to determine its sketch $\Sk(V,\{v_i\})$ by extracting information from its received $\widetilde{M}(v_r)$ messages. To keep the number of rounds small, it is desired that each $v_i$ will be a target of a small number of $\widetilde{M}(v_r)$ messages. For this purpose we exploit the local decoding properties of the LDC code. 

\item Node $v_1$ generates a random string $R_3$ of size $O(n)$ bits, and broadcasts $R_3$ to the entire network using Proc. $\ExtMatchingTransmission$  (see \Cref{cor:broadcast}).

\item This step is aimed at defining the targets set $\Target(\widetilde{M}(v_r))$ of the messages $\{\widetilde{M}(v_r)\}_{v_r \in V}$. This calls for characterizing the precise locations of the bits in the codewords $\{C(\Sk_\ell(P_j))\}_{\ell \in [b], j \in [\alpha n]}$ that a node $v_i$ should learn in order to deduce the original sketches $\{\Sk(P_j,\{v_i\})\}_{j \in [\alpha n]}$.

For every node $v_i \in V$ and $j \in [\alpha n]$ let $\ell_i \in [b]$ be the index of the leader node in $P_j$ that holds $\Sk(P_j,\{v_i\})$. Recall that each individual sketch $\Sk(P_j,\{v_i\})$ is held entirely by some leader node $P_j[\ell]$ for every $j$. By Eq. (\ref{eq:sketch_location_in_string}), it holds that the identity of $\ell$ is the same across all $j \in [\alpha n]$, hence $\ell_i$ is well defined. By knowing $n, x$ and $t$, each node $v_i$ can deduce the $t$ consecutive indices in $\Sk_{\ell_i}(P_j)$ that corresponds to the sketch $\Sk(P_j,\{v_i\})$. Note that these indices do not depend on $j$, we therefore denote them by $\loc(v_i)=\{z_{i,1},\ldots, z_{i,t}\}$ where 
$$\Sk(P_j, \{v_i\})=\Sk_{\ell_i}(P_j)[z_{i,1}]\circ \ldots \circ \Sk_{\ell_i}(P_j)[z_{i,t}], \mbox{~~for every~} j \in [\alpha n]~.$$

The goal of each $v_i$ is to query each of the codewords $C(\Sk_{\ell_i}(P_j))$, for $j \in [\alpha n]$, at $q t$ locations in order to determine $\Sk_{\ell_i}(P_j)[k]$ for every $k \in \loc(v_i)$, hence recovering $\Sk(P_j,\{v_i\})$, for every $j \in [\alpha n]$.

Altogether, $v_i$ should deduce $\alpha \cdot n \cdot q \cdot t$ bits that are stored by the set of $q \cdot t$ nodes given by:
$$N(v_i)=\{ v_r ~\mid~ k \in \loc(v_i),  r\in \LDCDecodeIndices(k,R_3)\}.$$

\item Using Proc. $\ExtMatchingTransmission$, the algorithm sends the $b\alpha n$-bit message $\widetilde{M}(v_r)$ to all nodes in $\Target(\widetilde{M}(v_r))=\{v_i ~\mid~ v_r \in N(v_i)\}$, for every $v_r \in V$. Overall, each node $v_i$ is the target of $q \cdot t \cdot b \cdot \alpha n$ bits, and each $v_r \in V$ is required to send $b \alpha n$ bits to possibly $n$ nodes. By taking $\alpha\leq 1/(q \cdot t\cdot b)$, we show in the analysis, that this can be done in $O(1)$ rounds (see \Cref{lem:rand_mobile_transmission_third}). 

\item At this point each $v_i$ holds the messages $\{\widetilde{M}(v_r)\}_{r \in N(v_i)}$. Each sketch $\Sk(P_j,\{v_i\})$ can be obtained by reading the bits $\{\widetilde{M}(v_r)[j,\ell_i]\}_{r \in N(v_i)}$ and applying the decoding algorithm $\LDCDecode$ on the obtained bits. 
\end{itemize}

\smallskip
\noindent \textbf{Step IV. Correction Step.}
In this final step, each node $v$ initially holds $\Sk(V,\{v\})$, and its goal is to identify and correct $\Mbad(v)$. This step is done completely locally by the nodes, i.e., without any communication.

For each $j \in [\alpha n]$, each $v$ adds for each $u \in P_j$ the message $\widetilde{m}_{u,v}$ with frequency $-1$ to $\Sk(P_j,\{v\})$, i.e., it performs $\Add(\Sk(P_j,\{v\}),\ID(u,v) \circ \widetilde{m}_{u,v},-1,R_2)$. It then applies the $\Recover(\Sk(P_j,\{v\}))$ operation each of $\Sk(P_j,\{v\})$ to receive a set of messages $M(P_j,\{v\})$. For any $u,v \in V$, if a message $\ID(u,v) \circ m'_{u,v} \in M(v)$ is recovered by $v$ with frequency $1$, $v$ outputs $m'_{u,v}$ as the message sent by $u$ to $v$. Otherwise, i.e. if no such message was recovered, node $v$ outputs $\widetilde{m}_{u,v}$ instead. This concludes the description of the algorithm.

\begin{figure}
    \centering
    \includegraphics[width=0.8\textwidth]{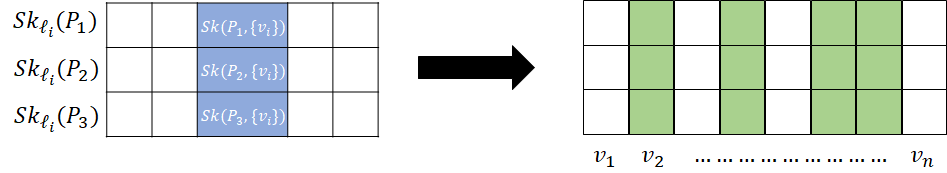}
    \caption{Node $v_i$ wishes to decode all bits of $\Sk(P_j,\{v_i\})$ from $\Sk_{\ell_i}(P_j)$ (marked as the blue cells). Crucially, these indices are in the same positions $P_j$. Since the LDC is non-adaptive, for a node $v_i$ to be able to decode all blue positions  simultaneously, its sufficient to learn the codeword bits given to nodes $\{v_r\}_{r \in N(v_i)}$ (marked as the green cells - due to non-adaptivity of $\LDCDecode$, they are the same positions for all $P_j$ if we use the same randomness to decode them all).} 
    \label{fig:random_ldc_intuition}
\end{figure}

\smallskip
\noindent \textbf{Analysis.} First, we describe the details of the routing lemmas used in the algorithm. For the purposes of this section, we say that a node $v$ learns $M(A,B)$ for $A,B \subseteq V$ if it learns each message $m_{a,b} \in M(A,B)$ and knows associated ids, $\ID(a,b)$. We remark that when using $\ExtMatchingTransmission$, each node knows the super-message identifier $(u,j)$ of each super-message it receives. Hence, for example, in a $\ExtMatchingTransmission$ instance where the $j^{th}$ input super-message of a node $u$ is of the form $M^{\circ}(A,B)$, each node $v$ learning the super-message can deduce both the content and the associated identifiers of each message in $M(A,B)$. We omit the exact specifics of these details from the proofs. Missing proofs are deferred to \Cref{app:adaptive}.
\begin{lemma}
    \label{lem:rand_mobile_transmission_first}
    At the first part of \emph{Information Concentration Step},  for each $j \in [\alpha n], i \in [1/\alpha]$, each node $P_j[i]$ can learn $M(P_j,S_i)$ in $O(1)$ rounds.
\end{lemma}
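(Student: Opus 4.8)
The plan is to realize the required aggregation as a single instance of $\SuperMessagesRouting$ and invoke \Cref{thm:generalized_routing_main}. Recall that in $\AllToAllComm$ each node $u$ holds all of its outgoing messages $\{m_{u,v}\}_{v\in V}$ as input, so $u$ in particular knows the uncorrupted sub-collection $M(\{u\},S_i)$ for every block $S_i$ of the deterministic partition $S$; moreover both $S$ and the random partition $P$ (which every node has computed locally from the shared string $R_1$ via \Cref{lem:bad_messages_in_random_partition}) are common knowledge. Writing $P_{j(u)}$ for the unique part of $P$ containing $u$, I would set up a $\SuperMessagesRouting$ instance with $k = 1/\alpha$ and $\lambda = \alpha n$ as follows: for each $u \in V$ and each $i \in [1/\alpha]$, let the $i$-th super-message of $u$ be the concatenation $M^{\circ}(\{u\},S_i)$, which has exactly $|S_i| = \alpha n = \lambda$ bits, and declare its single target to be $P_{j(u)}[i]$.

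Next I would check that this is a legal instance and that solving it yields the lemma. Every node is the source of exactly $1/\alpha = k$ super-messages; every node $P_j[i]$ appears in the target list of exactly $|P_j| = 1/\alpha = k$ super-messages (one for each $u \in P_j$); and all target lists are known to all nodes since $S$ is fixed and $P$ is derived from $R_1$. After the routing of \Cref{thm:generalized_routing_main} terminates, $P_j[i]$ has received $M^{\circ}(\{u\},S_i)$ from every $u \in P_j$, which together with the super-message identifiers (from which message identifiers $\ID(u,v)$ are recovered) is precisely knowledge of $M(P_j,S_i)$. For the running time, I would observe that $B=1$ and $k = 1/\alpha = \Theta(q\log^6 n) = n^{o(1)}$, so $k = O(\min(1/\alpha, n/\log n))$ and $\alpha = \Theta(1/(q\log^6 n)) \le 1/(8\cdot 10^4)$ for $n$ large enough; \Cref{thm:generalized_routing_main} then gives round complexity $O(k\lambda/(Bn)) = O((1/\alpha)(\alpha n)/n) = O(1)$.

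I do not expect a genuine obstacle: the content of the lemma is essentially the observation that this communication pattern has perfectly balanced source and target loads — each node sends $k$ super-messages of size $n/k$ and receives $k$ super-messages — and hence fits the super-message routing template at its optimal parameter regime. The only loose ends are bookkeeping items already absorbed inside \Cref{thm:generalized_routing_main}: $\lambda = \alpha n$ is not literally $cn/k$, but the theorem's proof splits each super-message into $O(k\lambda/n)=O(1)$ blocks of length $cn/k$; and the constant-factor gap between $k=1/\alpha$ and the admissible bound $\lfloor 1/(8\cdot 10^4\alpha)\rfloor$ in \Cref{lem:ext_matching_transmission} is closed by running that subroutine $O(1)$ times on disjoint batches of super-messages. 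If forced to name the trickiest point, it is making sure the load bound ``at most $k$ per target'' is met rather than exceeded, which holds here with equality because $P$ is an exact partition into parts of size $1/\alpha$.
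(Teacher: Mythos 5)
Your proposal is correct and matches the paper's proof essentially verbatim: both set up a single $\SuperMessagesRouting$ instance with $k = 1/\alpha$ and $\lambda = \alpha n$, take $\Minput(u) = \{M^{\circ}(\{u\},S_i)\}_{i\in[1/\alpha]}$ with target $P_{j(u)}[i]$ for the $i$-th super-message, verify source/target load $1/\alpha$ each, and invoke \Cref{thm:generalized_routing_main} for the $O(1)$-round bound. The extra bookkeeping you raise about $\lambda$ not being literally $cn/k$ and the constant-factor slack on $k$ is indeed already absorbed by \Cref{thm:generalized_routing_main} (which the paper simply cites), so no further work is needed.
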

\begin{proof}
    We use Proc. $\ExtMatchingTransmission$ on a routing instance where each node holds $1/\alpha$ super-messages, each of size $\alpha n$. Specifically, each node $v$ defines $1/\alpha$ super-messages, 
    $$\Minput(v) = \{M^{\circ}(\{v\},S_1),\ldots,M^{\circ}(\{v\},S_{1/\alpha})\}.$$ 
    
     For $j \in [\alpha n]$ and $i \in [1/\alpha]$, the super-messages targeted at node $P_j[i]$ are: 
     $$\Moutput(P_j[i]) = \{M^{\circ}(\{v\},S_i)\}_{v \in P_j}.$$ 
     
     Recall that $|S_i| = \alpha n$ and $|P_j|=1/\alpha$. Hence, each super-message has $O(\alpha n)$ bits, each node has $1/\alpha$ input super-messages in $\Minput(v)$, and it is a target of $1/\alpha$ output super-messages in $\Moutput(v)$. Therefore, \Cref{thm:generalized_routing_main} guarantees that Proc. $\ExtMatchingTransmission$ terminates correctly after $O(1)$ rounds.
\end{proof}

\begin{lemma}
\label{lem:rand_mobile_transmission_second}
    At the second part of Step (II), for each $j \in [\alpha n], \ell \in [b]$, each node $P_j[\ell]$ can learn $\Sk_\ell(P_j)$ in $O(1)$ rounds.
\end{lemma}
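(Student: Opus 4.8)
The plan is to realize this routing as a single application of $\ExtMatchingTransmission$ (\Cref{thm:generalized_routing_main}), carried out in parallel over all groups $P_1,\dots,P_{\alpha n}$. Recall the state at this point: each node $v=P_j[i]$ holds the string $\Sk(P_j,S_i)$, which is the concatenation of the $t$-bit sketches $\Sk(P_j,\{v'\})$ over $v'\in S_i$ and therefore has exactly $|S_i|\cdot t=\alpha n t$ bits; since $\Sk(P_j)=\Sk(P_j,S_1)\circ\dots\circ\Sk(P_j,S_{1/\alpha})$, the bits held by $P_j[i]$ form the fixed block of positions $[(i-1)\alpha n t+1,\;i\alpha n t]$ of $\Sk(P_j)$, whereas the target string $\Sk_\ell(P_j)$ is the $\ell$-th run of $x$ consecutive bits of $\Sk(P_j)$, for $\ell\in[b]$ with $b=\lceil nt/x\rceil$. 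The numeric facts I would lean on are that $x=\Theta(n)$ (the rate $\tau_C$ is constant), that $x$ is a multiple of $t$ by~(\ref{eq:x-def}) so no individual sketch is split across two blocks, and that $\alpha t=O(1/(q\log^3 n))\ll 1$, hence $\alpha n t/x=\Theta(\alpha t)\ll 1$ and $b=\Theta(t)$.

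First I would have each source $v=P_j[i]$ cut its string $\Sk(P_j,S_i)$ at the block boundaries of $\Sk(P_j)=\Sk_1(P_j)\circ\dots\circ\Sk_b(P_j)$ --- since $\alpha n t/x\ll 1$ this string meets only $O(1)$ blocks --- and then subdivide each resulting piece into consecutive chunks of exactly $\lambda:=\Theta(\alpha n)$ bits, padding the last chunk with zeros. This produces $O(\alpha n t/\lambda)=O(t)=O(\log^3 n)$ super-messages at $v$, each lying entirely inside a single block $\Sk_\ell(P_j)$, whose target is declared to be the single node $P_j[\ell]$. Because every node knows the partition $P$ (broadcast via $R_1$), the deterministic partition $S$, and the parameters $n,\alpha,t,x,b$, each node can deduce from the identifier of a super-message both the range of $\Sk(P_j)$ it encodes and its target leader; hence the target lists of all $O(n\log^3 n)$ super-messages are common knowledge, as \Cref{thm:generalized_routing_main} requires.

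Next I would verify the load bounds. Every node is a source of $O(\log^3 n)$ super-messages. A leader $P_j[\ell]$ ($\ell\in[b]$) is the target of exactly those super-messages whose bit-range lies in its block $\Sk_\ell(P_j)$; that block has $x=\Theta(n)$ bits and each super-message carries $\lambda=\Theta(\alpha n)$ of them, so $P_j[\ell]$ is the target of $O(x/(\alpha n))=O(1/\alpha)$ super-messages, while every non-leader is a target of none. Taking $k=\Theta(1/\alpha)$ with a large enough constant, every node is then a source and a target of at most $k$ super-messages, each of size $\lambda=\Theta(\alpha n)=\Theta(n/k)$; moreover $k=\Theta(1/\alpha)=n^{o(1)}\le n/\log n$ and $\alpha\le 1/(8\cdot 10^4)$, so all hypotheses of \Cref{thm:generalized_routing_main} hold with $B=1$. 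Its application delivers every super-message together with its identifier in $O(k\lambda/n)=O(1)$ rounds, after which each leader $P_j[\ell]$ sorts its received chunks by identifier, concatenates them, and removes the padding to recover $\Sk_\ell(P_j)$.

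The part that needs care is purely the bookkeeping of the layout of $\Sk(P_j)$: one must track that the $\lambda$-bit chunks and the $x$-bit blocks are aligned (which is where $x\equiv 0\pmod t$ is used), and that the two fan quantities --- the per-source fan-out $O(t)=O(\log^3 n)$ and the per-leader fan-in $O(1/\alpha)$ --- both sit below the routing budget $k=\Theta(1/\alpha)$, which is exactly where the smallness assumption $\alpha t\ll 1$ is invoked. Given these, the preconditions of the super-message routing theorem are met and the lemma follows.
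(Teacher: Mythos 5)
Your argument is correct and invokes the same tool as the paper's proof, namely a single application of Procedure~$\ExtMatchingTransmission$, but it chooses a different super-message decomposition. The paper keeps each $\Sk(P_j,S_i)$ intact as one $O(\alpha n t)$-bit super-message and exploits the multi-target feature of the routing: it assigns as target set the $O(1)$ leaders $P_j[\ell]$ with $S_i\cap V_\ell\neq\emptyset$, so each leader is a target of $O\bigl(|V_\ell|/(\alpha n)\bigr)=O(1/(\alpha t))$ super-messages and the routing parameters are $k=O(1/(\alpha t))$, $\lambda=O(\alpha n t)$. You instead cut $\Sk(P_j,S_i)$ at the block boundaries of $\Sk(P_j)$ and then into $\Theta(\alpha n)$-bit single-target chunks, giving $O(t)$ super-messages per source and $O(1/\alpha)$ per target leader, hence $k=\Theta(1/\alpha)$ and $\lambda=\Theta(n/k)$. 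Both parameter settings satisfy the hypotheses of \Cref{thm:generalized_routing_main} (using $\alpha t\ll 1$ and $1/\alpha=n^{o(1)}\le n/\log n$) and give $O(k\lambda/n)=O(1)$ rounds. The paper's variant is slightly cleaner, with one super-message per source and no padding or reassembly step; yours avoids multi-target super-messages entirely and exhibits the form $\lambda=\Theta(n/k)$ directly. One minor bookkeeping remark: your fan-in count $O(x/(\alpha n))=O(1/\alpha)$ should implicitly also absorb the $O(1/(\alpha t)+1)$ extra partial chunks contributed by sources whose string only partially overlaps block $\ell$; this term is dominated by $O(1/\alpha)$, so the bound stands.
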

\begin{proof}
    We show that we apply Proc. $\ExtMatchingTransmission$ on a routing instance where each node has as a single input super-message of $O(\alpha nt)$ bits, and each node is a target of $O(1/(\alpha t))$ super messages. For any $i \in [1/\alpha], j \in [\alpha n]$,  $P_j[i]$ defines a single super-message with $O(|S_i| \cdot t)=O(\alpha t n)$ bits:
    $$\Minput(P_j[i]) = \{\Sk(P_j,S_i)\}~.$$

    We now define the desired output set of super-messages for every leader node $P_j[\ell]$ for $\ell \in [b]$ and $j \in [\alpha n]$. Let $V_\ell=\{v_i ~\mid~ \ell=\ell_i\}$ be the set of nodes whose sketch $\Sk(P_j,\{v\})$ is held as part of $Sk_\ell(P_j)$ and recall that $V_\ell$ does not depend on $P_j$. It is then required for $P_j[\ell]$ to obtain the super-messages $\Sk(P_j,S_i)$ for every $S_i$ that contains some node in $V_\ell$. Formally, 
    $$\Moutput(P_j[\ell]) = \{\Sk(P_j,S_i) ~\mid~ i \in [1/\alpha], S_i \cap V_\ell \neq \emptyset\}~.$$
    Note that since each sketch $\Sk(P_j,\{v\})$ has exactly $t$ bits, given $\Moutput(P_j[\ell])$, $P_j[\ell]$ will be able to recover $Sk_\ell(P_j)$. For any non-leader node, $P_j[\ell']$ for $\ell' \geq b+1$, the set $\Moutput(P_j[\ell]')$ is empty. We next show that this routing can be done in $O(1)$ rounds. 

    Overall, each node holds a single input super-message with $O(\alpha t n)$ bits. Moreover, $P_j[\ell]$ is the target of at most 
    $$O(|V_\ell|/(\alpha \cdot n))=O(x /(t\alpha n)=O(1/(t\alpha)),$$
    super-messages, where the first equality follows by the fact that $Sk_\ell(P_j)$ has $x$ bits, and each node sketch consumes exactly $t$ bits; the last equality follows by the definition of $x$ in Eq. (\ref{eq:x-def}). Hence, by \Cref{thm:generalized_routing_main}, Proc. $\ExtMatchingTransmission$ is implemented in $O(1)$ number of rounds. (Note that we even have the budget to let each node also hold $O(1/(\alpha \cdot t))$ super-messages each of size 
    $O(\alpha t n)$ bits.)
\end{proof}

\begin{lemma}
\label{lem:rand_mobile_transmission_third}
    At Step (III), each node $v$, can learn the messages $\{\widetilde{M}(v_r) \mid r \in N(v)\}$ in $O(1)$ rounds.
\end{lemma}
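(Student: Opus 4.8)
The plan is to cast this communication task directly as an instance of the $\SuperMessagesRouting$ problem and invoke \Cref{thm:generalized_routing_main}. Each node $v_r$ is the source of exactly one super-message, namely $\widetilde{M}(v_r)$, which has $b\alpha n$ bits, and whose target list is $\Target(\widetilde{M}(v_r)) = \{v_i \mid v_r \in N(v_i)\}$. The first point to verify is the global-knowledge-of-targets precondition of \Cref{thm:generalized_routing_main}: each set $N(v_i)$ is determined by $\loc(v_i)$ — which depends only on the public quantities $n,x,t,b$ and on $i$ — together with the index sets $\LDCDecodeIndices(k,R_3)$ for $k \in \loc(v_i)$; since $R_3$ was broadcast to the entire network earlier in Step (III) via Proc.\ $\ExtMatchingTransmission$, every node can evaluate $N(v_i)$ for every $i$, hence every target list $\Target(\widetilde{M}(v_r))$.

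Next I would bound the load. A node $v_i$ belongs to $\Target(\widetilde{M}(v_r))$ exactly when $r \in N(v_i)$, so $v_i$ is the target of precisely $|N(v_i)| \le qt$ super-messages (the bound $qt$ holds because $\loc(v_i)$ has $t$ entries, each contributing at most $q$ queried codeword positions). Setting $k' = qt$ and padding each node's input list with dummy super-messages having empty target lists, we obtain a $\SuperMessagesRouting$ instance with parameters $k'$ and $\lambda = b\alpha n$ in which each node is the source and the target of at most $k'$ super-messages, with all target lists known to all nodes. It remains to check $k' = O(\min(1/\alpha, n/\log n))$: from $\alpha \le 1/(qtb)$ we get $qt \le 1/\alpha$, and since $q = n^{o(1)}$ and $t = O(\log^3 n)$ we get $qt = n^{o(1)} = O(n/\log n)$ for large $n$, so both constraints are met.

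Finally, \Cref{thm:generalized_routing_main} with $B=1$ solves this instance in $O(k'\lambda/(Bn)) = O\!\left(qt \cdot b\alpha n / n\right) = O(qtb\alpha)$ rounds, which is $O(1)$ precisely because $\alpha \le 1/(qtb)$. I do not expect a genuine obstacle here; the only care required is the bookkeeping — confirming each node is the source of at most $k'$ and the target of at most $k'$ super-messages, that padding with empty-target dummies is legitimate, and that the two inequalities $qt \le 1/\alpha$ and $qtb\alpha = O(1)$ both follow from the assumed range $\alpha = \Theta(1/(q\log^6 n))$ (the exponent $6$ being exactly $b,t = O(\log^3 n)$ multiplied together).
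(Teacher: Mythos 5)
Your proof is correct and follows essentially the same route as the paper's: cast the task as a single $\SuperMessagesRouting$ instance with one input super-message $\widetilde{M}(v)$ per node, target sets $\Target(\widetilde{M}(v_r))=\{v_i \mid r\in N(v_i)\}$, bound $|\Moutput(v)|\le qt$ from $|\loc(v)|=t$ and the LDC query complexity $q$, and invoke \Cref{thm:generalized_routing_main}. You are somewhat more explicit than the paper about verifying the global-knowledge-of-targets precondition (via the broadcast of $R_3$), about padding with dummy super-messages, and about checking $k'=qt=O(\min(1/\alpha,n/\log n))$, but the underlying argument and parameter calculations (in particular $b=O(t)$, $\alpha=O(1/(qtb))$, giving $O(qtb\alpha)=O(1)$ rounds) are the same.
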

\begin{proof}

    We use Proc. $\ExtMatchingTransmission$ on a routing instance where each node has as input one super-message of size $O(n/qt)$ bits, and each node outputs $O(qt)$ super messages. 

    For each node $v$ we define a single input super-message $\Minput(v) = \{\widetilde{M}(v)\},$ and define the output set 
    $\Moutput(v) = \{\widetilde{M}(v_r) \mid r \in N(v)\}.$ Since $\widetilde{M}(v)$ contains one bit for each $i \in [b]$ and $j \in [\alpha n]$, the total size of the super-message is $O(\alpha nb) = O(n/(qt))$, where the inequality follows by the fact that $b = O(nt/n) = O(t)$, and by choice of $\alpha$ such that $\alpha = O(1/(qt^2))$. 
    
    Next, we bound $|\Moutput(v)|$. Since $\loc(v)$ is of size $|\Sk(V,\{v\}| = t$, it holds that $|N(v)| \leq qt$. Hence, $|\Minput(v)| = 1,|\Moutput(v)| \leq qt$.  Therefore, \Cref{thm:generalized_routing_main} guarantees that Proc. $\ExtMatchingTransmission$ terminates correctly after $O(1)$ rounds.
\end{proof}

Next,we show that each node $v$ can deduce $\Sk(V,\{v\})$ from $\{\widetilde{M}(v_r) \mid r \in N(v)\}$. We introduce the following notation. Recall that $\widetilde{C}(j,\ell,r)$ denotes the bit of $C(\Sk_\ell(P_j))$ that node $v_r$ receives from $P_j[\ell]$ (which may arrive corrupted) for every $r \in [n]$. Denote $\widetilde{C}(\Sk_\ell(P_j))$ as the received codeword distributively held by all nodes $V$. Formally:

    $$\widetilde{C}(\Sk_\ell(P_j))=\widetilde{C}(j,\ell,1)\circ \ldots \circ\widetilde{C}(j,\ell,n).$$

\begin{lemma}
\label{lem:randomized_codeword_distance}
    For all $j \in [\alpha n]$ and $\ell \in [b]$, $\Hamm(\widetilde{C}(\Sk_\ell(P_j)),C(\Sk_\ell(P_j))) \leq \alpha n.$
\end{lemma}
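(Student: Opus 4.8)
The plan is to observe that the string $\widetilde{C}(\Sk_\ell(P_j))$ is precisely what is received when the leader $P_j[\ell]$ distributes the $n$-bit codeword $C(\Sk_\ell(P_j))$ in a \emph{single} communication round, sending its $r^{th}$ bit over the edge to $v_r$. First I would recall from the algorithm's description that $\widetilde{C}(j,\ell,r)$ is by definition the bit that $v_r$ receives from $P_j[\ell]$ in this round, and that $\widetilde{C}(\Sk_\ell(P_j)) = \widetilde{C}(j,\ell,1) \circ \dots \circ \widetilde{C}(j,\ell,n)$. A position $r \in [n]$ contributes to the Hamming distance only if $\widetilde{C}(j,\ell,r) \neq C(\Sk_\ell(P_j))[r]$; since $v_r$ obtains this bit solely along the single edge $(P_j[\ell], v_r)$, any such discrepancy forces $(P_j[\ell], v_r)$ to lie in the set $F$ of edges corrupted by the adversary in that round.

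Next I would invoke the $\alpha$-ABD restriction: in every round, $\deg_F(u) \leq \alpha n$ for every node $u$, and in particular for $u = P_j[\ell]$. Hence the number of indices $r$ with $(P_j[\ell], v_r) \in F$ is at most $\alpha n$, which upper-bounds the number of corrupted positions of the received codeword, yielding $\Hamm(\widetilde{C}(\Sk_\ell(P_j)), C(\Sk_\ell(P_j))) \leq \alpha n$. This holds simultaneously for all $j \in [\alpha n]$ and $\ell \in [b]$: the leaders $\{P_j[\ell]\}_{j,\ell}$ are pairwise distinct nodes, since the $P_j$'s form a partition of $V$ and within each group the first $b$ elements are distinct, so the per-node degree bound applies independently to each leader in the same round.

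I do not anticipate a genuine obstacle here; the bound is deterministic and requires no probabilistic argument, unlike the subsequent steps that rely on the random partition $P$ and the randomness $R_3$. The only points to verify carefully are that the codeword distribution is indeed performed in a single round and that no codeword position is served by more than one edge out of $P_j[\ell]$ — both are immediate from the algorithm's description.
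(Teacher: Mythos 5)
Your proof is correct and uses the same argument as the paper: each bit of $C(\Sk_\ell(P_j))$ travels over a distinct edge incident to $P_j[\ell]$ in a single round, so the $\alpha$-ABD degree bound $\deg_{F}(P_j[\ell]) \leq \alpha n$ directly caps the number of corrupted positions. The extra observation that the leaders are pairwise distinct is harmless but unnecessary, since the per-node degree bound already applies to every node individually in any given round.
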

\begin{proof}
    Recall that the codeword $C(\Sk_\ell(P_j))$ is sent by node $P_j[\ell]$ to the rest of the network in a single round, where the $r^{th}$ bit of the codeword is sent to $v_r$. Therefore, the adversary can only corrupt at most $\alpha n$ of these sent bits. In other words,
    $$\Hamm(\widetilde{C}(\Sk_\ell(P_j)),C(\Sk_\ell(P_j))) \leq \alpha n.$$
\end{proof}

\begin{lemma}
\label{lem:randomized_learned_the_skeches}
    At the end of Step (III), each node $v_i$, can locally compute $\Sk(V,\{v_i\})$ from $\{\widetilde{M}(v_r) \mid r \in N(v_i)\}$ w.h.p.
\end{lemma}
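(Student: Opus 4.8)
The plan is to reduce the lemma, via a single high-probability event over the shared random string $R_3$, to a purely syntactic check: that the messages $\{\widetilde{M}(v_r)\mid r\in N(v_i)\}$ held by $v_i$ contain exactly the (possibly corrupted) codeword coordinates that the LDC local decoder queries, so that $v_i$ can reconstruct each bit of $\Sk(V,\{v_i\})$ by local computation only.

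First I would note that $\Sk(V,\{v_i\})=\Sk(P_1,\{v_i\})\circ\cdots\circ\Sk(P_{\alpha n},\{v_i\})$, and that by \eqref{eq:sketch_location_in_string} the block $\Sk(P_j,\{v_i\})$ sits at the $t$ positions $\loc(v_i)=\{z_{i,1},\dots,z_{i,t}\}$ of the string $\Sk_{\ell_i}(P_j)$, where both the leader index $\ell_i$ and the position set $\loc(v_i)$ are computable locally from $n,x,t$ and, crucially, do \emph{not} depend on $j$. Hence it suffices for $v_i$ to recover $\Sk_{\ell_i}(P_j)[z]$ for each $z\in\loc(v_i)$ and $j\in[\alpha n]$, which it does by simulating $\LDCDecode(\widetilde{C}(\Sk_{\ell_i}(P_j)),z,R_3)$. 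The central step is to observe that, since the LDC $C$ is \emph{non-adaptive}, the codeword positions this decoder queries are $\LDCDecodeIndices(z,R_3)\subseteq[x/\tau_C]\subseteq[n]$ — computable by $v_i$ and independent of $j$ — and that the answer to query $r$ is by definition $\widetilde{C}(j,\ell_i,r)=\widetilde{M}(v_r)[j,\ell_i]$, i.e.\ the $(j,\ell_i)$-coordinate of $\widetilde{M}(v_r)$. Unioning over $z\in\loc(v_i)$, the source nodes queried are precisely $N(v_i)$; and since a single vector $\widetilde{M}(v_r)$ carries its $(j,\ell_i)$-coordinate for \emph{every} $j$, the one collection $\{\widetilde{M}(v_r)\mid r\in N(v_i)\}$ lets $v_i$ simulate all $\alpha n\cdot t$ decodings at once — exactly the phenomenon depicted in \Cref{fig:random_ldc_intuition}.

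It then remains to check correctness of each decoding and take a union bound. Each codeword $C(\Sk_{\ell_i}(P_j))$ has length $x/\tau_C$, which by \eqref{eq:x-def} is $n-\widetilde{O}(1)\ge n/2$ for $n$ large, while by \Cref{lem:randomized_codeword_distance} at most $\alpha n$ of its distributed bits are corrupted; since $\alpha\le\delta_C/4$, this is at most $\delta_C(x/\tau_C)/2$, so the unique codeword within relative distance $\delta_C/2$ of $\widetilde{C}(\Sk_{\ell_i}(P_j))$ is $C(\Sk_{\ell_i}(P_j))$ itself, and the LDC guarantee of \Cref{def:LDC} gives $\Pr_{R_3}[\LDCDecode(\widetilde{C}(\Sk_{\ell_i}(P_j)),z,R_3)=\Sk_{\ell_i}(P_j)[z]]\ge 1-1/n^3$. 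A union bound over the at most $n\cdot t\cdot\alpha n\le n^3$ such events (one per node $v_i$, position $z\in\loc(v_i)$, and index $j$) shows all decodings succeed with probability $1-O(\log^3 n/n)$, which can be amplified to $1-n^{-c}$ for any desired constant $c$ by taking the LDC failure probability $1/\poly(n)$ small enough, as \Cref{lem:LDC} allows. On this event each $v_i$ recovers all of $\Sk(V,\{v_i\})$ without further communication. I expect the real work to lie not in this estimate but in the middle paragraph — making airtight that the globally defined $N(v_i)$, $\loc(v_i)$, and $\ell_i$ align with the decoder's query pattern simultaneously for all $\alpha n$ strings $\Sk_{\ell_i}(P_j)$ under one shared string $R_3$ — which is where both the non-adaptivity of the LDC and the uniform sketch length $t$ are essential.
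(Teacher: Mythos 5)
Your proof is correct and follows essentially the same route as the paper's: reduce to recovering $\Sk_{\ell_i}(P_j)[k]$ for $k\in\loc(v_i)$, $j\in[\alpha n]$, note that by non-adaptivity of the LDC and the $j$-independence of $\loc(v_i)$ the query values $\widetilde{M}(v_r)[j,\ell_i]$ for $r\in N(v_i)$ suffice to simulate $\LDCDecode$ for all $j$ at once, and then invoke \Cref{lem:randomized_codeword_distance} together with $\alpha=o(1)$ to see that each decoding is within the LDC's tolerance. You are merely more explicit than the paper about the final union bound over $(v_i,k,j)$ triples; that is a welcome elaboration rather than a departure.
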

\begin{proof}
    We show that node $v_i$ can compute each of the bits of $\Sk(V,\{v_i\})$. In other words, we show it can compute $\Sk_{\ell_i}(P_j)[k]$ for each $k \in \loc(v_i)$, and $j \in [\alpha n]$.
    
    To obtain $\Sk_{\ell_i}(P_j)[k]$, node $v_i$ locally applies $\LDCDecode(k,\widetilde{C}(\Sk_{\ell_i}(P_j),R_3)$, the local decoding algorithm with index $k$, and faulty codeword $\widetilde{C}(\Sk_{\ell_i}(P_j))$, using randomness $R_3$. By definition of $N(v_i)$, all the values of the queried indices chosen by  $\LDCDecodeIndices(k,R_3)$ are contained in $\{\widetilde{M}(v_r)[j,\ell_i] \mid r \in N(v_i)\}$. Therefore, $v_i$ holds all query values locally, and can perform the algorithm. It remains to show that the LDC decoding succeeds w.h.p.

    By \Cref{def:LDC}, $\LDCDecode$ succeeds w.h.p. under the assumption that the fraction of faults in the decoded codeword is at most $\delta_C/2$. This follows immediately by \Cref{lem:randomized_codeword_distance}, and the fact that the codeword length (before padding) is of size $\Theta(n)$, while $\alpha = o(1)$, and $\delta_C$ is some constant. 

    Therefore $v_i$ can decode the $k^{th}$ bit of $\Sk_{\ell_i}(P_j)$ correctly w.h.p., and by repeating this process for all $k \in \loc(v_i)$ and $j \in [\alpha n]$, it obtains $\Sk(V,\{v_i\})$ w.h.p.
\end{proof}
The remaining proof of correctness follows a very similar line along the algorithm of \cite{FP23}. For completeness, we provide it here (see \Cref{app:adaptive}).

\begin{lemma}\label{lem:correct-adaptive}
    Each node $v$ outputs $M(V,\{v\})$ w.h.p.
\end{lemma}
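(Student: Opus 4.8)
The plan is to reduce the statement to a few high-probability events already isolated in the preceding lemmas, and then to check that, conditioned on all of them, the purely local decoding performed in Step~IV reconstructs every $m_{u,v}$. Concretely, I would condition on: (i) the event of \Cref{lem:bad_messages_in_random_partition}, that the random partition $P$ satisfies $|\Mbad(P_j,v)|=O(\log n)$ simultaneously for all $j\in[\alpha n]$ and $v\in V$; (ii) the event of \Cref{lem:randomized_learned_the_skeches}, that every node $v_i$ correctly recovers $\Sk(V,\{v_i\})$; and (iii) the event that the $\Recover$ operation of \Cref{lem:k_sparse_recover} succeeds for each of the $\alpha n\cdot n$ sketches $\Sk(P_j,\{v\})$ --- this is where we use that each such sketch is an $O(\log n)$-sparse recovery sketch initialized with the randomness $R_2$, which is drawn only after the Step~I corruptions (hence after every $\Mbad(v)$) and the partition $P$ are fixed, so $R_2$ is independent of the multiset each sketch is eventually queried on. All remaining ingredients --- the broadcasts of \Cref{cor:broadcast}, every invocation of $\ExtMatchingTransmission$ (\Cref{thm:generalized_routing_main}, \Cref{lem:rand_mobile_transmission_first,lem:rand_mobile_transmission_second,lem:rand_mobile_transmission_third}), and the single-round LDC-bit distribution whose corruption is bounded in \Cref{lem:randomized_codeword_distance} --- are deterministically resilient in the $\alpha$-ABD setting and contribute no failure probability. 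Since (i)--(iii) are each a union of polynomially many events of probability $1-1/\poly(n)$, a union bound with a large enough exponent gives overall success probability $1-1/n^{c}$ for the desired constant $c$.

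Conditioned on these events, I would establish correctness node by node. Fix $v$ and $j\in[\alpha n]$, and let $i$ be the index with $v\in S_i$. By \Cref{lem:rand_mobile_transmission_first}, the node $P_j[i]$ learns $M(P_j,S_i)$ exactly, so the sketch $\Sk(P_j,\{v\})$ it builds encodes precisely the multiset $\{(\ID(u,v)\circ m_{u,v},\,1):u\in P_j\}$; by event~(ii), $v$ ends Step~III holding this exact sketch. In Step~IV, $v$ then inserts $(\ID(u,v)\circ\widetilde m_{u,v},\,-1)$ for every $u\in P_j$, exactly as in the proof of \Cref{lem:sparse}. Because distinct $u$ have distinct ID-prefixes and $m_{u,v}\neq\widetilde m_{u,v}$ holds precisely for the $u$ whose message was corrupted, the elements of nonzero net frequency in the resulting sketch are exactly $\ID(u,v)\circ m_{u,v}$ with frequency $+1$ and $\ID(u,v)\circ\widetilde m_{u,v}$ with frequency $-1$, ranging over $u\in P_j$ with $\widetilde m_{u,v}\neq m_{u,v}$; there are at most $2|\Mbad(P_j,v)|=O(\log n)$ of them, which by event~(i) lies within the sparsity bound $k$ used to initialize the sketch, so by event~(iii) the $\Recover$ call returns exactly this set. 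The output rule of Step~IV then reads off correctly: for $u\in P_j$ with $\widetilde m_{u,v}=m_{u,v}$ no frequency-$1$ element with prefix $\ID(u,v)$ is present, so $v$ outputs $\widetilde m_{u,v}=m_{u,v}$; for $u\in P_j$ with $\widetilde m_{u,v}\neq m_{u,v}$ the unique frequency-$1$ element with prefix $\ID(u,v)$ is $\ID(u,v)\circ m_{u,v}$, so $v$ outputs $m_{u,v}$. Either way $v$ outputs $m_{u,v}$.

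Ranging over $j$ and using that $P=\{P_1,\dots,P_{\alpha n}\}$ partitions $V$, every $u\in V$ is covered, so $v$ outputs all of $M(V,\{v\})$; since the conditioning event is over all $v$ simultaneously, this holds for every node with probability $1-1/n^{c}$. I expect the one place that needs genuine care to be event~(iii): one must verify that the constant hidden in the ``$O(\log n)$-sparse'' initialization of the sketches in Step~II(b) is large enough to absorb the factor $2$ coming from storing both the original and the corrupted copy of each bad message (so that $2|\Mbad(P_j,v)|\le k$ under event~(i)), and that $R_2$ is genuinely independent of each $\Mbad(v)$ --- both are already built into the algorithm via the remark that $R_1,R_2$ are generated after the adversary's Step~I choices and the explicit ``$O(\log n)$-sparse'' initialization, so what remains is bookkeeping rather than a real obstacle.
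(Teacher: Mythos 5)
Your argument is essentially identical to the paper's: you condition on the same three high-probability events (the partition balance from Lemma~\ref{lem:bad_messages_in_random_partition}, sketch recovery from Lemma~\ref{lem:randomized_learned_the_skeches}, and success of $\Recover$), then trace the net frequencies in $\Sk(P_j,\{v\})$ after the Step~IV subtractions exactly as the paper does in its auxiliary Lemma~\ref{lem:sketch_content}, and read off the output rule. The only difference is organizational — you inline the frequency bookkeeping rather than factoring it into a separate lemma, and you are somewhat more explicit about the union bound and the independence of $R_2$ from $\Mbad(v)$ — but the decomposition and the key ideas match.
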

\def\APPENDADAPCORRECT{
\paragraph{Proof of Lemma \ref{lem:correct-adaptive}.}
We start by showing the following lemma.
\begin{lemma}
    \label{lem:sketch_content}
    Let $j \in [\alpha n]$, $u \in P_j$, and $v \in V$. In Step (IV), when Procedure $\Recover$ is applied, the frequency of $\ID(u,v) \circ m_{u,v}$ in $\Sk(P_j,\{v\})$ is $1$ if $m_{u,v} \in \Mbad(u)$, and $0$ otherwise. Similarly, the frequency of $\ID(u,v) \circ \widetilde{m}_{u,v}$ in $\Sk(P_j,\{v\})$ is $-1$ if $m_{u,v} \in \Mbad(u)$, and $0$ otherwise. All other strings have frequency $0$ in $\Sk(P_j,\{v\})$.
\end{lemma}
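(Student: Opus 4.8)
The plan is to track the underlying multiset $\sigma$ of the sketch $\Sk(P_j,\{v\})$ through Steps~II--IV and, since this multiset is a deterministic function of the sequence of $\Add$ operations applied to it, simply read off the frequency of each string after Step~IV. Throughout I would condition on the high-probability events that make the earlier phases succeed: that every leader $P_j[i]$ with $v\in S_i$ learns $M(P_j,S_i)$ correctly (Lemma~\ref{lem:rand_mobile_transmission_first} together with Theorem~\ref{thm:generalized_routing_main}), and that $v$ reconstructs the bits of $\Sk(P_j,\{v\})$ correctly at the end of Step~III (Lemma~\ref{lem:randomized_learned_the_skeches}); on these events the claimed frequencies are exact, deterministically.

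First I would record the content of the sketch at the end of Step~II. By the description of Step~II(b), for $v\in S_i$ the node $P_j[i]$ starts from the empty sketch and performs exactly one operation $\Add(\ID(u,v)\circ m_{u,v},1,R_2)$ for each $u\in P_j$, where $m_{u,v}$ is the \emph{uncorrupted} message --- this is precisely where correctness of $\ExtMatchingTransmission$ enters. Since $\ID$ is injective on $V$, the strings $\{\ID(u,v)\circ m_{u,v}\}_{u\in P_j}$ are pairwise distinct, so the resulting multiset $\sigma_0$ has frequency $1$ on each $\ID(u,v)\circ m_{u,v}$ with $u\in P_j$ and $0$ elsewhere. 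Step~III only transmits (an LDC encoding of) this sketch to $v$, so --- on the conditioning event --- $v$ enters Step~IV holding a sketch whose multiset is exactly $\sigma_0$.

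Finally I would fold in Step~IV, where $v$ performs, for each $u\in P_j$, the single operation $\Add(\ID(u,v)\circ\widetilde m_{u,v},-1,R_2)$. Fix $u\in P_j$ and set $a_u=\ID(u,v)\circ m_{u,v}$, $b_u=\ID(u,v)\circ\widetilde m_{u,v}$. If $\widetilde m_{u,v}=m_{u,v}$ then $a_u=b_u$ and the $+1$ from Step~II is cancelled by the $-1$ from Step~IV, leaving frequency $0$; if $\widetilde m_{u,v}\neq m_{u,v}$ then $a_u\neq b_u$, so $a_u$ retains frequency $1$ while $b_u$ acquires frequency $-1$. Because the prefix $\ID(u,v)$ of these strings is unique to $u$ (again injectivity of $\ID$), the operations associated with different nodes of $P_j$ live in disjoint ``namespaces,'' so there is no cross-cancellation and no string outside $\{a_u,b_u\}_{u\in P_j}$ is ever touched; aggregating over $u\in P_j$ yields exactly the statement (interpreting ``$m_{u,v}\in\Mbad(u)$'' as $\widetilde m_{u,v}\neq m_{u,v}$). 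The only point requiring care --- and the sole place where the argument could fail --- is that the sketch $v$ feeds into Step~IV really was built from the \emph{true} messages $\{m_{u,v}\}_{u\in P_j}$, which is exactly what the earlier routing and LDC lemmas provide; beyond invoking them, this is routine frequency bookkeeping with no anticipated obstacle.
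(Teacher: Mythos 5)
Your proof is correct and follows essentially the same frequency-bookkeeping argument as the paper: track the $+1$ insertions of $\ID(u,v)\circ m_{u,v}$ from Step~II and the $-1$ insertions of $\ID(u,v)\circ\widetilde m_{u,v}$ from Step~IV, use injectivity of the identifier prefix to rule out cross-cancellation across different $u$'s, and split into the cases $\widetilde m_{u,v}=m_{u,v}$ versus $\widetilde m_{u,v}\neq m_{u,v}$. You are slightly more careful than the paper's own write-up in two respects: you make explicit the conditioning on the success events of \Cref{lem:rand_mobile_transmission_first} and \Cref{lem:randomized_learned_the_skeches} (which the paper only surfaces at the level of \Cref{lem:correct-adaptive}), and you avoid a sign typo in the published proof, which at one point states $f(\ID(u,v)\circ\widetilde m_{u,v})=1$ and $f(\ID(u,v)\circ m_{u,v})=-1$ when the opposite is intended (and is what the lemma statement asserts).
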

\begin{proof}
    Let $f(s)$ denote the frequency of a string $s$ in $\Sk(P_j,\{v\})$. For each $u \in P_j$, $\ID(u,v) \circ m_{v,u}$ is added into $\Sk(P_j,\{v\})$ with frequency $1$, and $\ID(u,v) \circ \widetilde{m}_{v,u}$ is added with frequency $-1$.  
    
    Since identifiers are unique, we get that only the insertion of $\ID(u,v) \circ m_{v,u}$ can affect the frequency of $\ID(u,v) \circ \widetilde{m}_{v,u}$, and vice versa.
    
    If $\widetilde{m}_{u,v} \notin \Mbad(P_j,u)$, then $\widetilde{m}_{u,v} = m_{u,v}$ and the total frequency of both is $f(\ID(u,v) \circ \widetilde{m}_{u,v}) = f(\ID(u,v) \circ m_{u,v}) = 0$. If $\widetilde{m}_{u,u} \in \Mbad(P_j,u)$, then $\widetilde{m}_{u,v} \neq m_{u,v}$. Therefore $f(\ID(u,v) \circ \widetilde{m}_{u,v}) = 1$ and $f(\ID(u,v) \circ m_{u,v}) = -1$. Since no other string is added into $\Sk(P_j,\{v\})$, the claim follows.
\end{proof}

We are now ready to complete the proof of \Cref{lem:correct-adaptive}. 

\begin{proof}[Proof of \Cref{lem:correct-adaptive}]
    By \Cref{lem:randomized_learned_the_skeches}, each node $v$ learns $\Sk(V,\{v\})$, i.e. it knows $\Sk(P_j,\{v\})$ for all $j \in [\alpha n]$. By \Cref{lem:bad_messages_in_random_partition}, $|\Mbad(P_j,v)| = O(\log{n})$ with high probability. Therefore, by \Cref{lem:sketch_content}, there are at most $O(\log{n})$ elements with non-zero frequency in $\Sk(P_j,\{v\})$. This guarantees that $\Recover(\Sk(P_j,\{v\}))$ returns all elements and their frequencies with high probability. By \Cref{lem:sketch_content}, for $u \in P_j$, the message $\ID(u,v) \circ m_{u,v}$ has frequency $1$ in $\Sk(P_j,\{v\})$ if and only if $\widetilde{m}_{u,v} \in \Mbad(v)$. 
    
    We show that $v$ outputs $m_{u,v}$ for all $u \in V$. For this, we split into two cases: if $\widetilde{m}_{u,v} \in \Mbad(v)$, then $\ID(u,v) \circ m_{u,v}$ has frequency $1$ in its respective sketch, hence $v$ outputs $m_{u,v}$. If $\widetilde{m}_{u,v} \notin \Mbad(v)$, then $\ID(u,v) \circ m_{u,v}$ has frequency $0$ and node $v$ outputs correctly $\widetilde{m}_{u,v} = m_{u,v}$.
\end{proof}
}

This completes the proof of \Cref{lem:mobile_main} (i.e.,  \Cref{thm:informal_adaptive_randomized}).

    \section{Deterministic Compilers}\label{sec:deterministic_compilers}

    In this section we present deterministic algorithms for the $\AllToAllComm$ problem against $\alpha$-ABD adversaries. \Cref{sec:det-constant} assumes constant $\alpha$ and \Cref{sec:det-nonconstant} assumes $\alpha=O(1/\sqrt{n})$.
    \subsection{Deterministic Algorithm for $\alpha=\Theta(1)$}\label{sec:det-constant}
In this section, we show an $O(\log{n})$ deterministic algorithm for the $\AllToAllComm$ problem in the presence of $\alpha$-ABD adversary for $\alpha=O(1)$.

\begin{theorem}
    \label{lem:det_mobile_high_noise_main}
    Let $\alpha \leq 1/(16 \cdot 10^4)$. There is a deterministic protocol solving $\AllToAllComm$ in the $\alpha$-ABD setting in $O(\log{n})$ rounds, for bandwidth $B = 1$.
\end{theorem}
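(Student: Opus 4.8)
The plan is to realize the butterfly-style aggregation sketched in \Cref{sec:overview}: run $\log n$ iterations, where in each iteration every node exchanges exactly half of its currently-held messages with one matched partner, implementing each exchange via the resilient routing scheme of \Cref{thm:generalized_routing_main}. First I would apply \Cref{lem:change_n_value} to reduce to the case where $n$ is a power of two, at the cost of replacing $\alpha$ by $2\alpha \le 1/(8\cdot 10^4)$; so it suffices to give a deterministic $O(\log n)$-round protocol on an $n$-clique with $n$ a power of two against an $\alpha$-ABD adversary for $\alpha \le 1/(8\cdot 10^4)$. Index the $\log n$ bit positions of an ID from most to least significant. At iteration $i$, pair each node $u$ with the node $u'$ obtained by flipping the $i$-th bit of $\ID(u)$; this pairing depends only on $i$ (hence is known to all nodes), and whether $\ID(u) > \ID(u')$ is decided by the $i$-th bit of $\ID(u)$ since $u,u'$ differ only there.

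The core is the invariant that at the start of iteration $i$ node $u$ holds exactly $M_i(u) = M(S(u,i),P(u,i))$, where $S(u,i)$ is the set of nodes agreeing with $u$ on bit positions $i,\dots,\log n$ (so $|S(u,i)| = 2^{i-1}$) and $P(u,i)$ is the set of nodes agreeing with $u$ on bit positions $1,\dots,i-1$ (so $|P(u,i)| = n/2^{i-1}$); in particular $|M_i(u)| = n$ always, and the base case $M_1(u) = M(\{u\},V)$ is exactly $u$'s input. In iteration $i$, node $u$ sorts $M_i(u)$ by target ID and splits it into a lower half $M_i^-(u)$ and an upper half $M_i^+(u)$. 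Since each target in $P(u,i)$ occurs exactly $|S(u,i)|$ times in $M_i(u)$ and the $i$-th bit is the most-significant free bit of $P(u,i)$, this split coincides with partitioning $P(u,i)$ into its $\text{bit }i=0$ half $P^-(u,i)$ and its $\text{bit }i=1$ half $P^+(u,i)$, i.e. $M_i^\pm(u) = M(S(u,i),P^\pm(u,i))$. The two partners then exchange so that the higher-ID node keeps the two upper halves and the lower-ID node keeps the two lower halves: concretely the higher-ID node $h$ sends $M_i^-(h)$ to $\ell$, and the lower-ID node $\ell$ sends $M_i^+(\ell)$ to $h$. Using $P(u',i) = P(u,i)$, $P^{\pm}(u',i) = P^{\pm}(u,i)$, $S(u,i+1) = S(u,i)\sqcup S(u',i)$, and $P(u,i+1) = P^{b}(u,i)$ with $b$ the $i$-th bit of $u$, one checks the surviving set at $u$ equals $M(S(u,i)\sqcup S(u',i), P(u,i+1)) = M(S(u,i+1),P(u,i+1)) = M_{i+1}(u)$, re-establishing the invariant. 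After iteration $\log n$ we get $M_{\log n+1}(u) = M(S(u,\log n+1),P(u,\log n+1)) = M(V,\{u\})$, so $u$ outputs $\{m_{w,u}\}_{w\in V}$ as required.

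For the implementation, each iteration's exchange is a $\SuperMessagesRouting$ instance in which every node is the source of one super-message of $n/2$ bits (the half it sends off) whose single target (its partner) is publicly known, and is the target of one such super-message; thus $k=1$, $\lambda = n/2$, and since $k = 1 \le \lfloor 1/(8\cdot 10^4\alpha)\rfloor$ holds for our $\alpha$ (and $k \le \hat c n/\log n$ for large $n$), \Cref{thm:generalized_routing_main} solves it deterministically in $O(k\lambda/(Bn)) = O(1)$ rounds with $B=1$. Crucially, because the routing scheme delivers every super-message uncorrupted in the $\alpha$-ABD setting, the invariant is preserved with certainty despite the rushing adaptive adversary. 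Summing over the $\log n$ iterations yields an $O(\log n)$-round deterministic protocol.

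The main obstacle I anticipate is the bookkeeping in the inductive step: verifying that ``sort the $n$ held messages by target ID and split in half'' is the same as ``partition the target set $P(u,i)$ by its top free bit'', and that this is consistent with the higher-ID / lower-ID convention for the two partners so that the two received halves glue into $M(S(u,i+1),P(u,i+1))$ with no gaps or overlaps. Pinning down the bit-ordering conventions (most- vs.\ least-significant, and the off-by-one between the iteration index and the number of fixed bits in $S$ and $P$) consistently is the only delicate part; everything else is a direct application of \Cref{thm:generalized_routing_main} and \Cref{lem:change_n_value}.
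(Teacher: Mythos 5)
Your proposal is correct and follows essentially the same approach as the paper: reduce to $n$ a power of two via \Cref{lem:change_n_value}, maintain the invariant $M_i(u)=M(S(u,i),P(u,i))$ over $\log n$ butterfly iterations with partner $\Flip(u,i,\cdot)$, split each $M_i(u)$ by target-ID halves, and implement each exchange via a constant-round call to \Cref{thm:generalized_routing_main}. The only cosmetic differences are that the paper sorts by target-then-source for a canonical encoding and counts each iteration as $k=2$ super-messages (one of which is ``kept''), whereas you observe that only one super-message per node is actually routed; both accountings satisfy the hypotheses of \Cref{thm:generalized_routing_main} and yield $O(1)$ rounds per iteration.
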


By \Cref{lem:change_n_value}, we can assume that $n$ is a power of two, i.e. $n = 2^\ell$ for some integer $\ell$. This comes at the cost of reducing the fault parameter $\alpha$ by a constant factor. As the value of $\alpha$ is dictated by the super-message routing of \Cref{thm:generalized_routing_main} and \Cref{lem:change_n_value}, the correctness holds for $2\alpha \leq 1/(8 \cdot 10^4)$. Throughout the section, we assume w.l.o.g. that the nodes have unique identifiers in $\{0,\dots,2^\ell-1\}$. This can be assumed since in our model, each node initially knows all identifiers (i.e, the KT1 model), and hence can locally rename identifiers according to lexicographic order. Denote $\ID(v)[i]$ to be the $i^{th}$ bit of $\ID(v)$. For $v \in V,i \in [\log{n}],b \in \{0,1\}$, define $\Flip(v,i,b)$ to be the node $u$ with $\ID(u)[i] = b$ and $\ID(u)[j] = \ID(v)[j]$ for any $j \neq i$. 

As previous sections, we denote for message $m_{u,v}$ it identifiers as $\ID(m_{u,v}) = \ID(u,v) = \ID(u) \circ \ID(v)$. Additionally, we denote $\Source(m_{u,v}) = u$, and $\Target(m_{u,v}) = v$. For a set of messages $M$, denote by $\ID(M) = \{\ID(m) \stt m \in M\}$ as the set of identifiers of $M$. If the set $M$ is ordered, then $\ID(M) = (\ID(m) \stt m \in M)$.

We define a \emph{target-source identifier ordering} of a set of messages, where we sort by the target id in descending order, breaking ties according to the source id in descending order. This is defined formally as follows:

\begin{definition}
\label{def:target_source_ordering}
    For a set of messages $M \subseteq M(V,V)$, we define the target-source identifier ordering as an assignment of unique ranks $\rk(m_{u,v}) \in [|M|]$ for each message $m_{u,v} \in M$ such that for any $m_{u,v},m_{u',v'} \in M, \rk(m_{u,v}) > \rk(m_{u',v'})$ if either $\ID(v) > \ID(v')$, or if both the conditions $\ID(v) = \ID(v')$ and $\ID(u) > \ID(u')$ hold.
\end{definition}

\begin{figure}
    \centering
    \includegraphics[width=0.8\textwidth]{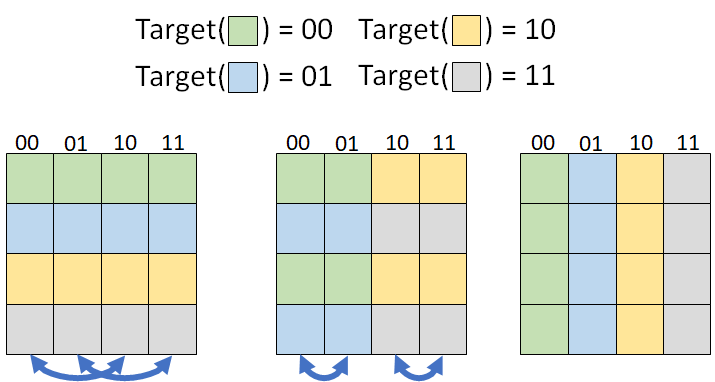}
    \caption{An illustration of the algorithm of \Cref{lem:det_mobile_high_noise_main} for $n=4$. Each matrix corresponds to the set of messages a node holds in a given step, where each column $i$ corresponds to the set of messages node $v_i$ holds. Blue arrows correspond to the message exchanges which occur in this step. Different colors of rectangles correspond to different targets, as written in the legend above. Initially each node has a message for each node. In the start of the second step, the first two nodes collectively hold the messages addressed to the first two nodes, and the last two nodes have the symmetric property. Finally, after two steps, all nodes hold their target messages.}
    \label{fig:det_high_noise}
\end{figure}

\noindent \textbf{Algorithm Description}
The protocol proceeds in $i=1,\dots,\log{n}$ iterations. Each iteration consists of a local computation step, and a communication step, as described next:

\smallskip
\noindent \textbf{Initialization Step.} At the start of $i = 1$, each node $v$ locally sets $M_1(v) = M(\{v\},V)$ to be $v$'s input messages. We assume $M_1(v)$ is ordered according to target-source identifier ordering (i.e. by descending order of target identifier, ties broken by source identifier; see \Cref{def:target_source_ordering}). This concludes the initialization step.

\smallskip
\noindent \textbf{Iteration $i \geq 1$  (Local Computation):} At the start of each iteration $i$, each node $v$ holds a set of messages $M_i(v)$. Throughout the algorithm, we maintain the invariant that nodes know the identifiers of all messages in $M_i(v)$, and that $M_i$ is sorted by target-source identifier ordering. 

Let $\rk(m)$ be the place of $m$ in the target-source identifier ordering of $M_i(v)$. 
The node $v$ splits $M_i(v)$ into two sets $M_i(v) = M^-_i(v) \cup M^+_i(v)$, where:
$$M^{-}_i(v) = \{m \in M_i(v) ~\stt~ \rk(m) \leq |M_i(v)|/2\} \mbox{~and~} M^{+}_i(v) = \{m \in M_i(v) ~\stt~ \rk(m) > |M_i(v)|/2\},$$ i.e. $M^-_i(v),M^+_i(v)$ are the messages in the lower and upper half of the ordering respectively.

Similar to prior sections, we denote $M^{\circ-}_i(v)$ and $M^{\circ+}_i(v)$ as the strings corresponding to these sets, ordered in the target-source identifier ordering. See Eq. (\ref{eq:message-concatentation}).

\smallskip
\noindent \textbf{Iteration $i \geq 1$ (Communication Step):} Using $\ExtMatchingTransmission$, each node $v$ sends the set $M^-_i(v)$ to node $\Flip(v,i,0)$ and the set $M^+_i(v)$ to node $\Flip(v,i,1)$ (See \Cref{lem:det_high_transmission} for details). From this procedure, each node $v$ receives two sets of messages $Q_{i,1}(v),Q_{i,2}(v)$, and sets the variable $M_{i+1}(v) = Q_{i,1}(v) \cup Q_{i,2}(v)$, sorted by target-source identifier ordering, to be the set of all received messages.\footnote{Note that $v$ is either $\Flip(v,i,0)$ or $\Flip(v,i,1)$, hence it ``receives'' one message set from itself, and one from another node.} We remark that node $v$ indeed knows the identifiers of the ordered sets $Q_{i,1},Q_{i,2}$ (see \Cref{lem:det_high_noise_char_complete}).

\smallskip
\noindent \textbf{Output:} At the end of iteration $i = \log{n}$, each node $v$ holds $M_{\log{n}+1}(v)$. In the analysis, we show that $M_{\log{n}+1}(v) = M(V,\{v\})$ (see \Cref{lem:det_high_noise_char_complete}). Each $v$ sets the output $m'_{u,v}$ to be the unique message $m \in M_{\log{n}+1}(v)$ with $\ID(m) = \ID(u,v)$. This concludes the description of the algorithm.

\smallskip
\noindent \textbf{Analysis.}
We denote the concatenation between two strings $s,s'$ as $s \circ s'$. Let $\Prefix(v,i)$ as the prefix of $\ID(v)$ up to the $i^{th}$ position, i.e. $\Prefix(v,i) = \ID(v)[1] \circ \dots \circ \ID(v)[i]$. Similarly, define $\Suffix(v,i)$ as the suffix of $\ID(v)$ up to the $i^{th}$ position from the end, i.e. $\Suffix(v,i) = \ID(v)[\log{n}-i] \circ \dots \circ \ID(v)[\log{n}]$. In particular, we define $\Prefix(v,0),\Suffix(v,0) = \emptyset$.

\noindent For an integer $i$ and a node $u$, define
$$P(u,i)=\{v \in V \mid \Prefix(v,i-1) = \Prefix(u,i-1)\}$$ and 
$$S(u,i)=\{v \in V \mid \Suffix(v,\log{n}-i+1) = \Suffix(u,\log{n}-i+1)\}.$$
Namely, these are the sets of nodes whose IDs agree with $\ID(u)$ on the first $i$ bits (resp., last $\log n-i+1$ bits). 
Note that $P(u,i)\cap S(u,i)=\{u\}$, $P(i,1)=V$ and $S(u,1)=\{u\}$. Also, observe that $S(u,\log n+1)=V$ and $P(u,\log n+1)=\{u\}$. The correctness of the algorithm follows by showing the following characterization of the message set $M_i(u)$ for every $u \in V$ and $i \in \{1,\ldots, \log n+1\}$.

\begin{lemma}
\label{lem:det_high_noise_char_complete}
For every $u \in V$ and $i \in [\log n+1]$, it holds that 
$M_i(u)=M(S(u,i),P(u,i))$. Hence, $M_{\log n+1}(u)=M(V, \{u\})$.
\end{lemma}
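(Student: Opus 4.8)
The plan is to prove the displayed identity $M_i(u)=M(S(u,i),P(u,i))$ by induction on $i\in\{1,\dots,\log n+1\}$, for all $u\in V$ simultaneously, and then obtain the final claim by substituting $i=\log n+1$. The base case $i=1$ is immediate from the definitions: $S(u,1)=\{u\}$ and $P(u,1)=V$, so $M(S(u,1),P(u,1))=M(\{u\},V)=M_1(u)$, which is exactly what the initialization step sets. A useful bookkeeping fact, which I would record first, is that $|M_i(u)|=|S(u,i)|\cdot|P(u,i)|=2^{i-1}\cdot 2^{\log n-i+1}=n$ for every $u$ and every $i$, so that at every iteration ``lower half'' and ``upper half'' each mean exactly $n/2$ messages.

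For the inductive step, fix $u$, write $b^\ast=\ID(u)[i]$, and let $u'=\Flip(u,i,1-b^\ast)$ be the unique node agreeing with $\ID(u)$ on every bit except the $i^{th}$; then $\{u,u'\}$ is one of the pairs matched in iteration $i$. Two structural observations drive the argument. First, $u$ and $u'$ agree on bits $1,\dots,i-1$, hence $P(u,i)=P(u',i)$, while they differ precisely on bit $i$, which lies among the last $\log n-i+1$ positions, hence $S(u,i)$ and $S(u',i)$ are disjoint with $S(u,i)\cup S(u',i)=S(u,i+1)$. Second, tracing the communication step: node $u$ receives one half-set from itself and one from $u'$, and a short case check on $b^\ast$ shows $M_{i+1}(u)=M^{\beta}_i(u)\cup M^{\beta}_i(u')$ where $\beta={-}$ if $b^\ast=0$ and $\beta={+}$ if $b^\ast=1$. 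It then remains to identify $M^{-}_i(v)$ and $M^{+}_i(v)$ for $v\in\{u,u'\}$. By the inductive hypothesis $M_i(v)$ consists of the $n$ messages whose target lies in $P(v,i)$, with all $2^{i-1}$ sources in $S(v,i)$ for each target; since the target-source ordering sorts by target first and every target contributes the \emph{same} number $|S(v,i)|$ of messages, the rank cutoff at $n/2$ falls exactly at a target boundary, so $M^{-}_i(v)$ is the set of messages whose target lies in the lower half of $P(v,i)$ in ID order. Because all of $P(v,i)$ agrees on bits $1,\dots,i-1$, that lower half is precisely $\{w\in P(v,i):\ID(w)[i]=0\}$, giving $M^{-}_i(v)=M(S(v,i),\{w\in P(v,i):\ID(w)[i]=0\})$ and symmetrically for $M^{+}_i(v)$. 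Taking $v\in\{u,u'\}$: since $P(u,i)=P(u',i)$ and $b^\ast=\ID(u)[i]$, both relevant target sets equal $\{w: w\text{ agrees with }u\text{ on bits }1,\dots,i\}=P(u,i+1)$, so
\[
M_{i+1}(u)=M\!\left(S(u,i),P(u,i+1)\right)\cup M\!\left(S(u',i),P(u,i+1)\right)=M\!\left(S(u,i)\cup S(u',i),\,P(u,i+1)\right)=M\!\left(S(u,i+1),P(u,i+1)\right),
\]
closing the induction.

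Finally, substituting $i=\log n+1$ yields $S(u,\log n+1)=V$ and $P(u,\log n+1)=\{u\}$, hence $M_{\log n+1}(u)=M(V,\{u\})$, which is what each node outputs. To make the statement fully complete I would also check two side points: (i) \emph{routing is applicable and cheap} --- in each iteration every node is the source of exactly two super-messages $M^{\circ-}_i(v),M^{\circ+}_i(v)$ of at most $n/2$ bits each and the target of exactly two such super-messages, and the message identifiers are determined purely by the structure (the invariant that each node knows $\ID(M_i(w))=\ID(M(S(w,i),P(w,i)))$ for every $w$), so all target lists are known to all nodes; hence \Cref{thm:generalized_routing_main} with $k=2$, $\lambda=\Theta(n)$, $B=1$ implements each iteration in $O(1)$ rounds, $O(\log n)$ overall, with the $\alpha$ bound as claimed after the power-of-two reduction of \Cref{lem:change_n_value}; and (ii) \emph{the identifier invariant is preserved}, since $M_{i+1}(w)$ is again of the form $M(S(w,i+1),P(w,i+1))$, whose identifier set is locally computable. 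I expect the only genuinely delicate point to be the step identifying the lower half of $M_i(v)$ with $M(S(v,i),\{w\in P(v,i):\ID(w)[i]=0\})$: it relies on $|M_i(v)|=n$ being a power of two split evenly, on each target carrying the same number $|S(v,i)|$ of messages, and on bit $i$ being the most significant bit that varies within $P(v,i)$ --- in other words, pinning down the direction of $\Flip$ and the correspondence ``lower half $\leftrightarrow$ bit $i=0$'' is exactly where an off-by-one or sign slip would hide. The remaining manipulations are routine.
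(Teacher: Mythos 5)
Your proof is correct and follows essentially the same inductive argument as the paper, with only a cosmetic shift in indexing ($i\to i+1$ versus the paper's $i-1\to i$) and a slightly more explicit justification of why the rank cutoff at $n/2$ falls on a target boundary. The extra side checks on routing applicability and the identifier invariant are welcome but are handled separately in the paper by \Cref{lem:det_high_transmission} rather than inside the proof of this lemma.
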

\begin{proof}
    We prove the claim by induction on $i$, for $1 \leq i \leq \log{n}+1$.
    
\noindent    \textbf{Base Case:} For $i = 1$, $P(i,1)=V$ and $S(u,1)=\{u\}$ and the claim holds. \smallskip
   
    \noindent \textbf{Induction Step:} Assume that the claim holds up to $i-1$, we next show that it holds for $i$. Fix $u \in V$ and assume that $\ID(u)[i-1] = 0$. The proof for the case where $\ID(u)[i-1] = 1$ is symmetric. 
    Let $u'=\Flip(u,i-1,1)$. By the induction assumption for $i-1$, it holds that:
    $$M_{i-1}(u)=M(S(u,{i-1}),P(u,{i-1})) \mbox{~~~and~~~} M_{i-1}(u)=M(S(u',{i-1}),P(u',{i-1}))~.$$

    By the definition of $u'$, we have:
    \begin{itemize}
        \item $\Prefix(u,i-2)=\Prefix(u',i-2)$, hence $P(u,i-1)=P(u',i-1)$,
        \item $\Prefix(u,i-1)=\Prefix(u,i-2)\circ 0$ and $\Prefix(u',i-1)=\Prefix(u,i-2) \circ 1$.
        \item $P(u,i-1)$ is the disjoint union of the two equally size sets $P(u,i)$ and $P(u',i)$.
    \end{itemize}

 Since all the IDs in $P(u,i)$ are strictly smaller than the IDs in $P(u',i)$ and as $|P(u,i)|=|P(u',i)|$, by sorting the messages in $M_{i-1}(u)$ and $M_{i-1}(u')$ based on target-source ordering, we get that:  
 $$M^{-}_{i-1}(u)=M(S(u,{i-1}),P(u,{i}))~~~\mbox{and}~~~~~M^{-}_{i-1}(u')=M(S(u',{i-1}),P(u,{i}))~.$$

Finally, observe that $S(u,i)$ is the disjoint union of $S(u,i-1)$ and $S(u',i-1)$. This holds since 
$\Suffix(u',i)=1 \circ \Suffix(u,i-1)$ and $\Suffix(u,i)=0 \circ \Suffix(u,i-1)$. Altogether, we have that $M_i(u)=M(S(u,i),P(u,{i}))$ as desired. 
\end{proof}

\begin{lemma}
\label{lem:det_high_transmission}
Using Proc. $\ExtMatchingTransmission$, each node may send $M^-_i(u)$ to node $\Flip(u,i,0)$ and $M^+_i(u)$ to node $\Flip(u,i,1)$ in $O(1)$ rounds.
\end{lemma}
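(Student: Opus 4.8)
The plan is to cast this as a single instance of the $\SuperMessagesRouting$ problem with parameters $k = O(1)$ and $\lambda = O(n)$, and invoke \Cref{thm:generalized_routing_main}. First I would observe that in iteration $i$, each node $u$ holds the message set $M_i(u) = M(S(u,i), P(u,i))$ (by \Cref{lem:det_high_noise_char_complete}), whose size is $|S(u,i)| \cdot |P(u,i)| = 2^{i-1} \cdot 2^{\log n - i + 1} = n$. Thus each of $M^-_i(u)$ and $M^+_i(u)$ consists of exactly $n/2$ messages, so $M^{\circ-}_i(u)$ and $M^{\circ+}_i(u)$ are strings of $\Theta(n)$ bits (recall $B=1$ here). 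I would then define, for each node $u$, two input super-messages $\Minput(u) = \{M^{\circ-}_i(u), M^{\circ+}_i(u)\}$, with target lists $\Target(M^{\circ-}_i(u)) = \{\Flip(u,i,0)\}$ and $\Target(M^{\circ+}_i(u)) = \{\Flip(u,i,1)\}$. Hence $k=2$.

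Next I would verify the hypotheses of \Cref{thm:generalized_routing_main}: (i) each node sends at most $k=2$ super-messages; (ii) each node is the target of at most $k=2$ super-messages — indeed, the map $u \mapsto \Flip(u,i,0)$ is a bijection on $V$ (it's an involution composed with fixing the $i$-th bit to $0$, but more simply: $\Flip(u,i,0)$ ranges over exactly the nodes with $i$-th ID bit $0$, each hit by the two preimages differing only in bit $i$; likewise $\Flip(\cdot,i,1)$ ranges over nodes with $i$-th bit $1$), so each node $w$ receives exactly one super-message via the ``$-$'' family (from the unique $u$ with $\Flip(u,i,0)=w$, which exists iff $\ID(w)[i]=0$) and one via the ``$+$'' family — in total at most $2$; (iii) the target lists are globally known, since $\Flip$ depends only on IDs and the iteration index $i$, both known to all nodes; (iv) the super-message size is $\lambda = \Theta(n) = \Theta(nB/k)$, matching the $O(1)$-round regime of the theorem. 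With $\alpha \le 1/(8\cdot 10^4)$ (as guaranteed after the reduction of \Cref{lem:change_n_value}), \Cref{thm:generalized_routing_main} yields an $O(k\lambda/(Bn)) = O(1)$-round protocol. Each node $v$ then receives two super-messages $Q_{i,1}(v), Q_{i,2}(v)$, together with their super-message identifiers, from which it recovers the underlying message identifiers (each $M^{\circ\pm}_i(u)$ is a concatenation of messages in target-source order, and $v$ can reconstruct the ID list from the identifier $(u, \pm)$ and the known combinatorial structure of $S(u,i), P(u,i)$), establishing the parenthetical remark in the algorithm description.

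The only mild subtlety — not really an obstacle — is the bookkeeping in point (ii): one must check that no node is the target of more than $k$ super-messages even in the degenerate rounds, and that a node being its own $\Flip$-image in one coordinate (when $\ID(v)[i]$ already equals the flipped bit) does not cause it to receive more than two super-messages; this follows because for fixed $i$ the two families partition according to the $i$-th ID bit of the recipient, so each recipient lies in exactly one family's image and receives exactly one super-message from it, for a total of one, well within the budget of two. I would also note in passing that the ``$\Omega(n)$''-bit super-messages are handled by splitting into $O(1)$ instances of size $cn/k$ each as in the proof of \Cref{thm:generalized_routing_main}, so no additional argument is needed. This completes the proof.
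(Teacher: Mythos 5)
Your approach is the same as the paper's: cast iteration $i$ as a single $\SuperMessagesRouting$ instance with $k=2$ super-messages per node, each of $\Theta(n)$ bits, and invoke \Cref{thm:generalized_routing_main}. The size computation ($|M_i(u)|=|S(u,i)|\cdot|P(u,i)|=n$, so halves of $n/2$ bits each at $B=1$), the choice of $\Minput(u)$ and target lists, and the ``globally known targets'' observation all match.

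However, your verification of condition (ii) is internally inconsistent. The map $u\mapsto\Flip(u,i,0)$ is \emph{not} a bijection on $V$: it is a $2$-to-$1$ surjection onto the half of $V$ whose $i$-th ID bit is $0$ (as your own parenthetical ``each hit by the two preimages'' says). Consequently, a node $w$ with $\ID(w)[i]=0$ is the target of \emph{two} super-messages from the ``$-$'' family — namely $M^{\circ-}_i(w)$ and $M^{\circ-}_i(\Flip(w,i,1))$ — and of \emph{zero} from the ``$+$'' family, not ``exactly one from each'' as you first assert, nor ``a total of one'' as you later assert in the final paragraph. Both of those statements are wrong; you happen to land on a number $\le 2$ in each case, so the invocation of \Cref{thm:generalized_routing_main} still goes through, but neither count is the actual one. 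The paper sidesteps this by writing $\Moutput(u)$ down explicitly as the two-element set $\{M^{\circ-}_i(u),\,M^{\circ-}_i(\Flip(u,i,1))\}$ when $\ID(u)[i]=0$ (and the symmetric ``$+$'' pair otherwise), which is the cleaner way to see that each node is the source and target of exactly two super-messages. I'd recommend replacing the ``bijection''/``unique $u$''/``one from each family'' language with this explicit description, both for correctness and because it is what you actually need to feed into $\ExtMatchingTransmission$.
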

\begin{proof}
By Lemma \ref{lem:det_high_noise_char_complete}, the total number of messages in $M_i(u)$ is $O(n)$. To see this note that $|S(u,i)|=2^{i-1}$ and $|P(u,i)|=2^{\log n-i+1}$, hence 
$M_i(u)$ has a total of $|S(u,i)|\cdot |P(u,i)|=O(n)$ messages, and a total of $O(n)$ bits. 
In step $i$, each node $u$ has two $O(n)$-bit input messages $\Minput(u) = \{M^{\circ-}_i(u),M^{\circ+}_i(u)\}$. Moreover, $u$ is a target of $O(n)$-bit input messages as well. Assuming that $\ID(u)[i]=0$ (i.e., $u=\Flip(u,i,0)$), we have $\Moutput(u)=\{M^{\circ-}_i(\Flip(v,i,0), M^{\circ-}_i(\Flip(v,i,1))\}$, and we have $\Moutput(u)=\{M^{\circ+}_i(\Flip(v,i,0), M^{\circ+}_i(\Flip(v,i,1))\}$, otherwise. 
Therefore, \Cref{thm:generalized_routing_main} guarantees that Proc. $\ExtMatchingTransmission$ is implemented correctly in $O(1)$ rounds.
\end{proof}

This completes the proof of \Cref{lem:det_mobile_high_noise_main} (i.e. \Cref{thm:informal_adaptive_deterministic_high_noise}).

    \subsection{Deterministic Algorithm for $\alpha=O(1/\sqrt{n})$}\label{sec:det-nonconstant}

In this section, our goal is to solve the $\AllToAllComm$ problem in $O(1)$ rounds for the case that $\alpha = O(1/\sqrt{n})$. By \Cref{lem:change_n_value}, we can assume w.l.o.g. that $\sqrt{n}$ is an integer. A very similar construction is provided in \cite{ABEGH19} (Theorem 1, therein) for the stochastic interactive coding setting. 

\begin{theorem}
    \label{thm:det_mobile_low_noise_main}
    Let $\alpha = O(1/\sqrt{n})$. There is a deterministic protocol solving $\AllToAllComm$ in the $\alpha$-ABD setting in $O(1)$ rounds, for bandwidth $B = 1$.
\end{theorem}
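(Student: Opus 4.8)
The plan is to reduce the entire task to two applications of the super-message routing procedure $\ExtMatchingTransmission$ from \Cref{thm:generalized_routing_main}, exploiting the fact that when $\alpha = O(1/\sqrt{n})$ we may take $k = \Theta(\sqrt{n})$ in that theorem, so each node can safely serve as the source and target of $\Theta(\sqrt{n})$ super-messages each of size $\Theta(\sqrt{n})$ bits, all in $O(1)$ rounds. By \Cref{lem:change_n_value} we assume $\sqrt{n}$ is an integer (losing only a constant factor in $\alpha$). Partition the vertex set $V$ into $\sqrt{n}$ blocks $S_1,\dots,S_{\sqrt n}$ of size $\sqrt{n}$ each, say $S_\ell = \{v_{(\ell-1)\sqrt n + 1},\dots,v_{\ell\sqrt n}\}$; this partition is known to all nodes.

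\textbf{Step 1 (redistribute within source blocks).} First I would have each node learn, for each destination block, the bundle of messages it is responsible for delivering there. Concretely, set up a $\SuperMessagesRouting$ instance with $k = \sqrt n$ and $\lambda = \Theta(\sqrt n)$: each node $u \in S_i$ holds the $\sqrt n$ super-messages $\{M^{\circ}(\{u\}, S_j)\}_{j \in [\sqrt n]}$ (the messages $u$ must send, grouped by destination block), and the target of the super-message $M^{\circ}(\{u\},S_j)$ is declared to be $S_i[j]$, the $j$-th node of $u$'s own block $S_i$. Since all block boundaries are public, every node knows all target lists, as required by \Cref{thm:generalized_routing_main}. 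After this routing, node $S_i[j]$ holds $M(S_i,S_j)$, i.e. all messages from sources in $S_i$ addressed to targets in $S_j$; hence the block $S_i$ collectively holds $M(S_i,V)$. Each node is the source of $\sqrt n$ super-messages and the target of $\sqrt n$ super-messages (one from each of the $\sqrt n$ members of its block, indexed by the same $j$), each of size $O(\sqrt n)$ bits, so this runs in $O(1)$ rounds.

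\textbf{Step 2 (deliver to final targets).} Now every target node $v$ needs $M(V,\{v\})$, which after Step 1 is split into the $\sqrt n$ super-messages $M(S_1,\{v\}),\dots,M(S_{\sqrt n},\{v\})$, with $M(S_\ell,\{v\})$ held by node $S_\ell[p]$ where $p$ is the index of $v$ within its own block. So I would run a second $\SuperMessagesRouting$ instance with $k = \sqrt n$, $\lambda = \Theta(\sqrt n)$: node $S_\ell[p]$ is the source of the $\sqrt n$ super-messages $\{M^{\circ}(S_\ell,\{v\})\}$ over all $v$ in block $p$'s position-class, each with target $v$; then each $v$ is the target of exactly $\sqrt n$ super-messages, one from each block $S_\ell$, and the source of $\sqrt n$ super-messages. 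Again the target lists are publicly determined by the block structure, so \Cref{thm:generalized_routing_main} applies and this is $O(1)$ rounds. Node $v$ concatenates the received $M(S_1,\{v\}),\dots,M(S_{\sqrt n},\{v\})$ to obtain $M(V,\{v\})$ and outputs $m'_{u,v}$ as the unique received message with identifier $\ID(u,v)$.

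\textbf{Main obstacle.} The only real points requiring care are bookkeeping ones: verifying that in each of the two phases every node is the source and target of at most $k = O(\min(1/\alpha, n/\log n)) = O(\sqrt n)$ super-messages of size $O(n/k) = O(\sqrt n)$ bits — which forces the precise constant in $\alpha = O(1/\sqrt n)$, namely $\alpha$ must be small enough that $\sqrt n \le \lfloor 1/(8\cdot 10^4\alpha)\rfloor$ — and that the target lists really are a deterministic function of the (publicly known) node IDs, so the "all nodes know all target sets" hypothesis of \Cref{thm:generalized_routing_main} is met. Correctness and the $O(1)$ round bound then follow immediately from \Cref{thm:generalized_routing_main} (with bandwidth $B=1$), since all the adversary's corruptions in each routing call are already absorbed by the guarantees of that theorem. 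No new ideas beyond the routing primitive are needed; this is essentially the scheme of \cite{ABEGH19} transported to the $\alpha$-ABD model.
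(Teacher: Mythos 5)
Your proposal is correct and matches the paper's proof exactly: partition $V$ into $\sqrt{n}$ blocks of size $\sqrt{n}$, apply $\ExtMatchingTransmission$ once to concentrate $M(S_i,S_j)$ at node $S_i[j]$, then apply it a second time to deliver $M(S_\ell,\{v\})$ to each target $v$, with $k=\lambda=\Theta(\sqrt n)$ in both calls so \Cref{thm:generalized_routing_main} gives $O(1)$ rounds. One small wording slip: after Step 1, $M(S_\ell,\{v\})$ is held by $S_\ell[p]$ where $p$ is the index of the \emph{block containing $v$} (i.e.\ $v\in S_p$), not the position of $v$ within its own block; the routing instance you describe is nonetheless set up consistently with the correct reading, so this does not affect the argument.
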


\smallskip
\noindent \textbf{Algorithm Description.}

Initially, each node $u$ holds the set of messages $\{m_{u,v}\}_{v \in V}$. Let $S_1,\ldots,S_{\sqrt{n}}$ be a partition of $V$ into consecutive segments of size $\sqrt{n}$, i.e. $S_1 = \{v_1,\ldots,v_{\sqrt{n}}\}, S_2 = \{v_{\sqrt{n}+1},\ldots,v_{2\sqrt{n}}\}, \ldots S_{\sqrt{n}} = \{v_{n-\sqrt{n}+1},\ldots,v_n\}$. Let $S_i[j]$ be the $j^{th}$ node in $S_i$, i.e. $S_i[j] = v_{(i-1) \cdot \sqrt{n}+j}$. We next describe our algorithm, which consists of two steps:

\paragraph*{Step 1:}
In this step, for each $i,j \in [\sqrt{n}]$, each node $S_i[j]$ learns the set of messages $M(S_i,S_j)$. This is done using $\ExtMatchingTransmission$ (See \Cref{lem:det_low_noise_step_1_route} for details).

\paragraph*{Step 2:} In this step, each node $v$ learns the set $M(S_i,\{v\})$ for all $i \in \sqrt{n}$. This is done using $\ExtMatchingTransmission$ (See \Cref{lem:det_low_noise_step_2_route} for details).

\paragraph*{Output:} In the previous step, each node $v$ learned $M(S_i,\{v\})$ for all $i \in [\sqrt{n}]$. We note that $M^{\circ}(V,\{v\}) = M^{\circ}(S_1,\{v\}) \circ \ldots \circ M^{\circ}(S_{\sqrt{n}},\{v\})$. Node $v$ outputs $M(V,\{v\})$, i.e. it outputs $m'_{u,v} = m_{u,v}$ for all $u \in V$. This concludes the description of the protocol. For a pictorial example of the algorithm for $n=9$, see \Cref{fig:det_low_noise}.

\begin{figure}
    \centering
    \includegraphics[width=0.8\textwidth]{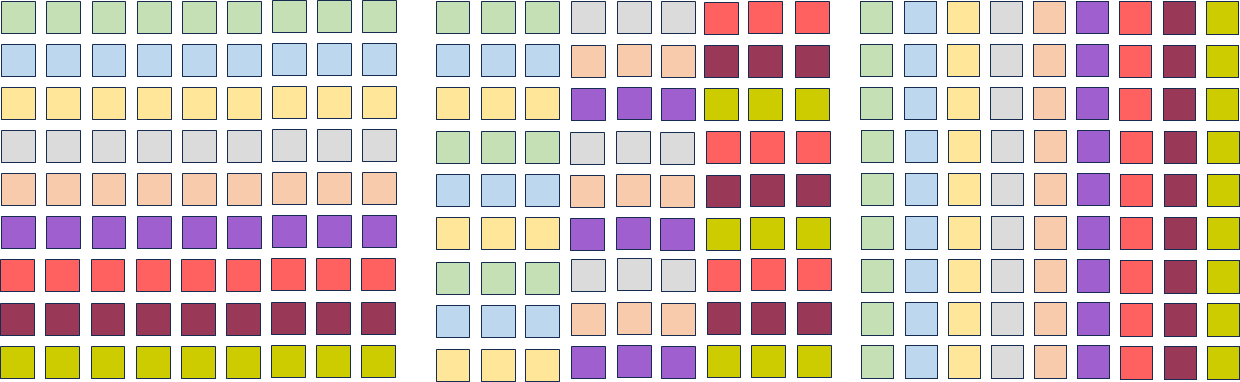}
    \caption{An illustration of the algorithm of \Cref{thm:det_mobile_low_noise_main} for $n=9$. Each matrix corresponds to the set of messages the nodes hold each of the three steps, where each column $i$ corresponds to the set of messages node $v_i$ holds at that step. Different colors of rectangles correspond to different targets. Initially each node has a message for each node. After Step 1, each of the three segments $S_i \in S_1,S_2,S_3$ collectively hold the set of messages $M(V,S_i)$. Finally, after Step 2, all nodes hold their target messages.}
    \label{fig:det_low_noise}
\end{figure}

\smallskip
\noindent \textbf{Analysis.}

\begin{lemma}
    \label{lem:det_low_noise_step_1_route}
    In Step $1$, each node $S_i[j]$ can learn $M(S_i,S_j)$ using $\ExtMatchingTransmission$ in $O(1)$ rounds.
\end{lemma}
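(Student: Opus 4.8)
The plan is to realize Step~1 as a single instance of $\SuperMessagesRouting$ and invoke \Cref{thm:generalized_routing_main}. I would use parameters $k=\sqrt n$ and $\lambda=\sqrt n$. For each node $u$ and each $j\in[\sqrt n]$, define the super-message $M^{\circ}(\{u\},S_j)$, i.e.\ the concatenation (in ID order) of the $\sqrt n$ input messages $\{m_{u,v}\}_{v\in S_j}$; since $B=1$, this is exactly $\lambda=\sqrt n$ bits. Thus each node $u$ is the source of exactly $k=\sqrt n$ super-messages, $\Minput(u)=\{M^{\circ}(\{u\},S_j)\}_{j\in[\sqrt n]}$. For $u\in S_i$, declare the (single) target of $M^{\circ}(\{u\},S_j)$ to be the node $S_i[j]$. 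Then a node $w=S_i[j]$ is the target of precisely the super-messages $\{M^{\circ}(\{u\},S_j)\}_{u\in S_i}$, which is $|S_i|=\sqrt n=k$ super-messages, so $|\Moutput(S_i[j])|\le k$ as required. All target lists are determined solely by the fixed partition $S_1,\dots,S_{\sqrt n}$ of $V$ into consecutive segments, which is known to every node, so the global-knowledge precondition of the routing problem is met.

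Next I would verify the hypotheses of \Cref{thm:generalized_routing_main}. Since $\alpha=O(1/\sqrt n)$, by taking the hidden constant small enough we have $\alpha\le 1/(8\cdot 10^4)$ and $k=\sqrt n\le\min(\lfloor 1/(8\cdot 10^4\alpha)\rfloor,\hat c\, n/\log n)$ for all sufficiently large $n$ (the first bound is exactly where the $O(1/\sqrt n)$ constant must be chosen sufficiently small relative to the $1/(8\cdot 10^4)$ constant of the routing theorem; the $n/\log n$ bound is immediate). Applying \Cref{thm:generalized_routing_main} with bandwidth $B=1$, this $\SuperMessagesRouting$ instance is solved deterministically in $O(k\lambda/(Bn))=O(\sqrt n\cdot\sqrt n/n)=O(1)$ rounds in the $\alpha$-ABD setting.

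Finally, once the routing terminates, node $S_i[j]$ holds every super-message $M^{\circ}(\{u\},S_j)$ for $u\in S_i$, together with its identifier $(u,j)$; as already noted in the paper, a node using $\ExtMatchingTransmission$ learns the identifier of each super-message it receives, and hence can recover both the content and the associated message-ID $\ID(u,v)$ of each $m_{u,v}$ with $u\in S_i,\ v\in S_j$. Concatenating and reordering these gives exactly $M(S_i,S_j)$, proving the lemma. I do not expect a genuine obstacle here: the only point requiring care is fixing the constant in $\alpha=O(1/\sqrt n)$ so that $k=\sqrt n$ is a legal choice of routing parameter (and, if one first wants $\sqrt n$ to be an integer via \Cref{lem:change_n_value}, to absorb that additional constant loss as well); everything else is bookkeeping on super-message sizes and counts.
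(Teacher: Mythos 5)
Your proposal is correct and matches the paper's proof essentially verbatim: the same choice of $k=\sqrt n$ super-messages of size $O(\sqrt n)$ per node, the same $\Minput(u)=\{M^{\circ}(\{u\},S_j)\}_{j\in[\sqrt n]}$ and $\Moutput(S_i[j])=\{M^{\circ}(\{u\},S_j)\}_{u\in S_i}$, and the same invocation of \Cref{thm:generalized_routing_main}. The only difference is that you spell out the parameter verification ($k\leq\min(\lfloor1/(8\cdot10^4\alpha)\rfloor,\hat c\,n/\log n)$) in more detail, which the paper leaves implicit.
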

\begin{proof}
    We use $\ExtMatchingTransmission$ where each node is the source and target of $O(\sqrt{n})$ super messages of size $O(\sqrt{n})$ bits.
    
    For each node $v$ we set $\Minput(v) = \{M^{\circ}(\{v\},S_j)\}_{j \in [\sqrt{n}]},$ and for each node $S_i[j]$, set 
    $\Moutput(S_i[j]) = \{M^{\circ}(\{v\},S_j)\}_{v \in S_i}.$ In particular node $S_i[j]$ learns the sets of messages $M(\{S_i[1]\},S_j),\ldots,M(\{S_i[\sqrt{n}]\},S_j)$, and can compute $M(S_i,S_j)$. Each node $v$ has $|\Minput(v)|,|\Moutput(v)| \leq \sqrt{n}$, and each super-message of size $O(\sqrt{n})$ bits. Therefore, \Cref{thm:generalized_routing_main} guarantees that Proc. $\ExtMatchingTransmission$ terminates correctly in $O(1)$ rounds.
\end{proof}

\begin{lemma}
    \label{lem:det_low_noise_step_2_route}
    In Step $2$, each node $v$ can learn $M(S_i,\{v\})$ for all $i \in [\sqrt{n}]$ using $\ExtMatchingTransmission$ in $O(1)$ rounds.
\end{lemma}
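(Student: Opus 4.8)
The plan is to realize Step 2 by a single invocation of $\ExtMatchingTransmission$, mirroring the proof of \Cref{lem:det_low_noise_step_1_route}. Recall that at the end of Step 1, for every $i,j\in[\sqrt n]$ the node $S_i[j]$ holds the entire block $M(S_i,S_j)$, consisting of $|S_i|\cdot|S_j| = n$ one-bit messages. I would view this block, split according to its target node, as $\sqrt n$ super-messages: for each $v\in S_j$, the super-message $M^{\circ}(S_i,\{v\})$ has exactly $|S_i| = \sqrt n$ bits and is destined for the single target $v$. Accordingly I would set up the routing instance with $\Minput(S_i[j]) = \{M^{\circ}(S_i,\{v\})\}_{v\in S_j}$ and $\Target(M^{\circ}(S_i,\{v\})) = \{v\}$.

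Next I would identify the output sets. Fix a node $v$ and let $S_j$ be the unique segment containing it. For every $i\in[\sqrt n]$ the super-message $M^{\circ}(S_i,\{v\})$ is held as an input by the node $S_i[j]$, and its only target is $v$; hence $\Moutput(v) = \{M^{\circ}(S_i,\{v\})\}_{i\in[\sqrt n]}$. Collecting these $\sqrt n$ super-messages is exactly what Step 2 requires $v$ to learn, since $M^{\circ}(V,\{v\}) = M^{\circ}(S_1,\{v\})\circ\cdots\circ M^{\circ}(S_{\sqrt n},\{v\})$. Because the partition $S_1,\dots,S_{\sqrt n}$ is deterministic and common knowledge, every node can locally compute all the target lists, as the model of \Cref{thm:generalized_routing_main} demands.

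Finally I would check the parameters. Each node is the source of at most $\sqrt n$ super-messages and the target of at most $\sqrt n$ super-messages, each of $\lambda = \sqrt n$ bits, so I would take $k = \sqrt n$. Since $\alpha = O(1/\sqrt n)$ with the hidden constant chosen sufficiently small, we have $\sqrt n \le \lfloor 1/(8\cdot 10^4\alpha)\rfloor$, and $\sqrt n \le \hat{c}\, n/\log n$ for all sufficiently large $n$, so $k = O(\min(1/\alpha, n/\log n))$ as required. With $B=1$ we get $\lambda = \sqrt n = \Theta(nB/k)$, and \Cref{thm:generalized_routing_main} gives a running time of $O(k\lambda/(Bn)) = O(1)$ rounds, which completes the argument.

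I do not anticipate a real obstacle here: the only point needing (minor) care is the bookkeeping that each target node's $\sqrt n$ desired blocks are each held in full by a single distinct source and each delivered to a single target, so that both the $k$ and the $\lambda$ budgets of \Cref{thm:generalized_routing_main} are met with room to spare — the same observation that made Step 1 go through.
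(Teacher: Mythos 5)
Your proposal is correct and takes essentially the same approach as the paper: you set $\Minput(S_i[j])=\{M^{\circ}(S_i,\{v\})\}_{v\in S_j}$ (identical to the paper's $\{M^{\circ}(S_i,\{S_j[\ell]\})\}_{\ell\in[\sqrt n]}$) and $\Moutput(v)=\{M^{\circ}(S_i,\{v\})\}_{i\in[\sqrt n]}$, then invoke \Cref{thm:generalized_routing_main} with $k=\lambda=\sqrt n$. Your explicit verification of the constraints $k\le\lfloor 1/(8\cdot 10^4\alpha)\rfloor$ and $k\le\hat c\,n/\log n$ is a minor elaboration the paper leaves implicit.
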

\begin{proof}
    We use $\ExtMatchingTransmission$ where each node is the source and target of $O(\sqrt{n})$ super messages of size $O(\sqrt{n})$ bits.
    
    For each node $S_i[j]$ we set $\Minput(S_i[j]) = \{M^{\circ}(S_i,\{S_j[\ell]\})\}_{\ell \in [\sqrt{n}]}.$ We note that the size of a super-message is the total size of the messages of $M(S_i,\{S_j[\ell]\})$, i.e. $O(\sqrt{n})$ bits. Node $v$ defines $\Moutput(v) = \{M^{\circ}(S_i,\{v\})\}_{i \in [\sqrt{n}]}$, i.e. it learns $M(S_1,\{v\}),\ldots,M(S_{\sqrt{n}},\{v\})$. Indeed, each node $v$ has $|\Minput(v)|,|\Moutput(v)| \leq \sqrt{n}$, and each super-message is of size $O(\sqrt{n})$. Therefore, \Cref{thm:generalized_routing_main} guarantees that Proc. $\ExtMatchingTransmission$ is terminates correctly in $O(1)$ rounds.
\end{proof}

\begin{lemma}
    \label{lem:det_low_noise_correctness}
    At the end of the procedure, each node $v$ outputs $m'_{u,v} = m_{u,v}$ for all $u \in V$.  
\end{lemma}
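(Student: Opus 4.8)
The plan is to chain the correctness guarantees of the two routing steps and then use that $S_1,\dots,S_{\sqrt n}$ partition $V$. First I would invoke \Cref{lem:det_low_noise_step_1_route}: its proof sets up a single $\ExtMatchingTransmission$ instance whose correctness is guaranteed by \Cref{thm:generalized_routing_main} (the parameter constraints --- each node a source and target of $O(\sqrt n)$ super-messages of $O(\sqrt n)$ bits, and $\sqrt n = O(\min(1/\alpha, n/\log n))$ for $\alpha = O(1/\sqrt n)$ --- are exactly what that lemma checks). Hence at the end of Step~1 every node $S_i[j]$ holds the message set $M(S_i,S_j)$ together with all associated identifiers.

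Next I would note that the correct completion of Step~1 is precisely what makes the Step~2 routing instance well-defined: in \Cref{lem:det_low_noise_step_2_route} node $S_i[j]$ uses $\Minput(S_i[j]) = \{M^{\circ}(S_i,\{S_j[\ell]\})\}_{\ell \in [\sqrt n]}$, and each of these super-messages is a sub-collection of $M(S_i,S_j)$, which $S_i[j]$ learned in Step~1. Applying \Cref{lem:det_low_noise_step_2_route} (again a direct consequence of \Cref{thm:generalized_routing_main}), each node $v$ learns $M(S_i,\{v\})$ for every $i \in [\sqrt n]$, along with the identifiers. Since $\{S_i\}_{i \in [\sqrt n]}$ partitions $V$ we have $M^{\circ}(V,\{v\}) = M^{\circ}(S_1,\{v\}) \circ \cdots \circ M^{\circ}(S_{\sqrt n},\{v\})$, so $v$ knows $m_{u,v}$ for every $u \in V$ and outputs $m'_{u,v} = m_{u,v}$, as claimed.

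There is essentially no hard step: the argument is pure bookkeeping. The one place that needs care is the consistency of the identifier/ordering conventions --- we must ensure that the ordering used when $v$ forms $M^{\circ}(\{v\},S_j)$ in Step~1 agrees with the order in which $S_i[j]$ later carves $M(S_i,S_j)$ into the pieces $M^{\circ}(S_i,\{S_j[\ell]\})$ fed into Step~2, and likewise that $v$ reassembles $M^{\circ}(V,\{v\})$ from its $M^{\circ}(S_i,\{v\})$ pieces in increasing order of $i$. Because $\ExtMatchingTransmission$ delivers each super-message together with all of its message identifiers (as recalled just before \Cref{lem:rand_mobile_transmission_first}), every node can re-sort locally, so these conventions impose no additional communication; I would state this explicitly and then conclude.
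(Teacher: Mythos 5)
Your proof is correct and takes essentially the same approach as the paper: invoke the two routing lemmas in sequence and use that the $S_i$ partition $V$ to reassemble $M(V,\{v\})$. The paper's version is terser (it cites only \Cref{lem:det_low_noise_step_2_route} and the partition fact), but your added remarks --- that Step~1's success is what makes Step~2's input super-messages well-defined, and that identifier delivery by $\ExtMatchingTransmission$ handles ordering consistency --- are accurate and implicit in the paper.
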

\begin{proof}
    By \Cref{lem:det_low_noise_step_2_route}, each node $v$ learns the set of messages $M(S_i ,\{v\})$ for all $i \in [\sqrt{n}]$. Since $S$ is a consecutive partition of $V$, then $M^{\circ}(V,\{v\}) = M^{\circ}(S_1,\{v\}) \circ \ldots \circ M^{\circ}(S_{\sqrt{n}},\{v\})$, it follows that the node $v$ learns each message of $M(V,\{v\})$, and outputs $m'_{u,v} = m_{u,v}$.
    
\end{proof}

This completes the proof of \Cref{thm:det_mobile_low_noise_main} (i.e. \Cref{thm:informal_adaptive_deterministic_low_noise}).

\paragraph{Acknowledgment.}
We thank Ran Gelles and Keren Hillel-Censor for discussions on LDCs and their applications in various resilient distributed algorithms. We also thank Ran Gelles for pointing out the relation between \Cref{thm:informal_adaptive_deterministic_low_noise} and \cite{ABEGH19}. Finally, we are very grateful to Orr Meir and Noga Ron-Zewi for advising on the internals of their LDC construction in \cite{KMRS17} (Lemma \ref{lem:LDC}).

    \bibliographystyle{plain}
    \bibliography{bibliography}
    
    \appendix

    \section{Missing Proofs for \Cref{sec:key_routing}}\label{app:routing}

    \subsection*{Derandomization of $(r,\delta)$-Cover Free Sets (Proof of \Cref{lem:cover_free_proof})}
    \label{sec:cover_free_lll_derandomization}
    In this section, we show how to locally compute $(r,\delta)$-cover free sets deterministically in an efficient manner, using a deterministic version of the Lovasz Local Lemma (LLL). We remind that the randomized construction is heavily inspired by one of the constructions of \cite{KRS99}. Before describing our construction, we give a brief overview of preliminaries regarding LLL and its algorithmic variants.
    
    \paragraph{Lovasz Local Lemma:}
    The setting of the Lovasz Local Lemma is the following:  
    
    Let $\mathcal{Y} = \{Y_1,\dots,Y_R\}$ be a set of mutually independent random variables receiving values from a universe $\Sigma$. Let $\mathcal{B} = B_1,\dots,B_L$ be a set of events (referred to as $\emph{bad events}$), each depending only on a subset of the random variables $\mathcal{Y}$. Denote by $\mathcal{Y}(B_i)$ the set of variables that event $B_i$ is dependent on. 

    We define the dependency graph $G$ on vertex set $\mathcal{B}$, where two vertices (events) $B_i,B_j$ share an edge if and only if $\mathcal{Y}(B_i) \cap \mathcal{Y}(B_j) \neq \emptyset$. In particular, the neighborhood of $B_i$ is defined as
    $$\mathrm{Neigh}(B_i) = \{B_j \stt \mathcal{Y}(B_i) \cap \mathcal{Y}(B_j) \neq \emptyset\} \setminus \{B_i\}.$$  Denote $d = \max_{B_i \in \mathcal{B}}  |\mathrm{Neigh}(B_i)|$ the maximum degree in the dependency graph. Finally, let $p_{\max} = \max_{B_i \in \mathcal{B}} \Pr(B_i)$ be the maximum probability of a bad event occurring.
    
    The seminal Lovasz Local Lemma \cite{EL74} guarantees us that if $ep_{\max}(d+1) < 1$ there exists an assignment of values to the random variables of $\mathcal{Y}$ such that no bad event $B_1,\dots,B_L$ occurs (where $e = 2.718\ldots$ denotes Euler's constant). We refer to such an assignment as a good assignment henceforth.

    \paragraph{Algorithmic Lovasz Local Lemma:}
    Algorithmically finding a good assignment for the variables of an LLL instance has been extensively studied. For our derandomization, we use the deterministic algorithm of Harris \cite{H23}.

    A partial expectation oracle (PEO) for an LLL instance is a deterministic sequential polynomial time algorithm (in terms of $|\mathcal{Y}|,|\mathcal{B}|,|\Sigma|$), that given a bad event $B_j$ and an assignment $Y_{i_1} = \sigma_1,\dots Y_{i_j}= \sigma_j$ of values to a subset of the variables of $\mathcal{Y}$, computes $\Pr(B_j \mid Y_{i_1} = \sigma_1,\dots Y_{i_j}= \sigma_j),$
    the probability of $B_j$ occurring conditioned on the partial assignment.
    
    The deterministic algorithm of Harris \cite{H23} finds a good assignment deterministically in polynomial sequential time, assuming a slightly stronger condition holds, and the existence of a PEO:

    \begin{lemma}[Deterministic Lovasz Local Lemma, \cite{H23} Corollary 1.2 (c.f. \cite{H23} Theorem 3.7(a))]
         If there exists a constant $\epsilon > 0$ such that $e p_{\mathrm{max}} d^{1+\epsilon} < 1$ and the bad events have a PEO, then there is a $\poly(|\mathcal{B}|,|\mathcal{Y}|,|\Sigma|)$-time deterministic sequential algorithm which finds an assignment of values to the random variables of $\mathcal{Y}$ such that no bad event $B_1,\dots,B_L$ occurs.
    \end{lemma}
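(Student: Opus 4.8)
For context we recall the strategy behind this statement; the full argument is in \cite{H23}, and the paper uses the lemma only as a black box. The engine is the Moser--Tardos resampling algorithm: draw each $Y_i\in\mathcal{Y}$ from its distribution, and while some bad event $B_j$ holds on the current assignment, choose such a $B_j$ and resample every variable in $\mathcal{Y}(B_j)$. The classical witness-tree analysis shows that, already under $ep_{\mathrm{max}}(d+1)<1$, the expected number of resampling steps is $\poly(|\mathcal{B}|)$, so the algorithm almost surely halts at an assignment on which no bad event holds. The plan is to derandomize this by the method of conditional expectations: truncate the algorithm to a fixed horizon of $T=\poly(|\mathcal{B}|,|\mathcal{Y}|,|\Sigma|)$ resampling steps, and fix, in a fixed order, the $O(T|\mathcal{Y}|)$ variable values that the truncated algorithm may draw, always picking a value that does not increase a suitable potential $\Phi$ evaluated on the draws fixed so far.

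The potential $\Phi$ is the expected number of \emph{witness trees} of size at least $s$ that are consistent with the currently fixed draws, for a threshold $s=\Theta\!\big(\log|\mathcal{B}|/(\epsilon\log d)\big)$. Unrolling the union bound, $\Phi$ equals a sum, over labelled trees $\tau$ with $|\tau|\ge s$, of the product over the nodes of $\tau$ of the conditional probabilities of the bad events labelling those nodes, given the fixed draws. Two facts make this work. First, each factor in each term is exactly a quantity the PEO evaluates, and the whole sum can be reorganized as a bounded-depth recursion (a truncated generating function) over tree shapes, so $\Phi$ is computable in $\poly(|\mathcal{B}|,|\mathcal{Y}|,|\Sigma|)$ time. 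Second, a witness tree of size $s'$ has at most $d^{s'}$ shapes and contributes at most $p_{\mathrm{max}}^{s'}$ in expectation, so under the slack hypothesis $ep_{\mathrm{max}}d^{1+\epsilon}<1$ the full sum over $|\tau|\ge s$ and over all $\le|\mathcal{B}|$ possible roots is at most $|\mathcal{B}|\sum_{s'\ge s}(d\,p_{\mathrm{max}})^{s'}\le|\mathcal{B}|\sum_{s'\ge s}(e^{-1}d^{-\epsilon})^{s'}<1$ by the choice of $s$ --- the $\epsilon$-slack is precisely what lets the geometric decay beat both the branching factor $d$ of witness trees and the polynomial number of roots, which the plain Moser--Tardos bound (of order $|\mathcal{B}|$) does not do.

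It remains to check that $\Phi$ behaves as a pessimistic estimator: (i) averaging over the next variable's value keeps $\Phi$ from increasing in expectation, so the PEO identifies a value that does not increase it; and (ii) once all draws are fixed, $\Phi<1$ forces $\Phi=0$, i.e.\ no witness tree of size $\ge s$ is realized by the now-deterministic execution --- but a standard argument shows that an execution of length $\ge T$ forces a realized tree of size $\ge s$ once $T$ is large enough, so the deterministic execution halts within $T$ steps at an assignment avoiding $B_1,\dots,B_L$, which we output. I expect the main obstacle to be exactly the interface of (i) and (ii): setting up the witness-tree formalism so that long executions certifiably force large realized trees, so that tree-consistency factorizes through single-bad-event conditional probabilities (so the PEO suffices), and so that the resulting sum telescopes into a polynomial-time recursion. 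This packaging, under the mild hypothesis $ep_{\mathrm{max}}d^{1+\epsilon}<1$ and the existence of a PEO, is the content of \cite{H23}, to which we defer.
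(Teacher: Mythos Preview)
Your proposal is appropriate: the paper does not prove this lemma at all---it is stated as a black-box citation of Harris~\cite{H23} (Corollary~1.2), with no accompanying argument. You correctly identify this and provide a reasonable high-level sketch of the Moser--Tardos derandomization via witness-tree pessimistic estimators that underlies Harris's result, while ultimately deferring to~\cite{H23} just as the paper does.
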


    Finally, we give a brief overview of a well-studied distribution known as the poisson binomial distribution, which is needed for our proof. 
    \begin{definition}
        A random variable $S$ is said to have a poisson binomial distribution if there exist $I_1,\dots,I_\ell \in \{0,1\}$ mutually independent Bernoulli random variables, for some integer $\ell$, such that $S = \sum_{i=1}^\ell I_\ell$.   
    \end{definition}

    We remark that the variables $I_1,\dots,I_\ell$ need not have the same distribution, i.e. $\Pr(I_j = 1) = p_j$ for some $p_j \in [0,1]$.  

    \begin{lemma}[Immediate corollary from \cite{WS73} Lemma 1]
    \label{lem:poisson_binomial_alg}
        For a poisson binomial variable $S = \sum_{i=1}^\ell I_\ell$ and any integer $x$, computing $\Pr(S = x)$  can be performed sequentially deterministically in polynomial time in the encoding lengths of the variable probabilities $p_1,\dots,p_\ell$.\footnote{For the purposes of this section, the encoding length of a non-integer rational value is the encoding length of the numerator times the encoding length of the denominator, where we assume their minimality.}
    \end{lemma}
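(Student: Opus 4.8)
The plan is to compute $\Pr(S = x)$ directly from the probability generating function of $S$ via a standard polynomial-multiplication dynamic program. Write each $p_i = a_i/b_i$ in lowest terms, and recall that for independent Bernoulli variables $I_1,\dots,I_\ell$ with $\Pr(I_i=1)=p_i$ we have
\[
\mathbb{E}\big[z^{S}\big] \;=\; \prod_{i=1}^{\ell} \big( (1-p_i) + p_i z \big),
\]
so that $\Pr(S = x)$ is exactly the coefficient of $z^x$ in this degree-$\ell$ polynomial. If $x<0$ or $x>\ell$ the answer is $0$ and we are done, so assume $0 \le x \le \ell$.

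First I would maintain, for $k=0,1,\dots,\ell$, the coefficient vector $(q^{(k)}_0,\dots,q^{(k)}_k)$ of the partial product $\prod_{i=1}^{k}\big((1-p_i)+p_i z\big)$, starting from $q^{(0)}_0 = 1$. The step from $k$ to $k+1$ is the single-factor update $q^{(k+1)}_j = (1-p_{k+1})\,q^{(k)}_j + p_{k+1}\,q^{(k)}_{j-1}$ for each $j$ (with $q^{(k)}_{-1}=q^{(k)}_{k+1}=0$), which costs $O(k)$ rational additions and multiplications; summing over $k$ gives $O(\ell^2)$ arithmetic operations, and at the end we output $q^{(\ell)}_x$.

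The only point needing care is keeping the bit-lengths of the intermediate rationals polynomial in the input encoding lengths. For this I would store every coefficient vector over the common denominator $D_k := \prod_{i=1}^{k} b_i$, i.e. $q^{(k)}_j = \hat q^{(k)}_j / D_k$ with $\hat q^{(k)}_j \in \mathbb{Z}_{\ge 0}$; the update then reads $\hat q^{(k+1)}_j = (b_{k+1}-a_{k+1})\,\hat q^{(k)}_j + a_{k+1}\,\hat q^{(k)}_{j-1}$ and involves only integers. Since evaluating the partial product at $z=1$ gives $\sum_j q^{(k)}_j = 1$, we have $\sum_j \hat q^{(k)}_j = D_k$, so each non-negative integer $\hat q^{(k)}_j$ is at most $D_k$ and hence has bit-length at most $\sum_{i=1}^{k}\log_2 b_i$, which is linear in the total input encoding length; thus every intermediate integer, and the time to add/multiply it, is polynomially bounded. (Alternatively one could reduce each fraction by a $\gcd$ computation at every step, which is likewise polynomial.) This yields a deterministic $\poly$-time algorithm, as claimed in \cite{WS73}.

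I expect the bit-complexity bookkeeping in the last paragraph to be the only genuine obstacle; the dynamic program itself, and its $O(\ell^2)$ operation count, are entirely routine.
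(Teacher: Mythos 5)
Your proof is correct. Note first that the paper itself does not prove this lemma: it is stated as an immediate corollary of \cite{WS73}, so you are filling in a citation rather than matching an in-paper argument. Your argument is, however, precisely the standard one underlying that citation: the recursion $q^{(k+1)}_j = (1-p_{k+1})\,q^{(k)}_j + p_{k+1}\,q^{(k)}_{j-1}$ is the classical dynamic program for the Poisson binomial mass function, and you correctly identify that the only non-routine point is controlling bit complexity. Your clearing of denominators is clean: the integer update $\hat q^{(k+1)}_j = (b_{k+1}-a_{k+1})\hat q^{(k)}_j + a_{k+1}\hat q^{(k)}_{j-1}$ preserves integrality and non-negativity, and the normalization $\sum_j \hat q^{(k)}_j = D_k$ (from evaluating the partial product at $z=1$) gives a bit-length bound of $\sum_{i\le k}\log_2 b_i$ for every intermediate integer, so all $O(\ell^2)$ integer operations run in time polynomial in the input size. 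One small remark: the paper's footnote defines the encoding length of $p_i=a_i/b_i$ as the \emph{product} of the encoding lengths of $a_i$ and $b_i$; under that somewhat unusual convention the bound $\sum_i \log b_i$ is still dominated by the total input encoding length provided each $a_i\ge 2$, and the degenerate case $a_i\in\{0,1\}$ is harmless since then $p_i\in\{0,1/b_i\}$ and $b_i$ must still be written down, so the convention should be read as ``polynomial in the bit-lengths of all numerators and denominators.'' With that reading your bound is exactly what is needed, and the proof is complete.
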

    
    We are now ready to prove \Cref{lem:cover_free_proof}. We split the proof into three parts: in the first part, we define an LLL instance where a bad event is synonymous to $|A_{i_0} \setminus \cup_{1 \leq j \leq r} A_{i_j}| < (1-\delta)|A_{i_0}|$ for some $\{i_0,\dots,i_r\} \in H$. Then we show that the LLL criteria $ep_{\max}d^{1.5} < 1$ holds. Finally, we show a partial expectation oracle (PEO) for this instance.

    \paragraph*{Defining the Random Variables and Bad Events.}
        
    Let $\gamma = \delta/4$. Therefore, $L = \lfloor N\gamma/(r+1) \rfloor$. Partition $[N]$ into consecutive sets $S_1,\dots,S_{L}$ of size $\lfloor \frac{r+1}{\gamma} \rfloor$ each (ignoring the remaining elements). Each set $A_1,\dots,A_m$ is an i.i.d. random set which contains exactly one uniformly random element from each of $S_j$, i.e. each element is chosen with probability $1/\lfloor \frac{r+1}{\gamma} \rfloor \leq 2/(\frac{r+1}{\gamma}) \leq \frac{2\gamma}{r+1}$, where the first inequality follows due to the fact that for any $x > 1$, $x \leq 2\lfloor x \rfloor$.  

    We define the random variable $Y_{i,j} \in [N]$ to be the unique element of $A_i \cap S_j$, and set 
    $$\mathcal{Y} = \{Y_{i,j} \mid 0 \leq i \leq r, j \in [L]\}.$$ 
    
    We notice that the random variables of $\mathcal{Y}$ are mutually independent. Next, we define the following LLL instance over the set of random variables $\mathcal{Y}$: For each tuple $(i_0,h) = (i_0,\{i_0,\dots,i_r\}) \in [m] \times H$, define a bad event $B(i_0,h)$ to be the event that $|A_{i_0} \setminus \cup_{1 \leq j \leq r} A_{i_j}| < (1-\delta)|A_{i_0}|$. We notice that there are at $(r+1)|H|$ bad events - one for each tuple $h \in H$, and choice of $i_0 \in h$.

    \paragraph*{Proving the LLL Criteria.} In this part we show that the condition $ep_{\max}d^{1.5} < 1$ holds.
    \begin{lemma}
        We have that $p_{\max} \leq \frac{1}{eN^2|H|^2}$.
    \end{lemma}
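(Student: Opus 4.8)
The plan is to bound the probability of a single bad event $B(i_0,h)$, where $h=\{i_0,i_1,\dots,i_r\}\in H$ and $i_0\in h$, and show it is at most $\frac{1}{eN^2|H|^2}$; since $p_{\max}$ is the maximum of this quantity over all bad events, the claim follows. I rewrite $B(i_0,h)$ as a tail event of a sum of independent indicators. For each block $j\in[L]$, recall $A_{i_0}\cap S_j=\{Y_{i_0,j}\}$ is a single element, which lies in $\bigcup_{k=1}^{r}A_{i_k}$ exactly when $Y_{i_0,j}\in\{Y_{i_1,j},\dots,Y_{i_r,j}\}$; let $I_j$ be the indicator of this event. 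Since $I_j$ is a function of $\{Y_{i,j}:i\in h\}$ only, and these variable sets are pairwise disjoint over $j$, the $I_1,\dots,I_L$ are mutually independent. Moreover $X:=\sum_{j=1}^{L}I_j=|A_{i_0}\cap\bigcup_{k=1}^{r}A_{i_k}|$ (the $Y_{i_0,j}$ are distinct, one per block), so, using $|A_{i_0}|=L$, the bad event $\{|A_{i_0}\setminus\bigcup_{k=1}^{r}A_{i_k}|<(1-\delta)|A_{i_0}|\}$ coincides with $\{X>\delta L\}$.

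Next I bound the mean. Conditioning on the value of $Y_{i_0,j}$, each $Y_{i_k,j}$ (for $k\in[r]$) equals it with probability $1/|S_j|=1/\lfloor(r+1)/\gamma\rfloor\le 2\gamma/(r+1)$, so by a union bound $\Pr(I_j=1)\le r\cdot 2\gamma/(r+1)\le 2\gamma=\delta/2$, and hence $\mu:=E(X)\le\delta L/2$. Then $\Pr(X>\delta L)\le\Pr(X\ge\delta L)$, and a standard multiplicative Chernoff bound (via the moment generating function, optimizing the exponential tilt at $\theta=\ln 2$, and using only $\mu\le\delta L/2$) gives $\Pr(X\ge\delta L)\le e^{-c_3\delta L}$ for an absolute constant $c_3>0$ (one may take $c_3=1/5$).

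Finally I plug in the size of $L$. From the hypothesis $r+1\le c_1\delta N/\log N$ we get $L=\lfloor\delta N/(4(r+1))\rfloor\ge(\log N)/(8c_1)$ for $N$ large, hence $\Pr(B(i_0,h))\le e^{-c_3\delta\log N/(8c_1)}$. Since $\delta$ is a fixed constant and $|H|=O(N^{c_2})$ (so $|H|^2=O(N^{2c_2})$), choosing $c_1$ small enough relative to $c_2$ — exactly the regime ``$c_1$ sufficiently small compared to $c_2$'' assumed in \Cref{lem:cover_free_proof} — makes this at most $N^{-(3+2c_2)}\le\frac{1}{eN^2|H|^2}$ for all sufficiently large $N$; the finitely many remaining values of $N$ pose no obstacle (one may, e.g., enlarge each block $S_j$ by a constant factor). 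As the bound is uniform over all bad events, $p_{\max}\le\frac{1}{eN^2|H|^2}$.

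The only delicate point is bookkeeping the hierarchy of constants: $\delta$ is fixed, $c_2$ is given, and $c_1$ is chosen last and small enough (depending on $\delta$ and $c_2$) to push the Chernoff exponent $c_3\delta L$ above the threshold $(3+2c_2)\log N$ needed to absorb the factors $N^2$, $|H|^2$, the constant $e$, and the implied constant in $|H|=O(N^{c_2})$. This is not a new constraint — it is already part of the assumption of \Cref{lem:cover_free_proof} — so no circularity arises, and the asymptotic $\Pr(I_j=1)\le\delta/2$ estimate (with $r+1\ge1$, $\gamma<1$, and $x\le 2\lfloor x\rfloor$ for $x\ge1$) is the same one used in the randomized construction.
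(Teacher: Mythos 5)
Your proof is correct and follows essentially the same route as the paper's: bound $\Pr(I_j=1)\le\delta/2$ by a union bound over the $r$ other sets, observe that the $I_j$ are independent across blocks, apply a multiplicative Chernoff bound to $X=\sum_j I_j$, and then use the lower bound on $L$ (forced by $r+1\le c_1\delta N/\log N$ with $c_1$ small relative to $c_2$) to push the exponent below $N^{-3}|H|^{-2}\le (eN^2|H|^2)^{-1}$. Your write-up is a bit more explicit than the paper's on two points (why the $I_j$ are mutually independent, and which Chernoff constant is used), but these are refinements of the same argument rather than a different approach.
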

    \begin{proof}
        Fix a bad random event $B(i_0,h) \in \mathcal{B}$, where we denote $h = \{i_0,\dots,i_r\}$. By the union bound, the unique element $a \in A_{i_0} \cap S_j$ is contained in any of $A_{i_1},\dots,A_{i_r}$ with probability at most $r \frac{2\gamma}{r+1} \leq 2\gamma = \delta/2$. Let $I_j(i_0,\dots,i_r)$ be an indicator variable that this occurs. Then for $X(i_0,\dots,i_r) = \sum_{j=1}^{L} I_{j}(i_0,\dots,i_r)$, it holds that $E(X(i_0,\dots,i_r)) \leq (\delta/2) \cdot L$. We note that $I_1(i_0,\dots,i_r),\dots,I_{L}(i_0,\dots,i_r)$ are independent random variables. By choice of $c_2$, we get that $L = \lfloor N\gamma/(r+1) \rfloor \geq c'\log{N}$, for some sufficiently large $c' > 0$, and in particular we assume $c' \geq 36+24c_2$. Therefore by Chernoff, 

        $$\Pr(|A_{i_0} \setminus \bigcup_{j=1}^r A_{i_j}| < (1-\delta)|A_{i_0}|) = \Pr\left(X > \delta|A_{i_0}|\right) \leq e^{-c'\log{N}/12} \leq e^{-(3+2c_2)\log{N}}\leq \frac{1}{N^3|H|^2} \leq \frac{1}{eN^2|H|^2}.$$ 
    \end{proof}
    
    \begin{corollary}
        We have that $ep_{\max}d^{1.5} < 1$.
    \end{corollary}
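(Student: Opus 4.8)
The plan is to combine the probability estimate $p_{\max} \le \frac{1}{eN^2|H|^2}$ from the preceding lemma with an upper bound on the maximum degree $d$ of the dependency graph, and then to verify $e\,p_{\max}\,d^{1.5} < 1$ by a one-line computation. The only genuine step is bounding $d$; everything else is arithmetic.

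First I would bound $d$. Recall that the bad event $B(i_0,h)$ attached to a tuple $h=\{i_0,\dots,i_r\}\in H$ and a choice $i_0\in h$ depends precisely on the variables $\{Y_{i,j}\,:\,i\in h,\ j\in[L]\}$. Hence $B(i_0,h)$ and $B(i_0',h')$ are adjacent in the dependency graph if and only if $h\cap h'\neq\emptyset$. For a fixed $h$, there are at most $|H|$ tuples $h'\in H$ meeting $h$, and each such $h'$ spawns at most $r+1$ bad events (one per choice of $i_0'\in h'$). Therefore $|\mathrm{Neigh}(B(i_0,h))| \le (r+1)|H|$, so $d \le (r+1)|H|$.

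Plugging this in together with $p_{\max}\le\frac{1}{eN^2|H|^2}$ gives
$$ e\, p_{\max}\, d^{1.5} \le \frac{(r+1)^{1.5}\,|H|^{1.5}}{N^2\, |H|^2} = \frac{(r+1)^{1.5}}{N^2\, |H|^{1/2}} \le \frac{(r+1)^{1.5}}{N^2} \le \frac{N^{1.5}}{N^2} = N^{-1/2} < 1, $$
where in the penultimate step we used $|H|\ge 1$, and in the last step we used $r+1 \le c_1\delta N/\log N \le N$ (valid for $N$ above an absolute constant, where $\log N \ge 1$ and $c_1\delta < 1$), and $N\ge 2$. The key point is simply that the $|H|^2$ in the denominator of $p_{\max}$ absorbs the $|H|^{1.5}$ coming from $d^{1.5}$, so only the trivial bound $|H|\ge 1$ is needed — no lower bound on $|H|$.

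The main (and essentially only) obstacle is getting the degree bound right: one must observe that the dependency structure is governed by the intersection pattern of the index tuples in $H$, not by the $N$ ground elements, which is why $d$ scales like $(r+1)|H|$ rather than something polynomial in $N$. Once that is seen, the computation above has considerable slack — in fact the exponent $1.5$ could be replaced by anything strictly below $2$ — which is exactly the room one needs to invoke Harris's deterministic LLL algorithm, whose hypothesis $e\,p_{\max}\,d^{1+\epsilon}<1$ is met here with $\epsilon=1/2$.
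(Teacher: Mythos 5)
Your proof is correct and takes essentially the same route as the paper: bound the maximum degree $d$ by (roughly) the total number of bad events $(r+1)|H|$ and then observe that the $|H|^2$ in the denominator of $p_{\max}$ absorbs the $|H|^{1.5}$ from $d^{1.5}$. The paper simply relaxes $d\le (r+1)|H|\le N|H|$ and plugs in to get $ep_{\max}d^{1.5}\le N^{-1/2}|H|^{-1/2}<1$, whereas you keep the $(r+1)$ factor and only later use $r+1\le N$; the arithmetic is the same up to this immaterial rearrangement.
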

    \begin{proof}
        Since the number of vertices in the dependency graph is $(r+1)|H| \leq N|H|$, in particular the maximum degree is at most $d \leq N|H|$. Combined with the fact that $p_{\max} \leq 1/(eN^2|H|^2)$, it follows that $ep_{\max}d^{1.5} < 1$. 
    \end{proof}
        
        \paragraph*{Showing a Partial Expectation Oracle.}        
        Finally, we show a PEO for this setting. Fix a bad event $B(i_0,h)$ for some $h = \{i_0,\dots,i_r\}$, and some partial assignment $F(\mathcal{Y})$ to the variables of $\mathcal{Y}$. For $j \in [L]$, denote 
        $$\mathrm{Fixed}(j) = \{i \in \{i_1,\dots,i_r\} \mid Y_{i,j} \text{ is assigned a value in } F(\mathcal{Y})\},$$
        $$\mathrm{FixedValues}(j) = \{\sigma \in [N] \mid Y_{i,j} \text{ is assigned the value $\sigma$ in } F(\mathcal{Y}) \text{ for some $i \in \{i_1,\dots,i_r\}$}\}.$$
        
        Recall that we defined $I_j(i_0,\dots,i_r)$ as the indicator (i.e. Bernoulli) variable that the unique element $a_j \in A_{i_0} \cap S_j$ is contained in any of $A_{i_1},\dots,A_{i_r}$. We start by showing an explicit formula of the probability $\Pr(I_j(i_0,\dots,i_r) = 1 \mid F(\mathcal{Y}))$ for any $j \in [L]$.
        For this, we split into cases:
        \begin{itemize}
            \item If $F(\mathcal{Y})$ assigns $Y_{0,j} = Y_{i,j} = \sigma$ for some $\sigma \in \Sigma$ and $i \in \{1,\dots,r\}$: then $\sigma \in A_{i_j} \cap A_{i_0}$, and in particular,
            $\Pr(I_j(i_0,\dots,i_r) = 1 \mid F(\mathcal{Y})) = 1$.
            \item  If $F(\mathcal{Y})$ assigns $Y_{0,j} = \sigma$ for some $\sigma \in [N]$, and for any $i \in \{1,\dots,r\}$ either it assigns no value, or assigns a value $Y_{i,j} \neq \sigma$. Then
            $$\Pr(I_j(i_0,\dots,i_r) = 1 \mid F(\mathcal{Y})) = \Pr(\sigma \in \bigcup_{j=1}^r A_{i_j} \mid F(\mathcal{Y})) =  1-(1/\lfloor \frac{r+1}{\gamma}\rfloor)^{r-|\mathrm{Fixed}(j)|}.$$
            \item If $F(\mathcal{Y})$ does not assign $Y_{0,j}$, we consider the following equivalent random process: first, we choose $Y_{0,j}$, and only then the rest of $Y_{i,j}$ not assigned by $F(\mathcal{Y})$. If $Y_{0,j}$ chooses a value from $Y_{0,j} \in \mathrm{FixedValues}(j)$, we are exactly in the first case, and if it chooses $Y_{0,j} \notin \mathrm{FixedValues}(j)$, we are exactly in the second case. Hence, the probability is equal to:
            $$\Pr(I_j(i_0,\dots,i_r) = 1 \mid F(\mathcal{Y})) = \frac{|\mathrm{FixedValues}(j)|}{\lfloor \frac{r+1}{\gamma}\rfloor} + (1-\frac{|\mathrm{FixedValues}(j)|}{\lfloor \frac{r+1}{\gamma}\rfloor})(1-(1/\lfloor\frac{r+1}{\gamma}\rfloor)^{r-|\mathrm{Fixed}(j)|}).$$
        \end{itemize}

        Indeed, this probability can be computed exactly in polynomial time, and can be encoded (in a numerator-demoninator encoding) using $\poly(N)$ bits. We remind that $X(i_0,\dots,i_r) = \sum_{j=1}^{L} I_{j}(i_0,\dots,i_r)$, and therefore is the sum of independent Bernoulli variables, i.e.  $X(i_0,\dots,i_r)$ has a poisson-binomial distribution. Hence, the probability that it receives any given value can be computed in polynomial time (\Cref{lem:poisson_binomial_alg}). In particular,
        $$\Pr(X(i_0,\dots,i_r) > \delta |A_{i_0}| \mid F(\mathcal{Y})) = \sum_{\ell = \lfloor \delta \cdot L  \rfloor +1}^{L} \Pr(X(i_0,\dots,i_r) = \ell \mid F(\mathcal{Y}))$$ can be computed in polynomial time, and the claim follows.

    \section{Missing Proofs for  \Cref{sec:adaptive}}\label{app:adaptive}
    
    \APPENDRandomPartition
    \APPENDADAPCORRECT

\end{document}